\newif\ifprocs
\newtheorem{theorem}{Theorem}[section]
\newtheorem{corollary}[theorem]{Corollary}
\newtheorem{lemma}[theorem]{Lemma}
\newtheorem{claim}[theorem]{Claim}
\newtheorem*{theorem*}{Theorem}
\newtheorem*{proposition*}{Proposition}
\newtheorem*{lemma*}{Lemma}
\newtheorem{definition}[theorem]{Definition}
\newtheorem{remark}[theorem]{Remark}
\newcommand{\indic}{\mathds{1}}
\newcommand{\indicator}[1]{\indic_{#1}} 
\def\compactify{\itemsep=0pt \topsep=0pt \partopsep=0pt \parsep=0pt}
\newcommand{\eqdef}{:=}
\newcommand{\Diag}{\mathcal{D}}
\newcommand{\set}[1]{\{#1\}}
\newcommand{\card}[1]{\left|#1\right|}
\newcommand{\ed}{\Delta_e}
\newcommand{\ham}{\Delta_H}
\newcommand{\abs}[1]{{\left|#1\right|}}
\newcommand{\minn}[1]{\min\{{#1}\}}
\newcommand{\maxx}[1]{\max\{{#1}\}}
\newcommand{\zo}{{\set{0,1}}}
\newcommand{\pmt}{{[-t..t]}}
\newcommand{\pmn}{{[-n..n]}}
\newcommand{\close}{\texttt{close}\xspace}
\newcommand{\far}{\texttt{far}\xspace}
\newcommand{\tO}{\tilde O}
\newcommand\sett[2]{\left\{  {#1}:\ {#2} \right\}}
\newcommand{\N}{\mathbb{N}}
\newcommand{\ilast}{i_\textrm{last}} 
\newcommand{\cALG}{c_\textrm{ALG}} 
\newcommand{\cA}{c_\textrm{A}} 
\newcommand{\Dnext}{\Diag_\mathrm{next}}
\newcommand{\ipat}{i_\mathrm{pat}}
\newcommand{\inext}{i_\mathrm{next}}
\newcommand{\dlast}{d_\mathrm{last}}
\theoremstyle{plain}
\newtheorem{proposition}[theorem]{Proposition}
\begin{document}

\ifprocs
\else
\title{Sublinear Algorithms for Gap Edit Distance%
  \thanks{Part of this work was done while the authors were visiting
    the Simons Institute for the Theory of Computing.
  }}
\author{
  Elazar Goldenberg%
  \thanks{The Academic College of Tel Aviv-Yaffo. 
    Email: \texttt{elazargo@mta.ac.il}
  }
  \and
  Robert Krauthgamer%
  \thanks{Weizmann Institute of Science.
    Work partially supported by ONR Award N00014-18-1-2364,
    the Israel Science Foundation grant \#1086/18,
    and a Minerva Foundation grant.
    Email: \texttt{robert.krauthgamer@weizmann.ac.il}
  }
  \and
  Barna Saha%
  \thanks{University of California Berkeley. Work partially supported by a 
  NSF CAREER Award 1652303, and the Alfred P. Sloan Fellowship.
    Email: \texttt{barnas@berkeley.edu}
  }
}
\fi

\maketitle

\begin{abstract}
The edit distance is a way of quantifying how similar two strings are to one another by counting the minimum number of character insertions, deletions, and substitutions required to transform one string into the other.
A simple dynamic programming computes the edit distance between two strings of length $n$ in $O(n^2)$ time,
and a more sophisticated algorithm runs in time $O(n+t^2)$
when the edit distance is $t$ [Landau, Myers and Schmidt, SICOMP 1998]. 
In pursuit of obtaining faster running time, the last couple of decades have seen a flurry of research on approximating edit distance,
including polylogarithmic approximation in near-linear time
[Andoni, Krauthgamer and Onak, FOCS 2010],
and a constant-factor approximation in subquadratic time
[Chakrabarty, Das, Goldenberg, Kouck\'y and Saks, FOCS 2018].

We study sublinear-time algorithms for small edit distance,
which was investigated extensively because of its numerous applications. 
Our main result is an algorithm for distinguishing
whether the edit distance is at most $t$ or at least $t^2$
(the quadratic gap problem) in time $\tilde{O}(\frac{n}{t}+t^3)$.
This time bound is sublinear roughly for all $t$ in $[\omega(1), o(n^{1/3})]$,
which was not known before. 
The best previous algorithms solve this problem in sublinear time
only for $t=\omega(n^{1/3})$ [Andoni and Onak, STOC 2009].

Our algorithm is based on a new approach that adaptively switches
between uniform sampling and reading contiguous blocks of the input strings.
In contrast, all previous algorithms choose which coordinates to query non-adaptively. 
Moreover, it can be extended to solve the $t$ vs $t^{2-\epsilon}$ gap problem
in time $\tilde{O}(\frac{n}{t^{1-\epsilon}}+t^3)$.
\end{abstract}

\ifprocs
\else
\newpage
\fi

\section{Introduction}

The \emph{edit distance} (aka \emph{Levenshtein distance})~\cite{Lev65} is a widely used distance measure between pairs of strings $x,y $ over some alphabet $\Sigma$. It finds applications in several fields like computational biology, pattern recognition, text processing, information retrieval and many more. The edit distance between $x$ and $y$, denoted by  $\ed(x,y)$, is defined as the minimum number of character insertions, deletions, and substitutions needed for converting $x$ into $y$. Due to its immense applicability, the computational problem of computing the edit distance between two given strings $x$ and $y \in \Sigma^n$ is of prime interest to researchers in various domains of computer science. A simple dynamic program solves this problem in time $O(n^2)$.  Moreover, assuming the strong exponential time hypothesis (SETH), there does not exist any truly subquadratic algorithm for  computing the edit distance~\cite{BI15,ABW15,BK15,AHWW16}. 
	
For many applications where the data is very large, a quadratic running time is prohibitive, and it is highly desirable to design faster algorithms, 
even approximate ones that compute a near-optimal solution. 
The last couple of decades have seen exciting developments in this frontier.
In time $\tilde{O}(n^{1+\epsilon})$ for arbitrary $\epsilon >0$, 
it is now possible to approximate the edit distance within factor  $O(\log^{O(\frac{1}{\epsilon})}{n})$ \cite{AKO10}.
This bound was a culmination of earlier results where the approximation bound improved from $O(\sqrt{n})$ in linear time \cite{LMS98}
to $O(n^{3/7})$ and $n^{1/3+o(1)}$ in quasi-linear time \cite{BJKK04,BES06},
to $2^{O(\sqrt{\log{n}\log{\log{n}}})}$
in time $O(n 2^{O(\sqrt{\log{n}\log{\log{n}}})})$ \cite{AO09}.
Recently, a breakthrough by Chakrabarty, Das, Goldenberg, Kouck\'y and Saks~\cite{CDGKS18} obtained the first constant factor approximation algorithm for computing edit distance with a subquadratic running time. However, when restricted to strictly linear time algorithms, a $\sqrt{n}$ approximation still remains the best possible \cite{LMS98,Saha14,CGK16}. In fact, when $\ed(x,y)=t$, the algorithm by Landau, Myers and Schmidt runs in $O(n+t^2)$ time \cite{LMS98}. Thus for $t \leq \sqrt{n}$, the edit distance can be computed exactly in linear time. This algorithm has found wide-spread applications \cite{AG98,Navarro98,BVKB10,CGRS16} and is also known to be optimal under SETH. 

\paragraph{Sublinear time}
Following this quest for ever faster algorithms,
it is natural to seek sublinear-time approximation algorithms.
We study the regime of small edit distance $t$,
which was investigated extensively in the literature
because of its high relevance to many applications. 
In computational biology, for example, it is often only necessary to compare genomic sequences that are highly similar and quickly get rid of sequences that are far apart, e.g., some sequencing projects target a strain or species that is closely related to an already-sequenced organism~\cite{DKFPWS99}.
A major difficulty is that genomic sequences are comprised of highly repetitive patterns (repeats) whose frequency and placement contain important information about genetic variation, gene regulation, human disease condition, etc.\ \cite{TS12,SA10}.
In a text corpora, detecting plagiarism and eliminating duplicates require identification of document pairs that are small edit distance apart.
These applications can benefit from super-fast algorithms that answer whether the edit distance is below a threshold $t$ or above $f(t)$ for some function $f$, known as the \emph{gap edit distance} problem. The goal here is to design algorithms that are simultaneously highly efficient and have $f(t)$ as close to $t$ as possible.

\paragraph{What is the right gap?}
We focus on $f(t)=t^2$, i.e., a quadratic gap as our main test case.  
This is perhaps the most natural choice other than $f(t)=\Theta(t)$
(i.e., multiplicative approximation),
and is also motivated by known results for linear and sublinear time.
\begin{itemize} %
  \compactify

\item In linear time, the algorithm of~\cite{LMS98} can solve the $t$ vs $t^2$ gap problem. So far, no linear-time algorithm is known to beat this bound \cite{Saha14,CGK16}. Bar-Yossef, Jayram, Krauthgamer and Kumar~\cite{BJKK04}  introduced the term \emph{gap edit distance} and solved the $t$ vs $t^2$ gap problem for non-repetitive strings. Their algorithm computes a constant-size sketch but still requires a linear pass over the data. This result was later improved to hold for general strings \cite{CGK16} via embedding into Hamming distance, but again in linear time.

\item The study of sublinear-time algorithms for edit distance was initiated by Batu, Ergun, Kilian, Magen, Raskhodnikova, Rubinfeld and Sami~\cite{BEKMRRS03},
  who designed an algorithm for the $t$ vs $\Omega(n)$ gap problem,
  thereby solving the quadratic gap problem only for $t=\Omega(\sqrt{n})$.
  Currently, the best sublinear-time algorithm, by Andoni and Onak~\cite{AO09}, solves the $t$ vs $t^2$ gap problem for all $t=\omega(n^{1/3})$. For $t=n^{1/3+\epsilon}$, their running time is $n^{1-3\epsilon+o(1)}$.
  Solving the quadratic gap problem appears to become harder as $t$ gets  smaller
  because locating fewer edit operations will require more queries,
  and the approximation factor $t$ gets smaller.   
  (This is in contrast to the time bound $O(n+t^2)$ of \cite{LMS98}.)
\end{itemize}

\subsection{Results}
We design a sublinear time algorithm for the $t$ vs $f(t)=t^2$ gap problem.
Its running time is $\tilde{O}(\frac{n}{t}+t^3)$,
which is indeed sublinear for all $t \in [\tilde{\omega}(1), o(n^{1/3})]$.%
\footnote{Throughout,
  the tilde notation $\tO(\cdot)$ and $\tilde{\omega}(\cdot)$
  hide factors that are polylogarithmic in $n$. 
}

\begin{theorem}\label{thm:main}
  There exists an algorithm that, 
  given as input strings $x,y\in \zo^n$ and an integer $t \le \sqrt{n}$,
  has query and time complexity bounded by $O(\frac{n\log n}{t}+t^3)$,
  and satisfies the following:
  \begin{itemize}
  \item If $\ed(x,y)\le t/2$ it outputs \close with probability $1$.
  \item If $\ed(x,y)> 13t^2$ it outputs \far with probability at least  $2/3$.
  \end{itemize}
\end{theorem}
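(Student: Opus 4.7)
The plan is to partition $x$ into $N = n/t$ consecutive blocks $B_1,\ldots,B_N$ of length $t$ each, and to decide closeness based on how many of these blocks admit a ``local match'' in $y$. Formally, call $B_i$ \emph{aligned} if it occurs as an exact substring of $y$ at some starting position of the form $(i-1)t + \Delta$ with $|\Delta| \le t$. I would first establish the easy \emph{completeness} direction: if $\ed(x,y) \le t/2$, then an optimal alignment uses at most $t/2$ edits, hence touches at most $t/2$ of the $N$ blocks, and each untouched block is mapped exactly to a substring of $y$ at a cumulative shift of absolute value at most $t/2$; therefore at least $N - t/2$ blocks are aligned.

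The heart of the proof is the \emph{soundness} direction: if at most, say, $12t$ blocks of $x$ are unaligned, then $\ed(x,y) \le 13 t^2$. The naive attempt of stitching per-block occurrences into a global alignment fails, because the shifts $\Delta_i$ witnessed by different aligned blocks may disagree, so charging the transition cost between consecutive aligned blocks is a priori as bad as $2t$ per block. I would handle this by bucketing the aligned blocks by their witnessing shift $\Delta_i \in \{-t,\ldots,t\}$, applying a pigeonhole argument to identify a dominant shift shared by an $\Omega(1/t)$-fraction of aligned blocks, and then constructing a global alignment that uses this dominant shift wherever possible and pays $O(t)$ for every block that is either unaligned or uses a non-dominant shift. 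The main technical obstacle is to bound the total cost of this alignment by $13t^2$ while handling periodic blocks (which may match many positions in $y$ and thus get assigned arbitrary shifts) --- here one must argue that periodic blocks can always be re-assigned to the dominant shift without increasing the cost.

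Given the dichotomy above, the algorithm samples random block indices $i$ and, for each, reads the block $B_i$ together with a window of $y$ of radius $O(t)$ around position $(i-1)t$, then runs linear-time exact pattern matching to test whether $B_i$ is aligned. By a Chernoff bound, $\tilde O(n/t^2)$ samples suffice to distinguish $\le t/2$ unaligned blocks from $\ge 12t$ unaligned blocks (the corresponding fractions are $\le t^2/(2n)$ versus $\ge 12t^2/n$), and each sample costs $O(t)$ queries, for a total of $\tilde O(n/t)$ queries. The additional $\tilde O(t^3)$ term corresponds to an adaptive step: upon encountering a block that is periodic (and hence potentially matches many locations in its window, creating ambiguity about its shift), the algorithm switches from uniform sampling to reading a longer contiguous stretch of $y$ of length $\tilde O(t^2)$ in order to pin down the correct shift; doing this up to $\tilde O(t)$ times accounts for the budget. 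This adaptive switching between uniform sampling and contiguous-block reading is the key novelty that beats the previous $n^{1/3}$ barrier of Andoni and Onak.
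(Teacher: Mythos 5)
Your soundness direction --- that at most $O(t)$ unaligned blocks forces $\ed(x,y)\le 13t^2$ --- is false, and this is fatal to the approach. Consider $x=B_1B_2B_3B_4\cdots B_N$ with $N=n/t$ random, pairwise-distinct, non-periodic blocks of length $t$, and $y=B_2B_1B_4B_3\cdots B_NB_{N-1}$ (swap each consecutive pair of blocks). Every block $B_i$ of $x$ occurs exactly in $y$ at a starting position shifted by $\Delta_i=\pm t$, so in your terminology \emph{all} $N$ blocks are aligned and your test would declare \close; yet $\ed(x,y)=\Theta(n)$, since for random blocks any alignment must, over each pair of rows $[(2k-2)t+1..2kt]$, pay $\Theta(t)$ to reconcile $B_{2k-1}B_{2k}$ with $B_{2k}B_{2k-1}$. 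Since $n\ge t^2$, this exceeds $13t^2$ by a wide margin. The reason your repair attempt cannot work is visible already in the quantitative bookkeeping: pigeonhole only gives a dominant shift covering an $\Omega(1/t)$-fraction of aligned blocks, so the blocks with a non-dominant shift number $\Theta(N)$, and charging $O(t)$ to each yields only $O(tN)=O(n)$, not $O(t^2)$. There is no hope of pinning the cost to $O(t^2)$ without controlling how the shifts of consecutive blocks relate to one another, which your per-block test deliberately ignores. (A secondary issue: your test outputs \close based on a sampled count and a threshold, so it cannot achieve the probability-$1$ guarantee in the close case that the theorem states.)

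The paper avoids exactly this trap by working globally in the grid graph rather than block-locally. It maintains, along a row-by-row scan, the set of \emph{potent} diagonals together with their running shortest-path costs $\cA[d]$, so the cost of transitioning between shifts is automatically accounted for in the grid-graph distance rather than being a per-block surcharge to be reconstructed afterward. The engine that makes this sublinear is the periodicity lemma (Lemma~\ref{lemma:periodicity}): if several diagonals have been simultaneously mismatch-free over a $2m$-row window, the corresponding substrings of $x$ and $y$ are forced to be periodic with a common pattern, which lets the algorithm temporarily replace per-diagonal checking by a single cheap periodicity check via uniform row sampling; and when the periodicity breaks, the lemma shows all but one of the active diagonals must incur a mismatch, so the $O(t)$ work per break can be charged against the $O(t)$ mismatches that each of at most $O(t)$ potent diagonals can ever accumulate, giving the $\tO(t^3)$ term. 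Your proposal does correctly anticipate the flavor --- adaptively switching from uniform sampling to contiguous reading when periodicity causes ambiguity --- but the unit of analysis (independent $t$-blocks with per-block alignment) is the wrong one; the paper's unit is a diagonal in the grid graph, whose cost is tracked cumulatively across the whole scan.
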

Therefore, coupled with the result of \cite{AO09}, we get sublinear time-complexity for the quadratic gap problem for $t \in [\tilde{\omega}(1),o(n^{1/3})]\cup[\omega(n^{1/3}),n]$. This leaves a very interesting open question as to what happens when $t=\Theta(n^{1/3})$.

Our algorithm has two more nice features. First, sometimes one also requires that the algorithm finds an \emph{alignment} of two strings: $x$ and $y$, i.e., a series of edit operations that transform $x$ into $y$. Our algorithm can succinctly represent an alignment in $\tilde{O}(t^2)$ bits even though it runs in sublinear time.
Second, the algorithm can be easily extended to solve the $t$ vs $f(t)=t^{2-\epsilon}$ gap problem by paying slightly higher in the running time/query complexity: $\tilde{O}(\frac{n}{t^{1-\epsilon}}+t^3)$.

\paragraph{Previous Work}
Batu et al.'s algorithm distinguishes $t=n^\alpha$ vs $f(t)=\Omega(n)$ in $O(n^{\maxx{2\alpha-1,\alpha/2}})$ time for any fixed $\alpha > 1$ \cite{BEKMRRS03}. Their approach crucially depends on $f(t)=\Omega(n)$ and cannot distinguish between (say) $n^{0.1}$ and $n^{0.99}$.
The best sublinear-time algorithm known for gap edit distance,
by Andoni and Onak \cite{AO09}, 
distinguishes between $t=n^\alpha$ vs $f(t)=n^\beta$ for $\beta > \alpha$
in time $O(n^{2+\alpha-2\beta+o(1)})$.
For the quadratic gap problem, i.e., $\beta=2\alpha$,
this time bound is $O(n^{2-3\alpha+o(1)})$,
which becomes worse as $t$ gets smaller (as discussed earlier).
For example, when $t=n^{1/4}$, the known algorithm is not sublinear,
whereas ours runs in time $\tilde{O}(n^{3/4})$.

Presence of repeated patterns make the gap edit distance problem significantly difficult to approximate. When no repetition is allowed, the state-of-the-art sublinear-time algorithms of~\cite{AN10} for the Ulam metric
(edit distance with no repetition, which obviously requires a large alphabet) 
distinguish between $t$ vs $\Theta(t)$ in $O(\frac{n}{t}+\sqrt{n})$ time,
achieving a bound that is similar to the folklore sampling algorithm for approximating Hamming distance. 
There is a long line of work on edit distance and related problems,
aiming to achieve fast running time \cite{AN10,AKH13,Saha17,BEGHS18,HSSS19-a}, low distortion embedding \cite{OR07,KR06,CGK16,BZ16}, small space complexity \cite{CGK16,BZ16,BJKK04} and parallel algorithms \cite{HSS19-b}. The work of Andoni, Onak and Krauthgamer~\cite{AKO10} achieves a sublinear asymmetric query complexity for approximating edit distance; however it does not lead to any sublinear time algorithm since one of the strings must be read in its entirety.

\subsection{Techniques}	

As a warmup, we start with a simple algorithm that has asymmetric query complexity -- it queries $\tilde{O}(\frac{n}{t})$ positions in $x$,
but may query the entire string $y$.
This is comparable to the Hamming metric,
where simply querying $\tilde{O}(\frac{n}{t})$ positions uniformly at random
in $x$ and the same positions in $y$, suffice to solve $t$ vs $\Theta(t)$ gap.
However, this simple uniform sampling fails miserably to estimate edit distance,
even when there is a single character insertion or deletion.
Our simple algorithm reads $\tilde{O}(\frac{n}{t})$ random positions in $x$,
but since $x_i$ might be matched to any $y_{i+d}$, $d \in [-t..+t]$, 
our algorithm has to read the entire string $y$.
(In this outline, we call $d$ a shift, and later call it a diagonal.)

Even when the entire string $y$ is known, 
we cannot hope that this approach distinguishes better than $t$ vs $f(t)=t^2$.
To see this, consider two scenarios.
In one scenario, $y$ is obtained from $x$ via $t^2$ substitutions.
Since the algorithm samples $x$ at a rate of $~\frac{1}{t}$,
we expect to see about $t$ of these substitutions.
In an alternative scenario, $x$ is partitioned into $t$ substrings
of length $\frac{n}{t}$, and $y$ is obtained from $x$
by a circular shift by one position of each of the $t$ parts (substrings).
Now, the edit distance between $x$ and $y$ is at most $2t$,
and assuming the sample of $x$ contains at least one symbol 
 from each part of $x$, the best alignment of the sampled $x$ with $y$ 
 will still constitute of $O(t)$ insertions/deletions.
These two cases will be indistinguishable to an algorithm that
aligns the samples in $x$ with the string $y$,
and thus the best separation possible in this approach is $t$ vs $f(t)=t^2$. 

To avoid sampling the entire string $y$,
one may need to sample $x$ at a lower rate 
or to sample $x$ non-uniformly in \emph{contiguous positions} (blocks).
In the former case, the separation between $t$ and $f(t)$ will only increase. In the latter case, an algorithm that samples (say) $\frac{n}{t^2}$ blocks of length $O(t)$ in $x$ can be shown to solve only a $t$ vs $t^2$ gap even for Hamming distance, and for edit distance we will need $f(t)=t^3$. 

In order to overcome these barriers, we employ both contiguous sampling and uniform sampling together, and in fact switch between them \emph{adaptively}. 
The contiguous sampling suggests plausible shifts that a low-cost alignment
may use. These plausible shifts are then checked probabilistically through uniform sampling. However, if we need to check every plausible shift via uniform sampling, the query (and time) complexity will again become linear.
A technical observation based on \cite{CGK16b} helps us here --- if two substrings can be matched under two distinct shifts $d$ and $d'$, then the substrings must have a repeated pattern. In other words, the substrings are periodic with a pattern of length $|d-d'|$.
The crux is that instead of checking each shift individually, 
we instead check for this repeated pattern via uniform sampling. When we witness a deviation from the periodicity (e.g., change in pattern),
we execute a fast test to identify all shifts that ``see'' a mismatch
(and increase our estimate of their cost). We alternate between the non-uniform and uniform sampling at an appropriate rate to achieve the desired query complexity and the running time.

In contrast, all previous sublinear/sampling algorithms,
including~\cite{BEKMRRS03,AO09,AKO10,AN10}
choose which coordinates to query non-adaptively.

\paragraph{Organization}
Section~\ref{sec:prelims} lays the groundwork for our main algorithm. 
It starts by introducing (in Section~\ref{sec:GridGraph})
the concept of a grid graph,
which represents the edit distance as a shortest-path computation in a graph.
It then describes (in Section~\ref{sec:GridSampling}) the uniform sampling technique,
which can be viewed as sampling of the grid graph,
leading to a simple algorithm with asymmetric query complexity.

Section~\ref{sec:subLinear} presents our main result.
It starts with a method (in Section~\ref{sec:LMS})
that computes a shortest path using a more selective scan of the grid graph.
It then describes (in Section~\ref{sec:AlgorithmDescription})
our main algorithm,
which combines the aforementioned techniques of sampling the grid graph
and of scanning it more selectively.

\section{Preliminaries: The Grid Graph and Uniform Sampling}
\label{sec:prelims}

\paragraph{Notation}
Let $x\in \Sigma^n$ be a string of length $n$ over alphabet $\Sigma$.
For a set $S\subseteq [n]$,
we denote by $x_S$ the restriction of $x$ to positions in $S$
(in effect, we treat $S$ as if it is ordered in the natural order). 
Oftentimes, $S$ is contiguous (i.e., an interval)
and then $x_S$ is a substring of $x$.
For $d\in \pmn$ and a set $S\subseteq[n]$ we define
$S+d \eqdef \sett{s+d}{s\in S }$.
As usual, $[i..j]$ denotes $\set{i,\ldots,j}$ for integers $i,j$. 

A string $x\in \Sigma ^n$ is called \emph{periodic}
with \emph{period length} $m<n$ and \emph{period pattern} $p\in \Sigma^m$
if $x= p^{\lfloor n/m \rfloor} \circ q$, where $q=p_{[1.. (n\bmod p)]}$ and $\circ$ means concatenation of strings.
Here and throughout we assume that $(n\bmod p)$
returns a value in the range $[1..p]$ (rather than $[0..p-1]$ as usual).

\subsection{Edit Distance as a Shortest Path in a Grid Graph}
\label{sec:GridGraph}

Given an input $x,y\in \zo^n$ to the edit distance problem,
 it is natural to consider the following directed graph $G_{x,y}$,
which we refer to as the grid graph.
It has vertex set $[0..n]\times \pmn$,
The graph has the following weighted edges 
(provided both endpoints are indeed vertices): 
\begin{enumerate}[(i)] \compactify
	\item Deletion edges: $(i ,d) \to (i+1,d-1)$ with weight $1$  corresponding to a character deletion.
	\item Insertion edges: $(i,d) \to (i,d+1)$ with weight $1$  corresponding to character insertions.
	\item Matching/substitution edges: $(i, d) \to (i+1,d)$ with weight either $0$ or $1$ depending on whether $x_{i+1}= y_{i+d+1}$ or not. 
          Such an edge corresponds to a character match/substitution.
\end{enumerate}
See Figure~\ref{fig:editgraph} for illustration.
Throughout,
we call $i$ the \emph{row} and $d$ the \emph{diagonal} of a vertex $(i,d)$,
and refer to weight also as cost.

\ifprocs
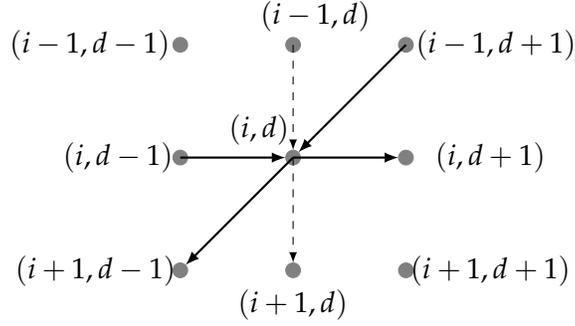
\begin{figure*}
\else
\begin{figure}[ht]
\fi
	\centering
	\begin{tikzpicture}[scale=1.5,shorten >=1mm,>=latex]
	\tikzstyle gridlines=[color=black!20,very thin]

	\draw[fill,color=black!50] (1,1) circle (0.65mm);
	\node at (0.7,1.25) {$(i,d)$};
	
	\draw[fill,color=black!50] (2,2) circle (0.65mm);
	\node at (2.8,2) {$(i-1,d+1)$};

	\draw[->,thick] (2,2)--(1,1);
	
	\draw[fill,color=black!50] (0,1) circle (0.65mm);
	\node at (-0.55,1) {$(i,d-1)$};
	
	\draw[fill,color=black!50] (0,2) circle (0.65mm);
	\node at (-0.8,2) {$(i-1,d-1)$};
	
	\draw[->,thick] (0,1)--(1,1);

	\draw[fill,color=black!50] (1,2) circle (0.65mm);
	\node at (1.2,2.25) {$(i-1,d)$};
	
	\draw[->,dashed] (1,2)--(1,1);

	\draw[fill,color=black!50] (2,1) circle (0.65mm);
	\node at (2.75,1) {$(i,d+1)$};
	
	\draw[fill,color=black!50] (2,0) circle (0.65mm);
	\node at (2.75,0) {$(i+1,d+1)$};

	\draw[->,thick] (1,1)--(2,1);

	\draw[fill,color=black!50] (1,0) circle (0.65mm);
	\node at (1,-0.3) {$(i+1,d)$};
	
	\draw[->,dashed] (1,1)--(1,0);

	\draw[fill,color=black!50] (0,0) circle (0.65mm);
	\node at (-0.75,0) {$(i+1,d-1)$};
	
	\draw[->,thick] (1,1)--(0,0);
	
	\end{tikzpicture}
	\caption{A typical vertex in $G_{x,y}$ has $3$ incoming and $3$ outgoing edges. Thick edges have cost $1$ corresponding to deletion/insertion, dashed edges have cost $0/1$ corresponding to substitution.}
	\label{fig:editgraph}
	\vspace{0.1cm}
        \hrule
	\vspace{-0.1cm}
\ifprocs
\end{figure*}
\else
\end{figure}
\fi

We assign to each vertex $(i,d)$ a cost $c(i,d)$
according to the shortest path (meaning of minimum total weight) 
from $(0,0)$ to $(i,d)$.
The following two lemmas can be easily proven using standard dynamic programming arguments; we omit the details.

\begin{lemma}
  One can compute the cost of every vertex in the grid graph
  by scanning the vertices in row order (i.e., from $0$ to $n$)
  and inside each row scanning the diagonals in order from $-n$ to $n$.
  Moreover, computing the cost of each vertex $(i,d)$ requires inspecting
  only 3 earlier vertices, namely, $(i-1,d)$, $(i-1,d+1)$ and $(i,d-1)$.
\end{lemma}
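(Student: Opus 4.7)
The proof is a routine verification of dynamic-programming correctness, so the plan is to check two things: (a) the recurrence and (b) the scanning order under which it can be evaluated.

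First, I would identify the in-neighbors of an arbitrary vertex $(i,d)$. Reading off the three edge types given before the lemma: a deletion edge $(i-1,d+1)\to(i,d)$, an insertion edge $(i,d-1)\to(i,d)$, and a matching/substitution edge $(i-1,d)\to(i,d)$. So $(i,d)$ has exactly three predecessors, namely $(i-1,d+1)$, $(i-1,d)$ and $(i,d-1)$, as claimed.

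Next, since $c(i,d)$ is defined as the weight of a shortest path from $(0,0)$ to $(i,d)$, the standard Bellman-style argument gives the recurrence
\[
  c(i,d) \;=\; \min\bigl\{\, c(i-1,d+1)+1,\; c(i,d-1)+1,\; c(i-1,d) + \indic[x_i \ne y_{i+d}] \,\bigr\},
\]
where missing predecessors (when they fall outside the vertex set) are taken to contribute $\infty$, and the base case $c(0,0)=0$ is used. This is just the observation that any shortest path to $(i,d)$ must arrive through one of the three incoming edges, and the segment of the path before that last edge is itself a shortest path to the corresponding predecessor.

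Finally, I would verify that the scanning order — outer loop over rows $i=0,1,\ldots,n$ and inner loop over diagonals $d=-n,\ldots,n$ — evaluates each $c(i,d)$ only after all three predecessors are already computed: $(i-1,d+1)$ and $(i-1,d)$ lie in a strictly earlier row, while $(i,d-1)$ lies in the same row but on an earlier diagonal. Hence the DP is well-defined and each update inspects only those three earlier vertices, which proves both assertions of the lemma. There is no real obstacle here; the only thing to be a bit careful about is the boundary (when a predecessor falls outside $[0..n]\times\pmn$), which is handled by the $\infty$ convention and does not affect the three-vertex count for interior vertices.
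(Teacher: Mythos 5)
Your proof is correct and is precisely the ``standard dynamic programming argument'' that the paper explicitly says it omits; you identify the three in-edges correctly (deletion from $(i-1,d+1)$, insertion from $(i,d-1)$, match/substitution from $(i-1,d)$), read off the right weights (in particular the match/substitution weight $\indic[x_i\ne y_{i+d}]$), and verify that the row-major, then left-to-right-diagonal scan is a topological order so each $c(i,d)$ is computed only from already-finalized values. The only minor stylistic point is that it is worth saying explicitly that the grid graph is a DAG under this ordering (every edge goes to a vertex with strictly larger $(i,d)$ in lexicographic order), since that is what licenses both the Bellman recurrence and the scan order; you state the ingredient but don't quite name it.
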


\begin{lemma} \label{lem:gridcost}
  There is a one-to-one correspondence between paths from $(0,0 )$ to $(i,d)$
  to alignments from $x_{[1.. i]}$ into $y_{[1.. i+d]}$
  (an alignment is a set of character deletions, insertions and substitutions that converts one string to the other). 
  Moreover, each path's weight is equal to the corresponding alignment's cost,
  and thus
  $$c(i,d) = \ed(x_{[1..i]}, y_{[1..i+d]}).$$ 
\end{lemma}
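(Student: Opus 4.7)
The plan is to prove Lemma~\ref{lem:gridcost} by constructing an explicit cost-preserving bijection between paths in $G_{x,y}$ from $(0,0)$ to $(i,d)$ and alignments of $x_{[1..i]}$ into $y_{[1..i+d]}$. The bijection reads each edge as a single edit operation: a deletion edge $(i',d')\to(i'+1,d'-1)$ corresponds to deleting $x_{i'+1}$; an insertion edge $(i',d')\to(i',d'+1)$ corresponds to inserting $y_{i'+d'+1}$; and a match/substitution edge $(i',d')\to(i'+1,d')$ corresponds to aligning $x_{i'+1}$ with $y_{i'+d'+1}$, whose cost is exactly the indicator $\indic[x_{i'+1}\ne y_{i'+d'+1}]$ by definition of the edge weight.

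The first step is the bookkeeping observation that pins down the geometric meaning of the diagonal coordinate. If a path from $(0,0)$ to $(i',d')$ uses $a$ deletion edges, $b$ insertion edges, and $c$ match/substitution edges, then summing coordinate changes gives $i'=a+c$ and $d'=b-a$, hence $i'+d'=b+c$. Thus the induced operation sequence consumes exactly $a+c=i'$ characters of $x$ and $b+c=i'+d'$ characters of $y$, which is precisely what an alignment between $x_{[1..i]}$ and $y_{[1..i+d]}$ must do when the path reaches $(i,d)$. This is the main technical content and is routine but worth stating carefully.

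Given this, I would formalize both directions of the bijection. Forward: walk along the path, appending the induced operation to the alignment and maintaining the invariant that after processing the first $k$ edges, the current vertex $(i',d')$ satisfies $i'$ equals the number of $x$-characters already consumed and $i'+d'$ equals the number of $y$-characters already consumed. Backward: given an alignment as an ordered sequence of operations, produce the corresponding edges one by one, using the same invariant to verify that each intermediate vertex lies in $[0..n]\times\pmn$ (which it does, since the final vertex $(i,d)$ is in range and partial counts of consumed characters are monotone and bounded by the totals). Weight equality is immediate edge by edge: insertion and deletion edges each contribute $1$, and the match/substitution edge contributes $\indic[x_{i'+1}\ne y_{i'+d'+1}]$, exactly the substitution cost.

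With a cost-preserving bijection in hand, the conclusion is immediate: $c(i,d)$, the minimum weight of a path from $(0,0)$ to $(i,d)$, equals the minimum cost of an alignment of $x_{[1..i]}$ into $y_{[1..i+d]}$, which is $\ed(x_{[1..i]},y_{[1..i+d]})$. I do not expect any serious obstacle; the only place requiring care is making sure the correspondence is truly a bijection (no two paths give the same alignment and vice versa), which follows because the edge type is uniquely determined by the operation and the current vertex, so the path and the alignment are in lock-step.
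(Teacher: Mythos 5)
The paper does not actually give a proof of Lemma~\ref{lem:gridcost}: it states that both this lemma and the preceding one ``can be easily proven using standard dynamic programming arguments; we omit the details.'' Your proposal therefore supplies the expected standard argument rather than reproducing or deviating from a written proof, and it is correct: the edge-by-edge dictionary (deletion $\leftrightarrow$ deletion edge, insertion $\leftrightarrow$ insertion edge, match/substitution $\leftrightarrow$ horizontal edge with weight $\indic[x_{i'+1}\neq y_{i'+d'+1}]$), together with the bookkeeping identity $i'=a+c$, $i'+d'=b+c$ tracking consumed characters, is exactly what one needs to conclude that minimum path weight equals $\ed(x_{[1..i]},y_{[1..i+d]})$.

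One small caveat worth being honest about: the paper defines an alignment as a \emph{set} of operations, under which the map from paths to alignments is in general many-to-one (a deletion and an insertion acting on disjoint positions, with no intervening aligned pair, can be ordered either way and yield two distinct paths mapping to the same set). Your formalization quietly treats an alignment as an \emph{ordered sequence} of operations, under which the correspondence is indeed a bijection, and this is the cleaner way to state it. Either reading is harmless for the actual conclusion $c(i,d)=\ed(x_{[1..i]},y_{[1..i+d]})$, since only cost-preservation and surjectivity onto alignments are needed to identify the two minima; but if you want the literal ``one-to-one'' claim of the lemma to hold you should either adopt the ordered-sequence convention explicitly, as you implicitly do, or fix a canonical order (e.g., among consecutive unaligned operations, perform all deletions before insertions) in the backward direction.
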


Observe that when the edit distance is bounded by a parameter $t$,
the optimal path goes only through vertices in $[0..n]\times [-t..t]$, 
and thus the algorithm can be restricted to this range.
We sometimes refer to the two vertices $(0,0)$ and $(n,0)$
as the \emph{source} and \emph{sink}, respectively.

\subsection{Uniform Sampling of the Rows (and Asymmetric Query Complexity)}
\label{sec:GridSampling}

As a warm-up, we now describe a simple randomized algorithm that,
given as input two strings $x,y\in \zo^n$ and a parameter $t\leq \sqrt n$, distinguishes (with high probability) whether $\ed(x,y)\le t$
or $\ed(x,y)= \Omega(t^2)$. The algorithm has asymmetric query complexity:
it queries $x$ at a rate of $\frac{\log n}{t}$ but may query $y$ in its entirety.

This algorithm is based on a sampled version of the grid graph,
denoted $G_S$, constructed as follows.
First, pick a random set $S\subseteq [n]$ where each row $i\in [n]$
is included in $S$ independently with probability $\frac{\log n}{t}$,
and add also row $0$ to $S$.
Let $s=|S|$ and denote the rows in $S$ by $0=i_1<\dots<i_s$. 
Now let the vertex set of $G_S$ be $S\times \pmt \cup \set {(n,0)}$,
and connect each vertex $(i_j,d)$, for $i_j\in S$ and $d\in \pmt$, 
to the following set of vertices (provided they are indeed vertices): 
(i) $(i_{j+1},d)$ with weight $\indicator{\set{x_{i_{j}+1}\neq y_{i_{j}+d+1}}}$;
and
(ii) $(i_{j},d+1)$ with weight $1$ 
and (iii) $(i_{j+1},d-1)$ with weight $1$. 
Finally, connect each vertex $(i_s,d)$ to the sink $(n,0)$ with
an edge of weight $\abs{d}$.

The algorithm constructs $G_S$
and then computes the shortest path from $(0,0)$ to $(n,0)$.
If its cost is at most $t$ 
then the algorithm outputs \close,
otherwise it outputs \far. 

\begin{lemma}\label{lem:gridGraph}
  For all $x,y\in \Sigma^n$, 
  \begin{itemize} \compactify
  \item
    if $\ed(x,y)\le t$ then with probability $1$
  the algorithm outputs \close; 
\item
  if $\ed(x,y)>6 t^2$ then 
  with probability at least $2/3$ the algorithm outputs \far.
\end{itemize}
\end{lemma}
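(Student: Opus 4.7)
The plan is to prove completeness and soundness separately, in each case via an explicit correspondence between $G_S$-paths and alignments of $x,y$.

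For completeness (probability $1$), I would fix an optimal alignment $\alpha$ of $x,y$ with cost $t'\le t$, and build a path $\pi$ in $G_S$ of cost $\le t$ as follows. Let $D_j$ denote the diagonal at which $\alpha$ crosses the sampled row $i_j$. Define the path's diagonals by smoothing, $\tilde D_{j+1}=\max(D_{j+1},\tilde D_j-1)$, which enforces the $G_S$-constraint $\tilde D_{j+1}-\tilde D_j\ge -1$: when $\alpha$ drops the diagonal by more than $1$ in a single gap, the excess deletions are spread over subsequent gaps via type~(iii) edges. Because type~(iii) carries cost $1$ but no mismatch check, spreading does not inflate the mismatch count; the total ins/del cost of $\tilde D$ matches that of $\alpha$; and type~(i) edges (at unsmoothed diagonals) incur mismatches only at positions where $\alpha$ substituted. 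Summing, the $G_S$-cost is at most $a+b+M=t'\le t$. In the edge case that fewer than $b$ gaps are available to absorb the deletions, I fall back to the trivial constant-diagonal path $\tilde D_j\equiv 0$, whose cost is bounded by $s-1<b\le t/2$.

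For soundness (probability $\ge 2/3$), I would contrapose: assuming $\ed(x,y)>6t^2$, show that every $G_S$-path exceeds cost $t$ with high probability. Any candidate path is characterized by a diagonal sequence $d_0=0,d_1,\dots,d_s$ satisfying $d_{j+1}-d_j\ge -1$ and $\sum_j|d_{j+1}-d_j|+|d_s|\le t$; such a sequence induces an implied alignment whose true cost is at least $\ed(x,y)>6t^2$, and subtracting the at-most-$t$ ins/del contribution leaves more than $5t^2$ substitutions at interior positions. Since $G_S$ detects substitutions only at sampled rows (each retained independently with probability $\log n/t$), by Chernoff the detected count exceeds $t$ except with probability at most $\exp(-\Omega(t\log n))=n^{-\Omega(t)}$. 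A union bound over the at most $\binom{s}{t}(t+1)^t\le n^{O(t)}$ feasible diagonal sequences yields total failure probability $o(1)$, hence below $1/3$.

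The main obstacle is balancing the soundness union bound: the count of feasible diagonal sequences is $n^{\Theta(t)}$, so Chernoff must supply an $n^{-\Omega(t)}$ tail with a strictly larger constant. The factor $6$ in $6t^2$ is calibrated precisely so that the guaranteed $>5t^2$ substitutions yield $>5t\log n$ expected detections, enough for the Chernoff bound to dominate after union bounding. A secondary subtlety in completeness is that the smoothed $\tilde D$ deviates from $D$ inside ``deficit'' regions, but those regions use only type~(iii) edges that bypass the mismatch check, so no spurious mismatches are charged.
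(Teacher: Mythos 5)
Your far-case argument takes a genuinely different route from the paper's. The paper defines local ``bad events'' $B(i,d)$ (for each starting row $i$ and diagonal $d$, the next $3t$ Hamming errors along $d$ are all missed by $S$), union-bounds over only $O(nt)$ such events, and then shows that if no bad event occurs, any cheap $G_S$-path can be lifted back to a path in $G_{x,y}$ of cost $\le 6t^2$, contradicting the assumption. You instead union-bound directly over all $n^{O(t)}$ low-ins/del diagonal trajectories and use a much stronger Chernoff tail. This can be made to work, but you owe more care than you show: the induced full-length diagonal function $D$ depends on $S$, so you must fix $D$ first (all $D:[0,n]\to[-t..t]$ with at most $t$ unit transitions, regardless of where they fall) and only then sample $S$; you must also account for the up-to-$t$ positions where $D$ transitions (so the checked diagonal and the alignment's diagonal disagree) and the up-to-$t$ rounds where the path uses a type-(iii) edge and hence performs no mismatch check. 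These are absorbable constants but they are exactly the bookkeeping the paper avoids by choosing its bad events more surgically.

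Your close-case argument has a genuine gap. The core intermediate claims -- that ``the total ins/del cost of $\tilde D$ matches that of $\alpha$'' and that ``type~(i) edges (at unsmoothed diagonals) incur mismatches only at positions where $\alpha$ substituted'' -- are not separately true. A concrete failure mode: suppose $\tilde D_j = D_j$ and $\alpha$'s first move out of row $i_j$ is a deletion followed later by insertions (so $D_{j+1}\ge D_j$). Then the $G_S$-path uses a type-(i) edge at diagonal $D_j$ and may pay a mismatch that $\alpha$ never pays; simultaneously, $\alpha$ pays strictly more in ins/del than $\tilde D$ does. The total still balances, but only because a surplus on the ins/del side covers a deficit on the mismatch side -- the two accounts are not independently bounded, which is precisely why the paper proves instead the coupled potential inequality $c_{\tau_S}(j)+d_{\tau_S}(j)\le c_\tau(j)+d_\tau(j)$ (Claim~\ref{claim:closeCaseAnalysis}) and closes it by a case analysis that trades diagonal ``slack'' against cost. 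You need something of that form; your stated separation does not hold. (Your fallback to $\tilde D_j\equiv 0$ when ``fewer than $b$ gaps are available'' is also off: the cost of the constant-diagonal-$0$ path in $G_S$ is the number of sampled mismatch positions, which can be as large as $|S|$, not $s-1<b$; that branch only makes sense in the degenerate regime $|S|<b$, which never actually arises.)
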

 
We sketch here the proof, 
deferring details to Appendix~\ref{app:gridGarphProof}.

\paragraph{Close case $\ed(x,y)\le t$}
It suffices to show that for every source-to-sink path $\tau$ in $G_{x,y}$ 
(the original grid graph) there is in $G_S$ a corresponding source-to-sink path $\tau_S$ that has the same or lower cost.
To do this, start by letting $\tau_S$ visit the same set of vertices
as $\tau$ visits in row $0$,
starting of course at the source $(0,0)$, %
and then extend $\tau_S$ iteratively from row $i_j$ to $i_{j+1}$, as follows. 
Denote the last vertex visited by $\tau_S$ on row $i_j$ by $(i_j,d_S)$, 
and the vertices visited by $\tau$ on row $i_{j+1}$ by $(i_{j+1},d)\dots, (i_{j+1},d+\ell)$.
Now if $d_S\le d+\ell$, then extend $\tau_S$ by appending $(i_{j+1},d_S) \dots, (i_{j+1},d+\ell)$. 
Otherwise, extend it by appending $(i_{j+1},d-1)$.
Finally, after $\tau_S$ visits row $i_S$,
extend it by appending the sink $(n,0)$.

Denote by $c_{G_{x,y}}(\cdot)$ the cost of a path in $G_{x,y}$,
and by $c_{G_S}(\cdot)$ the cost of a path in $G_S$. 
We can then prove that $c_{G_S}(\tau_S) \le c_{G_{x,y}}(\tau)$,
see Claim~\ref{claim:closeCaseAnalysis} for details.

\paragraph{Far case $\ed(x,y) \ge 6t^2$}
We need the next claim,
which follows easily by the independence in sampling rows to $S$.
Let $\ham(\cdot,\cdot)$ denote the Hamming distance between two strings. 

\begin{claim}\label{claim:badSamplingEvent}
	Fix $x,y \in \Sigma^n$, $i\in [n]$ and $d\in \pmt$.
	Let $i' \ge i$ be the minimum such that
	$$ \ham (x_{[i..i']}, y_{[i..i']+d})\ge 3t,$$
	and let $H=\sett{j\in [i..i']}{x_j\neq y_{j+d}}$ be the corresponding set of Hamming errors.  
	If no such $i'$ exists then set $i'=\infty$.
	
	Define $B(i,d)$ to be the event that $i'<\infty$
	and that no row from $H$ is sampled, i.e., $H\cap S=\emptyset$.
	Then
	$$\Pr [B(i,d)]
	\le \left( 1-\frac{\log n}{t} \right)^{3t} 
	< \frac{1}{3n(2t+1)}.
	$$
\end{claim}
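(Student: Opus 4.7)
The plan is to reduce the claim to a routine calculation involving the independence of row sampling. First, I would observe that whenever $i'<\infty$, the set $H$ has cardinality at least $3t$. This follows immediately from the definitions: $H$ consists precisely of the positions $j\in[i..i']$ where $x_j\neq y_{j+d}$, so $|H|=\ham(x_{[i..i']},y_{[i..i']+d})\geq 3t$ by the choice of $i'$.

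Next, since each row is included in $S$ independently with probability $\frac{\log n}{t}$, and since the set $H$ is a function of $x,y,i,d$ only (and hence independent of $S$), I would compute
$$
\Pr[B(i,d)] \le \Pr[H\cap S=\emptyset] = \prod_{j\in H}\Pr[j\notin S] = \Bigl(1-\frac{\log n}{t}\Bigr)^{|H|} \le \Bigl(1-\frac{\log n}{t}\Bigr)^{3t},
$$
using independence in the middle equality and $|H|\ge 3t$ at the end.

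Then I would apply the standard inequality $1-x\le e^{-x}$ to obtain
$$
\Bigl(1-\frac{\log n}{t}\Bigr)^{3t} \le e^{-3\log n} = \frac{1}{n^3}.
$$
Finally, using the assumption $t\le\sqrt n$, one has $3n(2t+1)\le 3n(2\sqrt n+1)\le n^{3}$ (for sufficiently large $n$, say $n\ge 10$), which yields $\frac{1}{n^3}<\frac{1}{3n(2t+1)}$, as required.

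There is no real obstacle here; the only subtle point is keeping straight that the event $B(i,d)$ is purely about the random set $S$ missing the deterministic set $H$, so that independence of the $\indicator{\{j\in S\}}$ across $j$ is directly applicable.
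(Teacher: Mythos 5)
Your proof is correct and matches the paper's intent: the paper only remarks that the claim ``follows easily by the independence in sampling rows to $S$,'' and your argument fills in exactly that routine calculation (deterministic $H$ of size at least $3t$, independent inclusion probabilities, $1-x\le e^{-x}$, and the assumption $t\le\sqrt n$ to compare $n^3$ against $3n(2t+1)$).
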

\sloppy
By a union bound over the set of all possible rows $i$ and diagonals $d$, 
we get that except with probability $n(2t+1) \frac 1 {3n(2t+1)}\le \frac 1 3$, none of the events $B(i,d)$ happens.
We conclude the proof by showing that whenever this happens, 
every path in $G_S$ from the source $(0,0)$ to the sink $(n,0)$
has cost strictly larger than $t$, 
and therefore our algorithm outputs \far.

\subsection{Generalized Grid Graph}
\label{sec:GenGridGraph}

It is instructive, and in fact needed for the algorithm
we describe in Section~\ref{sec:AlgorithmDescription}, 
to consider the following generalization of $G_S$ (and of $G_{xy}$). 
A \emph{generalized grid graph} has the same vertices and edges as $G_S$
(which was defined in Section~\ref{sec:GridSampling}),
except that the edges of type (iii) have arbitrary weights from the domain $\zo$.
This is in contrast to $G_S$, where all these weights
are derived from the two strings $x,y$ and thus have various correlations, 
e.g., a single $x_{i_j}$ affects the weight of many edges.
The next lemma shows that such graphs have a lot of structure. 

\begin{lemma}
Consider a generalized grid graph
and denote its rows sequentially by $0,1,2,\ldots,\card{S}-1$.
Then the cost difference between a vertex $(i,d)$ and its in-neighbors
is bounded by:
\begin{align}
  0  & \le c(i,d) - c(i-1,d) \le 1  \label{eq:mono}\\
  -1 & \le c(i,d) - c(i,d-1) \leq 1 \label{eq:bdd_diff1} \\
  -1 & \leq c(i,d) - c(i-1,d+1) \leq 1 \label{eq:bdd_diff2}
\end{align}
\end{lemma}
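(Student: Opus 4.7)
My plan is to prove the three inequalities simultaneously by induction, processing the vertices in the natural DP order: row by row, and within each row in increasing order of the diagonal. The joint induction is essential because the lower-bound half of each inequality at vertex $(i,d)$ ultimately needs control over the pairwise differences \emph{among} its three in-neighbors, and those pairwise differences are themselves instances of the very same inequalities at previously processed vertices. The base case is the source $(0,0)$, which has no in-neighbors and is therefore vacuous.

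The inductive step starts from the shortest-path recursion
\[
c(i,d) \;=\; \min\bigl\{c(i-1,d)+w_1,\ c(i-1,d+1)+w_3,\ c(i,d-1)+1\bigr\},
\]
where $w_1,w_3 \in \zo$ are the weights of the type-(i) match/substitution and type-(iii) (generalized) deletion in-edges, respectively. The upper-bound halves of (\ref{eq:mono}), (\ref{eq:bdd_diff1}), and (\ref{eq:bdd_diff2}) are then immediate: each in-edge has weight at most $1$, so each candidate inside the min is at most the corresponding in-neighbor's cost plus $1$. Missing in-neighbors at the boundary of the grid are simply interpreted as absent terms in the min, and the remaining argument is unchanged.

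For the lower halves, I would invoke the inductive hypothesis at the two already-processed vertices $(i-1,d+1)$ and $(i,d-1)$. Since $(i-1,d)$ is the type-(ii) insertion in-neighbor of $(i-1,d+1)$ and the type-(iii) deletion in-neighbor of $(i,d-1)$, the already-established (\ref{eq:bdd_diff1}) at $(i-1,d+1)$ and (\ref{eq:bdd_diff2}) at $(i,d-1)$ sandwich both $c(i-1,d+1)$ and $c(i,d-1)$ in a window of radius $1$ around $c(i-1,d)$. Substituting these sandwich bounds into the three terms of the min reduces each desired lower bound to a short three-way arithmetic check, one candidate at a time. I expect the main obstacle to be precisely this tight coupling: a naive attempt to prove (\ref{eq:mono}) in isolation stumbles because its inductive step calls on (\ref{eq:bdd_diff1}) and (\ref{eq:bdd_diff2}), so the three claims must be bundled and carried together. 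A secondary delicate point is that type-(iii) edges now have arbitrary weight in $\zo$, so when $w_3=0$ the ``free'' shortcut from $(i-1,d+1)$ to $(i,d)$ must still be handled by the sandwich bound — this is the place where the joint induction actually earns its keep and closes the loop.
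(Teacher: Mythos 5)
Your joint induction over the grid vertices in row-then-diagonal order, sandwiching $c(i-1,d+1)$ and $c(i,d-1)$ within $\pm 1$ of $c(i-1,d)$ via the inductive hypothesis at those two already-processed vertices, is a valid route and mirrors the paper's argument in spirit; the paper differs only in decoupling the three lower bounds (proving the lower bound of~\eqref{eq:mono} inductively using the non-inductive upper bounds of~\eqref{eq:bdd_diff1} and~\eqref{eq:bdd_diff2}, then deriving the other two lower bounds afterward in a short chain) rather than carrying all three through the induction.

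There is, however, a genuine error in the step you yourself single out as the delicate one. You take the deletion in-edge $(i-1,d+1)\to(i,d)$ to have arbitrary weight $w_3\in\zo$ and assert that the sandwich bound ``closes the loop'' when $w_3=0$. It does not: the sandwich only supplies $c(i-1,d+1)\ge c(i-1,d)-1$, so the candidate term $c(i-1,d+1)+w_3$ can be as small as $c(i-1,d)-1$ when $w_3=0$, which falls short of the target $c(i,d)\ge c(i-1,d)$ needed for~\eqref{eq:mono}. In fact the lemma is \emph{false} under this reading: one can build a generalized grid graph in which $c(i-1,d+1)=c(i-1,d)-1$ (an extreme value permitted by~\eqref{eq:bdd_diff1}), whereupon a zero-weight deletion edge $(i-1,d+1)\to(i,d)$ forces $c(i,d)<c(i-1,d)$ and violates~\eqref{eq:mono}. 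The correct reading of the generalized grid graph — the only one under which the lemma holds, and the one consistent with the paper's remark that ``in $G_S$ all these weights are derived from the two strings $x,y$'' — is that the matching/substitution edges $(i-1,d)\to(i,d)$ are the ones with arbitrary $\zo$ weight, while insertion and deletion edges retain fixed weight~$1$ (the ``type (iii)'' in the definition tracks the numbering of Section 2.1, not Section 2.2, where the labels are permuted). Once $w_3\equiv 1$ is fixed, your joint induction goes through cleanly.
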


\begin{proof}
The three upper bounds are immediate from the triangle inequality,
hence we only need to prove the three lower bounds.

We prove the lower bound in~\eqref{eq:mono}
by induction on the grid vertices $(i,d)$ in lexicographic order
(i.e., their row is the primary key and their diagonal is secondary). 
For the inductive step,
consider $(i,d)$ and assume the lower bound holds for all previous vertices. 
The cost of $(i,d)$ is the minimum of three values coming 
from its in-neighbors, and at least one of these values is tight.
We thus have three cases. 
In the first one, the value coming from in-neighbor $(i-1,d)$ is tight,
then $c(i,d) - c(i-1,d) \in \set{0,1}$ and we are done. 
The second case is when the value coming from in-neighbor $(i,d-1)$ is tight.
Using this fact, applying the induction hypothesis to $(i,d-1)$,
and then the upper bound in~\eqref{eq:bdd_diff1}, we have
\[
  c(i,d)
  = c(i,d-1) +1
  \ge c(i-1,d-1) +1
  \ge c(i-1,d) ,
\]
as required. 
The third case, where the value coming from in-neighbor $(i-1,d+1)$ is tight,
is proved similarly, and this concludes the proof of~\eqref{eq:mono}. 

The lower bound in~\eqref{eq:bdd_diff1} follows easily
by using the upper bound in~\eqref{eq:bdd_diff2}
and then the monotonicity property~\eqref{eq:mono},
we indeed obtain 
$c(i,d-1) \le c(i-1,d) + 1 \le c(i,d) + 1$.
The lower bound in~\eqref{eq:bdd_diff2} follows similarly
$c(i-1,d+1) \leq c(i-1,d) + 1 \leq c(i,d) + 1$. 
\end{proof}

\section{Sublinear Algorithm for Quadratic Gap} 
\label{sec:subLinear}

In this section we present our main sublinear algorithm,
which combines two techniques.
The first one, explained in Section~\ref{sec:prelims}, 
is to sample uniformly rows in the grid graph $G_{x,y}$ %
and compute a shortest path in the resulting (sampled) graph $G_S$. 
The second technique scans the grid graph more selectively
in the sense of skipping some vertices in an adaptive manner. 
While this technique is known, e.g., from \cite{LMS98},
our version (presented in Section~\ref{sec:LMS}) differs from previous work 
because it scans the graph row by row. 
We then explain (in Section~\ref{sec:modetransition})
our main technical insight that allows to adaptively switch
between the two aforementioned techniques,
which correspond to uniform sampling and reading contiguous blocks (from $x,y$).
We are then ready to present the algorithm itself
(in Section~\ref{sec:AlgorithmDescription}),
followed by an analysis of its correctness and time/query complexity
(in Section~\ref{sec:analysis}),
and a discussion of some extensions.

\subsection{Selective Scan of the Grid Graph}
\label{sec:LMS}

We will make use of a technique developed in~\cite{UKK85,LMS98,LV88,Myers86}
that scans the grid graph $G_{x,y}$ more selectively,
and yet is guaranteed to compute a shortest path from $(0,0)$ to $(n,0)$. 
By itself, this technique does not yield asymptotic improvement
over a naive scan of all the vertices,
however it is crucial to our actual algorithm 
(and also to the algorithm of~\cite{LMS98},
which uses a variant of this technique
where the grid graph is scanned in ``waves'' rather than row by row). 
Along the way, we introduce three notions (dominated, potent and active)
that may apply to a diagonal $d$ at row $i$, i.e., to a vertex $(i,d)$.
To simplify the exposition, we do not discuss all the boundary cases,
and objects that do not exist (like row $-1$ or character $x_{n+1}$)
should be ignored (e.g., omitted from a minimization formula).

\paragraph{Dominated vertices}
Let $(i,d)$ be a vertex in $G_{x,y}$ of cost $h=c(i,d)$. 
If any of its in-neighbors $(i,d-1)$ and $(i-1,d+1)$ has cost $h-1$
then we say that $(i,d)$ is \emph{dominated} by that in-neighbor. 

The following observation may allow us to ``skip'' 
a dominated vertex when computing a shortest path.
Suppose that $(i,d)$ is dominated by $(i,d-1)$,
see Figure~\ref{fig:domination} for illustration
(when dominated by $(i-1,d+1)$, an analogous argument applies).
If diagonal $d-1$ \emph{has a match} at row $i+1$,
defined as $x_{i+1}=y_{i+1+(d-1)}$, 
then there exists a shortest path to $(i+1,d)$ that avoids vertex $(i,d)$, 
by going for example through $(i,d-1) \to (i+1,d-1) \to (i+1,d)$.
Notice that vertex $(i+1,d)$ must be dominated too. %
If, however, diagonal $d-1$ \emph{has a mismatch} at row $i+1$,
defined as $x_{i+1}\neq y_{i+1+(d-1)}$,
then it might be that every shortest path to $(i+1,d)$ passes through $(i,d)$. 
Notice that now $(i+1,d)$ may or may not be dominated.

\ifprocs
\begin{figure*}
\else
\begin{figure}[t]
\fi
	\begin{tikzpicture}[scale=1.5,shorten >=1mm,>=latex]
	\tikzstyle gridlines=[color=black!20,very thin]

	\draw[fill,color=black!50] (1,1) circle (0.65mm);
	\node[color=blue] at (1.75,1) {$c(i,d)=h$};
	
	\draw[fill,color=black!50] (0,1) circle (0.65mm);

	\node[color=blue] at (-1.1,1) {$c(i,d-1)=h-1$};
	
	\draw[->,dashed,thick] (0,1)--(1,1);

	\draw[fill,color=black!50] (1,0) circle (0.65mm);
	\node[color=blue] at (1.9,0) {$c(i+1,d)=h$};
	
	\draw[->,dashed,thick] (1,1)--(1,0);
	
	\node[color=blue] at (1.25,0.5) {$0/1$};
	
	\draw[fill,color=black!50] (0,0) circle (0.65mm);

	\node[color=blue] at (-1.5,0) {$c(i+1,d-1)=h-1$};
	
	\draw[->,thick] (0,1)--(0,0);
	
	\node[color=blue] at (0.2,0.5) {$0$};
	
	\draw[->,thick] (0,0)--(1,0);

	\draw[fill,color=black!50] (7,1) circle (0.65mm);
	\node[color=blue] at (7.75,1) {$c(i,d)=h$};
	
	\draw[fill,color=black!50] (6,1) circle (0.65mm);

	\node[color=blue] at (4.9,1) {$c(i,d-1)=h-1$};
	
	\draw[->,thick] (6,1)--(7,1);

	\draw[fill,color=black!50] (7,0) circle (0.65mm);
	\node[color=blue] at (8,0) {$c(i+1,d)=h$};
	
	\draw[->,thick] (7,1)--(7,0);
	
	\node[color=blue] at (7.25,0.5) {$0$};
	
	\draw[fill,color=black!50] (6,0) circle (0.65mm);

	\node[color=blue] at (4.8,0) {$c(i+1,d-1)=h$};
	
	\draw[->,dashed,thick] (6,1)--(6,0);
	
	\node[color=blue] at (6.2,0.5) {$1$};
	
	\draw[->,dashed,thick] (6,0)--(7,0);

	\end{tikzpicture}
	\caption{When $(i,d)$ is dominated by $(i,d-1)$,
          diagonal $d-1$ can have a match at row $i+1$ (shown on left)
          or a mismatch (on right). 
          Edges used by a shortest path to $(i+1,d)$ are drawn as solid, 
          and other edges are dashed.
        }
	\label{fig:domination}
	\vspace{0.1cm}
        \hrule
	\vspace{-0.1cm}
\ifprocs
\end{figure*}
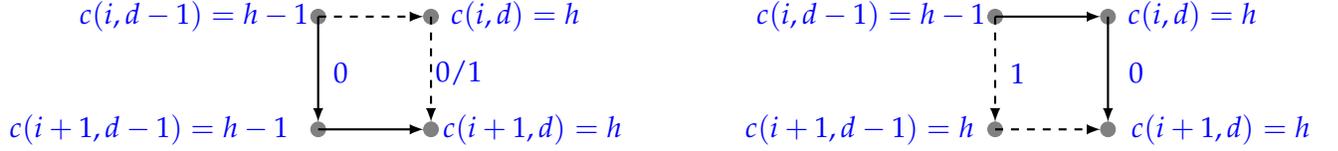
\else
\end{figure}
\fi

\paragraph{Potent vertices}
To formalize the above observation, we now define potent vertices
and assert that it suffices to inspect only such vertices.

\begin{definition} %
	\label{def:potent} 
	We say that diagonal $d$ is \emph{potent} at row $i$ 
	if it satisfies the two requirements: 
	\begin{itemize} \compactify
		\item if $(i,d)$ is dominated by $(i,d-1)$,
		then we require that diagonal $d-1$ is potent at row $i$
		and has a mismatch at row $i+1$; and
		\item if $(i,d)$ is dominated by $(i-1,d+1)$
		then we require that diagonal $d+1$ is potent at row $i-1$
		and has a mismatch at row $i$.
	\end{itemize}
\end{definition}

Notice that if $(i,d)$ is not dominated, 
then both requirements are vacuous, and thus $(i,d)$ is potent.
In particular, the source $(0,0)$ is not dominated and thus potent. 

The three lemmas below show that information whether vertices are potent 
is very useful for computing shortest paths from the source,
i.e., vertex costs.
Informally, the first lemma shows that our algorithm can restrict its attention
to paths that consist of potent vertices 
(at least one, because the source is potent by definition) 
followed by non-potent vertices (zero or more).
In particular, for a potent vertex,
the path to it passes only through potent vertices. 
The next two lemmas show that information whether a vertex is potent or not 
can sometimes make the computation of vertex costs trivial. 
The proofs of all three lemmas appear in Appendix~\ref{app:appendixPotential}.

\begin{lemma}
\label{lem:potential}
Every vertex in the grid graph $G_{x,y}$ has a shortest path from $(0,0)$, 
in which non-potent vertices appear only after potent vertices. 
\end{lemma}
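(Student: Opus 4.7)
The plan is to prove, by induction on $(i,d)$ in any topological order of $G_{x,y}$ (say row-major), the following slight strengthening of the lemma: every vertex $(i,d)$ admits a shortest path from $(0,0)$ whose vertices form a (possibly empty) potent prefix followed by a (possibly empty) non-potent suffix, and moreover the entire path can be chosen potent whenever $(i,d)$ itself is potent. The lemma is immediate from the first clause, and the base case $(0,0)$ is not dominated, hence potent, with the trivial single-vertex path.

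For the inductive step, if the target $(i,d)$ is non-potent I pick any optimal predecessor, invoke the hypothesis for it, and append $(i,d)$ to the non-potent suffix. If $(i,d)$ is potent I must instead exhibit a \emph{potent} optimal predecessor so that I can invoke the ``entirely potent'' clause of the hypothesis. When $(i,d)$ is dominated by $(i,d-1)$ or by $(i-1,d+1)$, the definition of potency hands me a potent dominator for free, so the only real case is when $(i,d)$ is undominated; then the unique optimal predecessor is $(i-1,d)$, and the key claim becomes that $(i-1,d)$ is necessarily potent.

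I would prove this key claim by contradiction, splitting on $c(i-1,d)$. If $c(i-1,d)=h-1$ (so $x_i\neq y_{i+d}$), then any purported dominator of $(i-1,d)$ would have cost $h-2$, and a single outgoing edge of weight at most $1$ would force $c(i,d-1)\le h-1$ or $c(i-1,d+1)\le h-1$, contradicting the non-domination of $(i,d)$. If $c(i-1,d)=h$ (so $x_i=y_{i+d}$), assume $(i-1,d)$ is dominated by $(i-1,d-1)$ (the $(i-2,d+1)$ branch is symmetric), and trace the chain of ``why is this vertex non-potent'' starting at $(i-1,d)$: whenever non-potency is inherited from a non-potent dominator, move to that dominator. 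The same bounded-difference comparisons as above force the chain to stay in row $i-1$, since a diagonal turn would push $c(i-1,d)$ strictly below $h$. Costs strictly decrease along the chain, so it must terminate at a potent vertex $(i-1,d-K)$; the non-potency of its chain-successor must then stem from a failed mismatch, giving $x_i=y_{i+d-K}$. A match edge into $(i,d-K)$ followed by $K-1$ insertion edges then produces a path to $(i,d-1)$ of cost $(h-K)+(K-1)=h-1$, contradicting $c(i,d-1)\ge h$.

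The main obstacle will be executing the chain argument tightly: at every step one must use bounded-difference inequalities such as~(\ref{eq:mono})--(\ref{eq:bdd_diff2}) to rule out the ``wrong'' dominator branch, verify that the chain terminates (costs are non-negative and strictly decreasing along the chain), and check that the alternative path built at chain-end is a valid grid path. Once the key claim is in hand, the induction closes in one line.
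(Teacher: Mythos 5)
Your proposal is correct and shares the high-level skeleton of the paper's proof: induction in lexicographic order, with the only non-trivial case being an undominated (hence potent) target $(i,d)$, for which one must show that the unique optimal predecessor $(i-1,d)$ is itself potent, by assuming otherwise, tracing a chain of dominators from $(i-1,d)$ down to a potent vertex with a cost-$0$ outgoing (match) edge, and then exhibiting an alternative shortest path into $(i,d)$ that enters via an insertion/deletion edge, contradicting non-domination. Where you genuinely differ is in the structure of that chain. You first split on $c(i-1,d)\in\{h-1,h\}$ (the first case is dispatched immediately), and in the $c(i-1,d)=h$ case you use the bounded-difference inequalities \eqref{eq:mono}--\eqref{eq:bdd_diff2} to force the chain to lie entirely in row $i-1$ (or, in the symmetric branch, entirely in one column $i+d=\text{const}$), so that the final alternative path is simply a match edge followed by a straight run of insertion (resp.\ deletion) edges. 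The paper instead allows the chain $w_1,\dots,w_l$ to wander freely between rows and diagonals, and constructs the alternative path by ``commuting'' the match edge at $w_l$ past the entire sequence of insertion/deletion moves, i.e., a parallel transport of the chain one match-step forward, which lands at $v$ with cost $c(w_l)+(l-1)$ and enters via a cost-$1$ edge. Your straight-chain route costs a bit more work up front (the case split and the diagonal-turn impossibility argument, which you correctly flag as the delicate part) but yields an elementary final path with no commutation step to verify; the paper's route is shorter once one trusts the transport argument. Both close the same contradiction, so your approach is sound and a legitimate, slightly more hands-on alternative.
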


\begin{lemma}
\label{lem:imp}
If $(i,d)$ is non-potent, then $c(i+1,d)=c(i,d)$.
\end{lemma}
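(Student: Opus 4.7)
The plan is to reduce the claim to the inequality $c(i+1,d) \le c(i,d)$, since the matching lower bound $c(i+1,d) \ge c(i,d)$ is immediate from the monotonicity property~\eqref{eq:mono}, which applies to $G_{x,y}$ because it is a special case of the generalized grid graph of Section~\ref{sec:GenGridGraph}. So the whole task is to exhibit, for any non-potent $(i,d)$, a path from $(0,0)$ to $(i+1,d)$ of cost at most $c(i,d)$.

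I would carry this out by induction on $(i,d)$ in lexicographic order (row primary, diagonal secondary). The base case is trivial: the source is undominated and hence potent, so the implication holds vacuously there.

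For the inductive step, non-potency forces $(i,d)$ to be dominated (otherwise both clauses of Definition~\ref{def:potent} are vacuous), and at least one of the two clauses must fail. If $(i,d)$ is dominated by $(i,d-1)$, the failure means either diagonal $d-1$ has a match at row $i+1$, or $(i,d-1)$ is itself non-potent. In the match subcase, I would follow the shortcut path $(i,d-1) \to (i+1,d-1) \to (i+1,d)$, which has total cost $c(i,d-1) + 0 + 1 = c(i,d)$, as desired. In the non-potent subcase, the induction hypothesis applied to the lex-smaller vertex $(i,d-1)$ gives $c(i+1,d-1) = c(i,d-1)$, and appending the insertion edge to $(i+1,d)$ again yields $c(i+1,d) \le c(i,d)$. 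The case where $(i,d)$ is dominated instead by $(i-1,d+1)$ is entirely symmetric, using the path $(i-1,d+1) \to (i,d+1) \to (i+1,d)$ and, in the recursive subcase, applying the induction hypothesis to the lex-smaller vertex $(i-1,d+1)$.

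The main obstacle I anticipate is the apparent circularity in the definition of potency itself: whether $(i,d)$ is potent refers to the status of its dominating neighbors $(i,d-1)$ or $(i-1,d+1)$, whose non-potency in turn refers to further neighbors. Ordering vertices lexicographically is the clean way to break this circularity, since it guarantees that the relevant dominating neighbor strictly precedes $(i,d)$, so that the induction hypothesis is available precisely when it is needed.
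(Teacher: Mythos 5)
Your proposal is correct and follows essentially the same approach as the paper: reduce to the one-sided inequality $c(i+1,d)\le c(i,d)$ via the monotonicity property~\eqref{eq:mono}, then induct on vertices in lexicographic order, using the failing clause of Definition~\ref{def:potent} to construct a cheap detour through whichever in-neighbor dominates $(i,d)$ (either a zero-cost match edge when the neighbor is potent with a match, or the induction hypothesis when the neighbor is itself non-potent). Your single lexicographic induction is a somewhat tidier packaging of the paper's ``double induction'' on rows and then on non-potent diagonals within a row, and you also dispense with the paper's separate treatment of the easy case where $(i+1,d)$ has a match, but the underlying argument is the same.
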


\begin{lemma}
\label{lem:mismatchPotent}
Suppose $(i,d)$ is potent.
If $(i+1,d)$ has a mismatch, then $c(i+1,d)=c(i,d)+1$.
Otherwise (it has a match), $c(i+1,d)=c(i,d)$ and $(i+1,d)$ is potent.
\end{lemma}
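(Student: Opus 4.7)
I plan to prove both assertions by simultaneous induction on the vertex $(i,d)$ in lexicographic order (row primary, diagonal secondary); the arguments below invoke the lemma at $(i,d-1)$ (same row, smaller diagonal) and at $(i-1,d+1)$ (smaller row), both of which strictly precede $(i,d)$ in this order. The upper bounds $c(i+1,d)\le c(i,d)$ in the match case and $c(i+1,d)\le c(i,d)+1$ in the mismatch case are immediate from the weights of the substitution/matching edge $(i,d)\to (i+1,d)$. The lower bound $c(i+1,d)\ge c(i,d)$ that handles the match case is exactly the monotonicity property~\eqref{eq:mono}.

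The real work is ruling out two ``bad'' configurations that arise identically in showing $c(i+1,d)\ge c(i,d)+1$ in the mismatch case and in showing that $(i+1,d)$ is potent in the match case. In either situation the assumption forces $(i+1,d)$ to be dominated by one of its in-neighbors other than $(i,d)$ (the edge from $(i,d)$ contributes $c(i,d)+1$, which cannot be the minimizer). So either $c(i+1,d-1)=c(i,d)-1$ or $c(i,d+1)=c(i,d)-1$.

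In the first configuration, combining the monotonicity bound $c(i+1,d-1)\ge c(i,d-1)$ from~\eqref{eq:mono} with the upper bound $c(i,d-1)\ge c(i,d)-1$ from~\eqref{eq:bdd_diff1} forces $c(i,d-1)=c(i,d)-1$. Hence $(i,d)$ is dominated by $(i,d-1)$, and Definition~\ref{def:potent} (first bullet) supplies that diagonal $d-1$ is potent at row $i$ and has a mismatch at row $i+1$. The induction hypothesis applied to $(i,d-1)$ (mismatch case) yields $c(i+1,d-1)=c(i,d-1)+1=c(i,d)$, contradicting $c(i+1,d-1)=c(i,d)-1$. In the second configuration, \eqref{eq:bdd_diff2} applied at $(i,d)$ combined with monotonicity~\eqref{eq:mono} applied at $(i,d+1)$ forces $c(i-1,d+1)=c(i,d)-1$, so $(i,d)$ is dominated by $(i-1,d+1)$. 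Definition~\ref{def:potent} (second bullet) then supplies that diagonal $d+1$ is potent at row $i-1$ with a mismatch at row $i$, and the induction hypothesis applied to $(i-1,d+1)$ gives $c(i,d+1)=c(i-1,d+1)+1=c(i,d)$, again a contradiction. Since both configurations are impossible, the cost equalities hold, and in the match case $(i+1,d)$ is not dominated at all, hence vacuously potent.

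The main obstacle I anticipate is setting up the induction cleanly: the two inductive calls reduce different coordinates (one the diagonal, the other the row), so no single monotone parameter such as row index will suffice on its own. Lexicographic order on $(i,d)$ resolves this, but I would also sanity-check the boundary cases where $(i,d-1)$ or $(i-1,d+1)$ falls outside the valid vertex set: there the corresponding ``dominated by'' relation cannot even be asserted, so the bad configuration is vacuous and the argument is trivial.
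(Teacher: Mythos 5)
Your proof is correct and follows essentially the same route as the paper's: lexicographic induction, reducing the bad case to the situation where $(i,d)$ is dominated by $(i,d-1)$ or $(i-1,d+1)$, then invoking Definition~\ref{def:potent} to get potency plus a mismatch at the dominating neighbor so that the induction hypothesis forces its cost to rise by one. The paper phrases this as a direct two-case bound on $c(i+1,d-1)$ and $c(i,d+1)$ rather than your unified contradiction, but the underlying argument and the facts used (\eqref{eq:mono}, \eqref{eq:bdd_diff1}, \eqref{eq:bdd_diff2}, Definition~\ref{def:potent}) are identical; note only that your parenthetical ``cannot be the minimizer'' applies just to the mismatch case, while in the match case ``dominated'' already excludes $(i,d)$ by definition.
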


\begin{remark}
Lemmas~\ref{lem:potential}, \ref{lem:imp} and~\ref{lem:mismatchPotent}
hold also for a generalized grid graph
(as defined in Section~\ref{sec:GenGridGraph}).
\end{remark}

The next challenge is to find the potent vertices algorithmically
without scanning the entire grid graph.

\paragraph{Active vertices} 
We now describe an algorithm that computes iteratively for each row $i=0,1,\ldots,n$ 
a list $\Diag_i$ of diagonals that we call \emph{active} diagonals. 
The idea is that $\Diag_i$ will be a superset of the potent diagonals
at row $i$, 
and that scanning each list $\Diag_i$ 
will create $\Diag_{i+1}$ for the next row. 
However, this description is an over-simplification, 
because $\Diag_i$ itself might change while being scanned, as explained next. 

Formally, the algorithm starts with the following initialization.
It sets $\Diag_0=\set{0}$ and every other list $\Diag_i=\emptyset$.
It then creates an array $\cA$ to store the costs of every diagonal
(at the current row, see more below),
and sets $\cA[d]=\abs{d}$ for every $d\in[-n..n]$. 

The algorithm then iterates over the rows $i=0,1,\ldots,n$,
where each iteration $i$ scans $\Diag_i$ in increasing order. 
To process each scanned $d\in\Diag_i$, the algorithm first determines
whether $(i,d)$ is potent using Definition~\ref{def:potent},
except for vertex $(0,0)$ which is potent by definition.
This requires comparing the cost of $(i,d)$ to its two in-neighbors. 
\footnote{If an in-neighbor does not exist (e.g., out of range),
  we consider it to be  non-dominating.}
We show in Lemma~\ref{lem:costCorrectness} that at this point we have: 
$\cA[d]=c(i,d)$, $\cA[d-1]=c(i+1,d-1)$, and $\cA[d+1]=c(i,d+1)$. In order to check
whether $(i,d)$ is potent, we need the values of $c(i,d-1)$ and $c(i-1,d+1)$ along with the information
whether $(i, d-1)$ and $(i-1,d+1)$ are potent. By maintaining $\cA[d]$ values for the last two computed rows and two simple boolean arrays to indicate whether a diagonal $d'$ is potent at the current row and the previous row (by default a diagonal is not potent), the necessary information to compute whether $(i,d)$ is dominated and potent is available to us.

If $(i,d)$ is dominated, the algorithm further checks
whether each dominating in-neighbor is potent
and whether it has a mismatch at the relevant row.%
Now $\cA$ is updated: 
if $(i,d)$ is determined to be potent and $(i+1,d)$ has a mismatch,
then $\cA[d]$ is incremented by $1$.

If $(i,d)$ is not potent, then $d$ is discarded from $\Diag_i$, 
and the algorithm proceeds to scan the next diagonal in $\Diag_i$.
Otherwise, i.e., $(i,d)$ is potent, 
the algorithm adds $d$ to $\Diag_{i+1}$
(because it may be potent at row $i+1$),
and then checks whether $(i+1,d)$ has a mismatch; 
if it has, then $d+1$ is added to $\Diag_i$ and $d-1$ added to $\Diag_{i+1}$
(because these vertices may be potent).
Observe that processing $d\in\Diag_i$ may cause adding to $\Diag_i$
diagonal $d+1$, which obviously should be processed next; 
this means that $\Diag_i$ is scanned adaptively, but still in increasing order. 

The correctness of this procedure is based on the next two lemmas,
whose proofs appear in Appendix~\ref{app:appendixPotential}.

\begin{lemma}\label{lem:ComputingPotential}
In this procedure, every potent vertex $(i,d)$
is eventually inserted to $\Diag_i$.
Moreover, at the end of iteration $i$ (scanning $\Diag_i$), 
the diagonals in $\Diag_i$ represent exactly the potent vertices at row $i$.
\end{lemma}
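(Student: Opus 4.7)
The plan is to prove the lemma by induction on row $i$, strengthening the claim so that at the end of iteration $i$, the list $\Diag_i$ equals exactly the set of potent diagonals at row $i$, while also tracking what ``seeds'' iteration $i-1$ deposited into $\Diag_i$ before iteration $i$ begins. Before attempting the induction, I would quote (or rely on) Lemma~\ref{lem:costCorrectness} to guarantee that the $\cA$ array and the stored potency bits for the last two rows are up to date when the algorithm inspects a vertex, so that the per-vertex test of Definition~\ref{def:potent} is executed with correct information.

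For the base case $i=0$, the source $(0,0)$ is potent by definition and $\Diag_0=\{0\}$ initially. The adaptive scan extends $\Diag_0$ rightwards: whenever a processed diagonal $d$ is confirmed potent and $(1,d)$ has a mismatch, the rule adds $d+1$ to $\Diag_0$. A short side-induction on $d$ shows that the diagonals added this way match exactly the first bullet of Definition~\ref{def:potent} at row $0$ (the second bullet is vacuous there), and no non-potent diagonal sneaks in.

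For the inductive step, I would analyze a potent vertex $(i,d)$ by splitting along Definition~\ref{def:potent}. In case (A) $(i,d)$ is non-dominated: then $c(i,d-1),c(i-1,d+1)\ge c(i,d)$, so every shortest path to $(i,d)$ must enter via $(i-1,d)$, and Lemma~\ref{lem:potential} applied to $(i,d)$ (potent implies its shortest-path predecessor is also potent) forces $(i-1,d)$ to be potent; by induction $d\in\Diag_{i-1}$, and when processed in iteration $i-1$ the rule ``add $d$ to $\Diag_{i+1}$'' (with the index shifted by one) places $d$ into $\Diag_i$. In case (B) $(i,d)$ is dominated by the potent vertex $(i-1,d+1)$ whose diagonal mismatches at row $i$: by induction $d+1\in\Diag_{i-1}$, and the mismatch triggers the rule that appends $(d+1)-1=d$ to $\Diag_i$. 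In case (C) $(i,d)$ is dominated by the potent vertex $(i,d-1)$ whose diagonal mismatches at row $i+1$: nesting an induction on $d$ inside iteration $i$, I would show $d-1$ has already been added and processed (since $\Diag_i$ is scanned in increasing order), at which point the rule that adds $d'+1$ to the current row places $d$ into $\Diag_i$, so $d$ is processed later in the same scan.

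For exactness, I would verify the converse: any non-potent diagonal that enters $\Diag_i$ is discarded when scanned, because the algorithm tests Definition~\ref{def:potent} directly using $\cA$ and the potency flags; this is a routine check. The main obstacle is case (C), where the algorithm must both \emph{discover} new potent diagonals mid-scan and process them before iteration $i$ ends. The proof hinges on showing that the chain of case-(C) dependencies $d\to d-1\to d-2\to\cdots$ is finite and always traces back to a case-(A) or case-(B) ``seed'' — this is justified because each step in the chain strictly decreases $d$, and the chain must terminate at the smallest potent diagonal of row $i$, which by definition cannot be dominated by $(i,d-1)$ and hence falls under case (A) or (B).
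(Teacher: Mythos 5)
Your proof is correct and takes essentially the same route as the paper: both invoke Lemma~\ref{lem:potential} to produce a shortest path consisting of potent vertices, and both run a lexicographic induction (your "induction on rows with a nested induction on $d$ within a row" is exactly lexicographic order on $(i,d)$), with the three cases each reducing to a potent in-neighbor already being in the appropriate list and the algorithm's append rules doing the rest. Your case split is by domination status rather than the paper's "which in-neighbor the potent shortest path enters from," but in case (A) the non-domination assumption forces every shortest path to enter via $(i-1,d)$, so the two classifications collapse to the same three conclusions; your base-case treatment is actually a touch more careful than the paper's (the paper's assertion that $d=0$ is the only potent diagonal at row $0$ is not literally true, and your side-induction on $d$ using the adaptive append rule handles the others correctly).

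One thing worth flagging: you propose to "quote or rely on" Lemma~\ref{lem:costCorrectness} up front to justify that the algorithm's potency tests are executed with correct data, but that lemma in turn implicitly needs the present one — for a diagonal $d'$ that the algorithm never places in $\Diag_i$, the entry $\cA[d']$ is left unchanged, and that is only correct if $(i,d')$ is guaranteed non-potent, which is precisely what Lemma~\ref{lem:ComputingPotential} supplies. The paper leaves this mutual dependency implicit as well; the clean resolution is to fold both statements into a single joint lexicographic induction hypothesis, rather than citing Lemma~\ref{lem:costCorrectness} as a prerequisite.
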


\begin{lemma}\label{lem:costCorrectness}
Prior to iteration $i+1$ (or equivalently after iteration $i$),
every $\cA[d']$ stores the value $c(i+1,d')$. 
More precisely, after processing diagonal $d\in\Diag_i$, 
each $\cA[d']$ has value 
$c(i,d')$ for $d'>d$ and 
$c(i+1,d')$ for $d'\le d$.
\end{lemma}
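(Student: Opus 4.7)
The plan is to prove Lemma~\ref{lem:costCorrectness} by a nested induction: an outer induction on the row index $i$, and an inner induction on the processing order of diagonals within iteration $i$. I first establish the base case: before iteration $0$ begins, the initialization $\cA[d'] = |d'|$ matches $c(0,d')$, since from the source $(0,0)$ the only way to reach $(0,d')$ for $d' \ge 0$ is through $|d'|$ insertion edges (and for $d'<0$, boundary conventions either set $c(0,d')=\infty$ or render these entries irrelevant, as $\Diag_0 = \{0\}$).

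For the inductive step, I would assume $\cA[d']=c(i,d')$ for all $d'$ at the start of iteration $i$, and track the invariant as $\Diag_i$ is scanned in increasing order. The crucial structural observation is that the algorithm touches $\cA$ only at position $d$ while processing $d \in \Diag_i$, and only by possibly incrementing it by one. After processing $d$ I would verify three things:

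\textbf{(a)} For $d'' > d$, the value $\cA[d'']=c(i,d'')$ is unchanged, matching the claim. \textbf{(b)} For $d''<d$, there are three sub-cases: if $d''$ was never inserted into $\Diag_i$, then by Lemma~\ref{lem:ComputingPotential} the vertex $(i,d'')$ is not potent, so by Lemma~\ref{lem:imp} we have $c(i+1,d'')=c(i,d'')=\cA[d'']$; if $d''$ entered $\Diag_i$ but was discarded as non-potent during its own processing, the same argument applies; if $d''$ was processed as potent, the inner inductive hypothesis gives $\cA[d''] = c(i+1,d'')$. \textbf{(c)} For the current diagonal $d$, the update is correct in each of three cases. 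If $(i,d)$ is not potent, $\cA[d]$ stays at $c(i,d)$, which equals $c(i+1,d)$ by Lemma~\ref{lem:imp}. If $(i,d)$ is potent and $(i+1,d)$ has a match, $\cA[d]$ is unchanged and Lemma~\ref{lem:mismatchPotent} gives $c(i+1,d)=c(i,d)$. If $(i,d)$ is potent and $(i+1,d)$ has a mismatch, the algorithm increments $\cA[d]$, and Lemma~\ref{lem:mismatchPotent} confirms $c(i+1,d)=c(i,d)+1$.

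Taking the inner induction through the last diagonal of $\Diag_i$ establishes $\cA[d']=c(i+1,d')$ for all $d'$, which closes the outer induction. The main obstacle I foresee is the adaptive growth of $\Diag_i$ during scanning: when processing $d$, the algorithm may append $d+1$ to $\Diag_i$, so the inner induction must be phrased over the processing sequence rather than the numerical list, and I must justify that by the time $d$ is processed, every smaller diagonal $d'' < d$ that would ever be added to $\Diag_i$ has already been added and processed. This follows because the algorithm always inserts $d+1$ (never a smaller index) back into the current row during processing of $d$, so the processing sequence still visits diagonals in strictly increasing order, keeping the invariant in case~(b) intact.
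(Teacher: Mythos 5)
Your proof is correct and follows essentially the same inductive scheme as the paper's (inducting along the algorithm's processing order and discharging the current diagonal via Lemmas~\ref{lem:imp} and~\ref{lem:mismatchPotent}). You are, if anything, more careful than the paper in case~(b): you explicitly invoke Lemma~\ref{lem:ComputingPotential} together with Lemma~\ref{lem:imp} to handle diagonals $d''<d$ that are never inserted into $\Diag_i$ or are discarded as non-potent, and you correctly address the adaptive growth of $\Diag_i$ by observing that only $d+1$ is ever appended mid-scan, points which the paper's terser argument leaves implicit.
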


\subsection{Transitioning between Sampling Modes}
\label{sec:modetransition}

Recall that the algorithm presented in Section~\ref{sec:GridSampling} has asymmetric query complexity --- it samples string $x$ uniformly at rate $\frac{\log n}{t}$ but may query the entire string $y$, as it compares each sampled coordinate $x_i$ against $y_{i+d}$ for all possible $d\in \pmt$.
In order to improve the query complexity to $\tilde{O}(\frac{n}{t}+t^3)$ and prove Theorem~\ref{thm:main}, our algorithm will alternate between uniform sampling and contiguous (non-uniform) sampling.
However, during the sampling mode, it is still prohibitive to compare
each sampled coordinate $x_i$ against $y_{i+d}$ for all possible $d\in \pmt$.
Instead, we would like to leverage the information given by $\Diag_i$. 
If $\card{\Diag_i}=1$, it suffices to compare $x_i$ against $y_{i+d}$
only for the single $d\in \Diag_i$.
And if $\card{\Diag_i} > 1$,
then Lemma~\ref{lemma:periodicity} (part a) below guarantees that 
both $x$ and $y$ follow the same periodic pattern (coming into the current row),
in which case we switch to uniform sampling,
and merely ``verify'' that the periodicity continues (in the next rows)
by comparing these samples to our pattern. 
Obviously, this verification is probabilistic, 
but by Claim~\ref{claim:badSamplingEvent}, with high probability
it holds up to $O(t)$  Hamming errors (equivalently, character substitutions). 
When we see a sample that deviates from the periodicity in either $x$ or $y$,
we recompute $\Diag_i$ and start over,
using Lemma~\ref{lemma:periodicity} (part b)
to argue that almost all diagonals in $\Diag_i$ ``must see'' edit operations,
which ``count against'' our budget of $t$ edit operations.

\begin{lemma}\label{lemma:periodicity}
	Let $x,y\in \Sigma^n$, let $i\in [n]$ and let $\Diag\subseteq \pmt$
	be a set of diagonals of size $\card{D}>1$.
	Define $g=\gcd\set{d-d':\ d>d'\in \Diag}$, $p=x_{[i-g+1..i]}$
	and $m=\max\Diag - \min\Diag$. 
	
	\begin{enumerate}
        \item[(a)] If 
		\begin{equation} \label{eq:diag}
		\forall d\in\Diag, \qquad
		x_{[i-2m+1.. i]} = y_{[i-2m+1.. i]+d} 
		\end{equation}
		then $x_{[i-2m+1..i]}$
		and $y_{[i-2m+1+\min{\Diag}..i+\max{\Diag}]}$ 
		are both periodic with the same period pattern $p$. %
		
       \item[(b)] Assume the conclusion of part (a) holds
                (i.e., $x_{[i-2m+1.. i]}$ and $y_{[i-2m+1+\min{\Diag}.. i+\max{\Diag}]}$ are both periodic with same period pattern $p$)
		but either $x_{i+1}\neq p_{1}$ or $y_{i+1+\max \Diag}\neq p_{1}$ (observe that $p_1 = p_{(i+1-2m+1)\bmod g}$).
		Then each diagonal in $\Diag$ except perhaps at most one, 
		has a mismatch in at least one of the $m$ rows $i+1,\ldots,i+m$.
	\end{enumerate}

\end{lemma}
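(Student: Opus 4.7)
My plan is to first show that $x_{[i-2m+1..i]}$ has period $g$ via a Fine--Wilf argument, and then transfer this structure to $y$. From the hypothesis $x_j = y_{j+d}$ for every $j \in [i-2m+1..i]$ and every $d \in \Diag$, fix any two diagonals $d > d'$ in $\Diag$ and set $\delta \eqdef d - d' \le m$. For each $j \in [i-2m+1..i-\delta]$ I chain $x_j = y_{j+d} = y_{(j+\delta)+d'} = x_{j+\delta}$, so $x_{[i-2m+1..i]}$ admits the period $\delta$. Since any two such periods $\delta_1,\delta_2$ satisfy $\delta_1+\delta_2 \le 2m$, Fine--Wilf applies inside the length-$2m$ window; iterating over all pairs in $\Diag$ yields period $g$. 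Because $p = x_{[i-g+1..i]}$ is the last $g$-block of the window, the window is exactly a suffix of a power of $p$. For $y$, each $y_{[i-2m+1..i]+d}$ equals $x_{[i-2m+1..i]}$ and hence follows pattern $p$; consecutive shifts (in sorted $\Diag$) overlap in at least $m \ge g$ positions, so all these periodic copies are consistently anchored, and their union $y_{[i-2m+1+\min\Diag..i+\max\Diag]}$ is periodic with the same pattern $p$.

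\textbf{Part (b), the benign case $x_{i+1}\neq p_1$.} For every $d \in \Diag$ with $d < \max\Diag$, position $i+1+d$ still lies inside $y$'s periodic range. Since $y_{i+d} = x_i = p_g$ is the last character of a period, the next character in the periodic extension forces $y_{i+1+d} = p_1$. Row $i+1$ on diagonal $d$ therefore witnesses the mismatch $x_{i+1} \neq p_1 = y_{i+1+d}$. The only diagonal that may escape is $d = \max\Diag$, for which $i+1+d$ falls just outside the periodic range of $y$ and is thus unconstrained.

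\textbf{Part (b), the main obstacle $y_{i+1+\max\Diag}\neq p_1$.} This case is harder because the offending $y$-position sits at the very right edge of the periodic range and, for most diagonals, pairs with $x$-positions past row $i$ that part (a) does not constrain. I plan to argue by contradiction: assume two distinct diagonals $d_1 < d_2$ in $\Diag$ both avoid mismatches in every row $r \in [i+1..i+m]$. Then $y_{r+d_1} = x_r = y_{r+d_2}$, giving $y$ the period $\delta \eqdef d_2 - d_1 \le m$ throughout $[i+1+d_1..i+m+d_2]$. This interval overlaps the periodic range from part (a) in $[i+1+d_1..i+\max\Diag]$, of length $\max\Diag - d_1 \ge \delta$, which is enough to synchronize the two period structures (note $g \mid \delta$) and extend pattern $p$ through position $i+m+d_2$. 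Since $d_2 \ge \min\Diag + 1$, we have $i+m+d_2 \ge i+1+\max\Diag$, forcing $y_{i+1+\max\Diag} = p_1$, contradicting the hypothesis. Hence at most one diagonal in $\Diag$ avoids a mismatch in rows $i+1,\ldots,i+m$, as claimed.
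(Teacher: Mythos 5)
Your proof is correct. Part (a) matches the paper's argument almost line for line: chain through two diagonals to obtain period $d-d'$ on $x$, reduce to the gcd (you invoke Fine--Wilf, the paper a Bézout argument in a footnote -- same tool), then transfer the structure to $y$ via the overlapping shifted copies. The easy subcase of part (b), $x_{i+1}\neq p_1$, is also identical to the paper's. For the harder subcase $y_{i+1+\max\Diag}\neq p_1$, your approach is genuinely different. The paper argues constructively: it defines $i_x$ as the first row $\ge i+1$ where $x$ deviates from $p$, then case-splits on whether $i_x\ge i+1+m$ or not, explicitly locating a mismatch row for every diagonal except the one escaping diagonal $d'=i+1+\max\Diag-i_x$ (or $\min\Diag$ in the far case). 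You instead argue by contradiction: if two diagonals $d_1<d_2$ both avoided mismatches throughout rows $[i+1..i+m]$, then $y$ would acquire period $\delta=d_2-d_1$ on a window that overlaps the part-(a) window in $\ge\delta$ positions, and since $g\mid\delta$ the $p$-periodicity must extend to force $y_{i+1+\max\Diag}=p_1$, a contradiction. Your version is shorter and avoids the $i_x$ case analysis; the trade-off is that it is non-constructive, giving neither the identity of the escaping diagonal nor the rows at which the mismatches occur (information the algorithm happens not to need, since it simply scans the range $[j..j+m]$). One spot to tighten: ``synchronize the two period structures'' should be spelled out -- since $i+1+\max\Diag$ lies in the new period window and $i+1+\max\Diag-\delta$ lies in both windows, the new period gives $y_{i+1+\max\Diag}=y_{i+1+\max\Diag-\delta}$, and because $\delta$ is a multiple of $g$ and $y_{i+\max\Diag}=x_i=p_g$, the latter position has pattern-index $1$, whence its value is $p_1$. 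That step is routine but is the crux of the contradiction and deserves to be written explicitly.
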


The proof of this lemma appears in Section~\ref{sec:analysis}, 
after the description of our algorithm.
We remark that part (a) has been established in \cite{CGK16b}
and was used there for a different purpose.

\subsection{Our Sublinear Algorithm}\label{sec:AlgorithmDescription}

At a high level, the algorithm first picks a random set $S\subseteq[n]$,
by including in $S$ each row independently with probability $\frac{\log n}t$, 
and then proceeds in rounds, where each round processes one row. 
When a round processes row $i\in[n]$,
we will say that it scans row $i$, 
as it will process some (but not all) of its vertices $(i,d)$,
always in increasing order of $d$.
We shall also call it round $i$ (although it need not be the $i$-th round).

The algorithm has two modes of scanning the rows, 
called \emph{contiguous} and \emph{sampling}. 
Roughly speaking, the sampling mode scans only sampled rows, i.e., $i\in S$,
and at each such round it reads only two characters $x_i$ and $y_{i+d}$ 
for a \emph{single} active diagonal $d\in \Diag_i$.
These two characters are compared not to each other (unless $|\Diag_i|=1$),
but rather to a pattern determined in previous rounds.
The goal is to examine whether a certain periodicity in $x$ and $y$ is broken,
in which case the algorithm switches to the contiguous  mode. 
Observe that the sampling mode is rather lightweight -- 
it scans rows at a rate of $1/t$ and performs minimal computation per row. 

In contrast, the contiguous mode scans the rows one by one,
and at each such round $i$, it compares $x_i$ to $y_{i+d}$
for every active diagonal $d\in\Diag_i$.

This mode is applied in bulks of at least $O(t)$ consecutive rows, 
until in a bulk of $O(t)$ the set of active diagonals does not change 
(which means that none of the active diagonals sees a mismatch along these lines).
In the case that the set of active diagonals does not change we deduce a periodicity
structure on the corresponding parts of $x$ and $y$ and the algorithm switches to the sampling mode. 

We can now describe the algorithm in full detail. 
Let $i_1<\ldots <i_{\abs{S}}$ denote the rows selected into $S$. 
While scanning the rows, the algorithm maintains a counter, 
$i$ for the current row 
initialized to $i=i_1$. 
as the algorithm is started in the sampling mode.
The algorithm maintains also two lists of diagonals, 
$\Diag_i$ of active diagonals at the current round $i$
and $\Dnext$ that is constructed for the next round 
(which is either $i+1$ or the next row in $S$, depending on the mode),
initialized to $\Diag_{i_1}=\set{0}$ and $\Dnext=\emptyset$. 
During its execution, the algorithm computes and stores
an array $\cA$ to store an estimate of cost of every diagonal 
at the current row. It is initialized by setting  $\cA[d]=\abs{d}$ for every $d\in[-t..t]$.

\paragraph{The Contiguous Mode}
In the contiguous mode, the algorithm scans the rows one by one. 
At each round $i$ in this mode,
the algorithm scans the active diagonals $d\in\Diag_i$ in increasing order,
and each such $d$ is processed as follows. 
If $\cA(d)>t-\abs{d}$ then this diagonal $d$ is ignored,
i.e., its processing is concluded. 
Otherwise, the algorithm checks whether diagonal $d$ is potent for row $i$,
as defined in Definition~\ref{def:potent}
but with respect to cost function $\cA$ (instead of $c$),
implemented by comparing $\cA[d]$ to the value of $\cA[d+1]$ and $\cA[d-1]$
in previous rows.%
\footnote{This check is implemented similarly to Section~\ref{sec:LMS},
  by maintaining the costs computed in the previous two rows and the list of potent diagonals at those rows.
}
If $d$ is indeed potent, then it is added to $\Dnext$,
and if in addition, $x_{i+1}\neq y_{i+d+1}$,
then $d+1$ is added to $\Diag_i$ and $d-1$ 
is added to $\Dnext$ and $\cA[d]$ is incremented by $1$.
This completes the processing of $d\in\Diag_i$.

When the algorithm finishes scanning $\Diag_i$,
it has to decide about the next round. 
If along the last $2(\max \Diag_i-\min \Diag_i)$ rows there exists a row $i'$,
in which a mismatch was found at some diagonal $d \in \Diag_{i'}$, 
then the algorithm increments $i$ by $1$, sets $\Diag_i=\Dnext$ and $\Dnext=\emptyset$, 
and proceeds to the next round (for this new value of $i$),
staying in the contiguous mode. 

Otherwise (no mismatch was found along these rows),
the algorithm prepares to switch to the sampling mode
by computing three variables:
\begin{align*}
  &g = \gcd \set{d-d'}_{d\neq d' \in \Diag_i} 
  \\
  &\ipat = i-2(\max \Diag_i-\min \Diag_i)+1 , \\
  &p = x_{[\ipat..\ipat+g-1]}, 
\end{align*}
and then increases $i$ to the next row in $S$ (smallest one after the current $i$),
sets $\Diag_i=\Dnext$ and $\Dnext=\emptyset$,
and proceeds to the next round (for this new value of $i$) but in the sampling mode.

\paragraph{The Sampling Mode}
In the sampling mode, the algorithm processes only sampled rows $i\in S$.
The sampling mode performs one of two checks,
either \emph{periodicity check} or \emph{shift check},
depending on whether $\card{\Diag_{i}}>1$ or not.

\textbf{(i) Periodicity check:}
This check is applied only if $|\Diag_i| > 1$. 
The algorithm first checks whether
both $x_{i+1}$ and $y_{i+\max \Diag_i+1}$ are equal to $p_{(i-\ipat+1)\bmod g}$.
If both match (are equal),
the algorithm finds the next row $\inext$ in $S$ (smallest one after the current $i$), 
sets $\Diag_{\inext}=\Diag_{i}$, increases $i$ to $\inext$, 
and proceeds to the next round (for this new value of $i$),
still in the sampling mode (to perform a periodicity check
because again $\card{\Diag_{i}}>1$).

Otherwise (at least one of the two comparisons fails),
the algorithm employs binary search to detect
a row $j\in [\ipat+2(\max \Diag_i -\min \Diag_i).. i] $
with a ``period transition'', defined as $j$ satisfying the two conditions:
\begin{enumerate}
\item 
  for all $j' \in [j-2(\max \Diag_i-\min \Diag_i)..j]$,
  we have 
  $x_{j'}=y_{j'+\max \Diag_i}=p_{(j'-\ipat) \bmod g }$; and 
\item
  either $x_{j+1}$ or $y_{j+\max \Diag_i +1}$
  is not equal to $p_{(j+1-\ipat) \bmod g}$. 
\end{enumerate}
We later prove that such a $j$ must exist. 
The algorithm then finds all the diagonals $d\in \Diag_i$ that have
a mismatch in at least one row in the range $[j..j+\max \Diag_i-\min \Diag_i]$.
Lemma~\ref{lemma:periodicity} shows that this event (at least one mismatch)
must occur for all the diagonals in $\Diag_i$ except perhaps one diagonal,
which we denote by $d^*$ (if exists).
Now the algorithm sets
\begin{equation}
  \forall d\in \Diag_i, d\neq d^*,
  \qquad 
  \cA[d]=\cA[d]+1; 
\end{equation}  
and adds $d$, $d+1$, and $d-1$ to $\Dnext$.

If $d^*$ exists, the algorithm further samples 
a new set $S^* \subseteq [\ipat..i]$ at rate $\frac {\log n}{t}$,
and for each row $j'\in S^*$ it compares $x_{j'}$ to $y_{j'+d^*}$. 
If no mismatch is found in $S^*$, 
then the algorithm 
adds $d^*$  to $\Dnext$. 
Otherwise (a mismatch is found), it sets $\cA[d^*]=\cA[d^*]+1$
and adds $d^*$, $d^*+1$, and $d^*-1$ to $\Dnext$.

Finally, the algorithm 
increments $i$ by $1$, 
sets $\Diag_i=\Dnext$ and $\Dnext=\emptyset$,
and proceeds to the next round (for this new value of $i$)
but in the contiguous mode.

\textbf{(ii) Shift check:}
This check is applied only if $|\Diag_i|=1$;
let $d$ be the unique diagonal in $\Diag_{i}$. 
The algorithm compares $x_{i+1}$ and $y_{i+1+d}$. 
If they match (are equal),
the algorithm increases $i$ to the next row in $S$ (smallest one after the current $i$), 
sets $\Diag_i=\set{d}$ 
and proceeds to the next round (for this new value of $i$),
still in the sampling mode (to perform a shift check
because again $\card{\Diag_i}=1$).

Otherwise (they do not match), the algorithm sets
$\cA[d]=\cA[d]+1$
and adds $d$, $d+1$, and $d-1$ to $\Dnext$.
The algorithm then increments $i$ by $1$,
sets $\Diag_i=\Dnext$ and $\Dnext=\emptyset$,
and proceeds to the next round (for this new value of $i$)
but in the contiguous mode.

\paragraph{Stopping Condition}
The algorithm halts and outputs $\far$ if at any point the value of $\cA[0]$ reaches $t+1$.
If the algorithm completes processing the rows 
(the last one is in $S$ or in $[n]$, depending on the mode),
and still $\cA[0]\le t$,
then the algorithm halts and outputs $\close$.

\subsection{Analysis} \label{sec:analysis}
Let us start by proving lemma~\ref{lemma:periodicity}. For convenience, let us restate it:

\begin{lemma*}[\ref{lemma:periodicity}]
	Let $x,y\in \Sigma^n$, let $i\in [n]$ and let $\Diag\subseteq \pmt$
	be a set of diagonals of size $\card{D}>1$.
	Define $g=\gcd\set{d-d':\ d>d'\in \Diag}$, $p=x_{[i-g+1..i]}$
	and $m=\max\Diag - \min\Diag$. 
	
	\begin{enumerate}
        \item[(a)] If 
		\begin{equation} \label{eq:diag1}
		\forall d\in\Diag, \qquad
		x_{[i-2m+1.. i]} = y_{[i-2m+1.. i]+d} 
		\end{equation}
		then $x_{[i-2m+1..i]}$
		and $y_{[i-2m+1+\min{\Diag}..i+\max{\Diag}]}$ 
		are both periodic with the same period pattern $p$. %
		
       \item[(b)] Assume the conclusion of part (a) holds
                (i.e., $x_{[i-2m+1.. i]}$ and $y_{[i-2m+1+\min{\Diag}.. i+\max{\Diag}]}$ are both periodic with same period pattern $p$)
		but either $x_{i+1}\neq p_{1}$ or $y_{i+1+\max \Diag}\neq p_{1}$ (observe that $p_1 = p_{(i+1-2m+1)\bmod g}$).
		Then each diagonal in $\Diag$ except perhaps at most one, 
		has a mismatch in at least one of the $m$ rows $i+1,\ldots,i+m$.
	\end{enumerate}

\end{lemma*}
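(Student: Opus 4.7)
For any two diagonals $d > d' \in \Diag$ and any row $j$ with both $j$ and $j + (d-d')$ in $[i-2m+1..i]$, I would chain the given alignment identities to obtain $x_j = y_{j+d} = x_{j+(d-d')}$: the first equality applies the diagonal-$d$ identity at row $j$, the second applies the diagonal-$d'$ identity at row $j+(d-d')$. Hence $x_{[i-2m+1..i]}$, a string of length $2m$, admits every difference $d-d' \leq m$ as a period. Because $s_1 + s_2 - \gcd(s_1,s_2) \leq 2m - 1$ whenever $s_1, s_2 \leq m$, the Fine--Wilf theorem applies, and iterating it pairwise collapses all these periods into the single period $g = \gcd$; the resulting length-$g$ pattern must equal $p = x_{[i-g+1..i]}$. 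For $y$, each sub-window $y_{[i-2m+1..i]+d}$ equals this periodic $x$-window, and two sub-windows for consecutive elements of $\Diag$ overlap on at least $2m+1 - m = m+1 > g$ positions, forcing their length-$g$ patterns to align, so the pattern $p$ extends over the entire union $[i-2m+1+\min\Diag..i+\max\Diag]$.

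\textbf{Part (b).} I would introduce the bi-infinite periodic extension $P \colon \Z \to \Sigma$ of $p$ normalized so that $P_{[i-g+1..i]} = p$, which in particular gives $P_{i+1} = p_1$. Part (a) then reads $x_j = P_j$ for $j \in [i-2m+1..i]$ and $y_k = P_{k-d}$ for every $d \in \Diag$ and every $k$ in $y$'s established range. I proceed by contradiction: suppose two distinct diagonals $d_1 < d_2$ in $\Diag$ both avoid a mismatch across rows $i+1, \ldots, i+m$. If $x_{i+1} \neq p_1$, then since $d_1 < \max\Diag$ the position $i+1+d_1$ sits in $y$'s established range, so the no-mismatch condition at row $i+1$ on $d_1$ yields $x_{i+1} = y_{i+1+d_1} = P_{i+1} = p_1$, a contradiction. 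If instead $y_{i+1+\max\Diag} \neq p_1$, set $j = i+1+\max\Diag - d_2$, which lies in $[i+1..i+m]$ because $d_2 > \min\Diag$ forces $\max\Diag - d_2 < m$. The no-mismatch condition on $d_2$ at row $j$ gives $x_j = y_{j+d_2} = y_{i+1+\max\Diag} \neq p_1$, while the no-mismatch condition on $d_1$ at the same row gives $x_j = y_{j+d_1} = P_j = p_1$ (using that $j+d_1 \leq i+\max\Diag$ lies in $y$'s established range and that $\max\Diag - d_2$ is a multiple of $g$), again a contradiction. Hence at most one diagonal in $\Diag$ escapes a mismatch.

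\textbf{Main obstacle.} The more delicate step is part (a): the algebra of Fine--Wilf is standard, but one must be careful that every pairwise application stays within the length bound of the window, and that the periodicities inferred on the individual $y$-sub-windows agree on their overlaps so as to extend over the full union rather than only on each sub-window. Part (b) is then a clean two-case argument once $P$ is set up, the only subtlety being consistent index tracking modulo $g$ and the observation that $\max\Diag - d$ is always a multiple of $g$ for $d \in \Diag$.
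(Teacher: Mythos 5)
Your proof is correct, and the two parts compare differently to the paper's argument. For part (a), you follow essentially the same route as the paper (chain the alignment identities to get a period $d-d'$ on the $x$-window, then collapse to the gcd), but you are more careful on two points that the paper glosses over: you explicitly invoke Fine--Wilf and check the length bound $s_1+s_2-\gcd(s_1,s_2)\le 2m-1$ (the paper's footnote gives only a B\'ezout sketch that, taken literally, would need the same length condition to be spelled out), and you explicitly argue that the periodicity propagates across the $y$-range by observing that the translated sub-windows $y_{[i-2m+1..i]+d}$ overlap pairwise on at least $m\ge g$ positions (your $m+1$ is a harmless off-by-one: the windows have length $2m$, so the overlap is $2m-(d_2-d_1)\ge m$). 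For part (b) your route is genuinely different and, I think, cleaner. The paper argues forward: it splits on whether $x_{i+1}$ or $y_{i+1+\max\Diag}$ deviates, and in the second case it locates the first row $i_x$ where $x$ leaves the period, producing a threshold diagonal $d'$ and exhibiting an explicit mismatch row for each $d$ on either side of $d'$. You instead argue by contradiction, assuming two diagonals $d_1<d_2$ both escape a mismatch on rows $i+1,\dots,i+m$, and derive a contradiction in each of the two deviation cases; the key device in Case B is the choice of the single row $j=i+1+\max\Diag-d_2$, at which the no-mismatch conditions on $d_1$ and on $d_2$ can be chained through the periodic extension $P$. This avoids any reasoning about $i_x$. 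The one thing to be alert to (which you in fact handled correctly) is that all of $\max\Diag-d_2$, $d_1-\min\Diag$, and $2m$ are multiples of $g$, so the positions you query really do land on $p_1$ in the paper's indexing convention; the rest of your index-range checks ($j\in[i+1..i+m]$, $i+1+d_1\le i+\max\Diag$, $j+d_1\le i+\max\Diag$) are all tight but valid.
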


\begin{proof}
  To prove the first part, consider two diagonals $d_1<d_2$ in $\Diag$.
  We can see that $x_{[i-2m+1..i]}$ has period length $d_2-d_1$, 
  by using~\eqref{eq:diag1} twice (for all relevant positions $j$)
  \begin{equation*} 
    \forall j\in [i-2m+1..i-(d_2-d_1)],
    \qquad
    x_j = y_{j+d_2} = x_{j+(d_2-d_1)} .
  \end{equation*}
  Now use the fact that if a string $s$ is periodic
  with two period lengths $l\neq l'$, 
  then it is also periodic with period length $\gcd\set{l,l'}$.%
  \footnote{To see this,
    use Bézout's identity to write $\gcd\set{l,l'}=tl+t'l'$ for integers $t,t'$, 
    then show that $s_j = s_{j+tl+t'l'}$
    by a sequence of $|t|+|t'|$ equalities, 
    ordered so as to stay inside $s$. 
  }
  It follows that $x$ has period length $g=\gcd\set{d-d':\ d>d'\in \Diag}$,
  hence its period pattern is $p=x_{\set{i-g+1,\ldots,i}}$. 

  By applying a similar argument to $y$,
  we obtain that it too has period length $g$,
  and its period pattern is $y_{[i+\max\Diag-g+1..i+\max\Diag]}$. 
  Moreover, by applying~\eqref{eq:diag1} to $d=\max\Diag$
  we see that the period patterns of $x$ and of $y$ are equal 
  \[
    p
    = x_{[i-g+1..i]}
    =
    y_{[i+\max\Diag-g+1..i+\max\Diag]}. 
  \]

\ifprocs
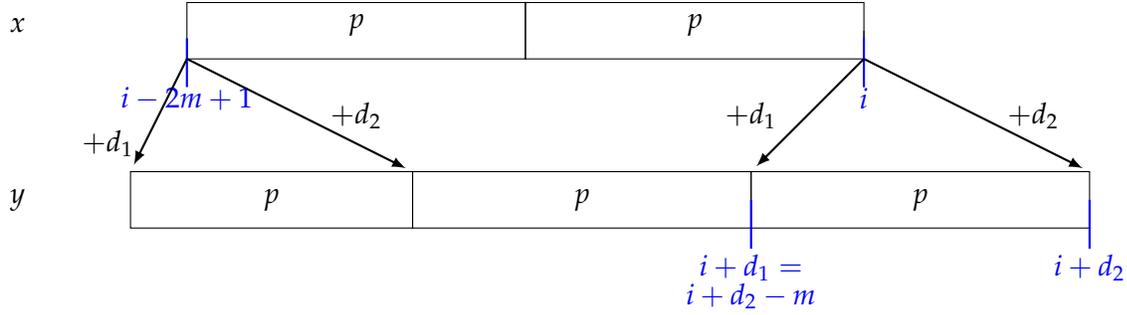
\begin{figure*}
\else
\begin{figure}[ht]
\fi
			\begin{tikzpicture}[scale=1.5,shorten >=1mm,>=latex]
			\tikzstyle gridlines=[color=black!20,very thin]
			\node at (-2,0.3) {$x$};
			\node at (-2,-1.25) {$y$};

			\draw[] (-0.5,0) rectangle (2.5,0.5);
			\draw[] (2.5,0) rectangle (5.5,0.5);
			\draw[] (5.5,-0.25) rectangle (5.5,0.25);

			\node at (1,0.3) {$p$};
			\node at (4,0.3) {$p$};
			\draw[-,thick,blue] (5.5,-0.25)--(5.5,0.25);
			\node[blue] at (5.5,-0.35) {$i$};
			
			\draw[-,thick,blue] (-0.5,-0.25)--(-0.5,0.25);
			\node[blue] at (-0.5,-0.35) {$i-2m+1$};

			\draw[] (-1,-1.5) rectangle (1.5,-1);
			\draw[] (1.5,-1.5) rectangle (4.5,-1);
			\draw[] (4.5,-1.5) rectangle (7.5,-1);
			
			\node at (0.25,-1.25) {$p$};
			\node at (3,-1.25) {$p$};
			\node at (6,-1.25) {$p$};
			
			\draw[-,thick,blue] (7.5,-1.25)--(7.5,-1.75);
			\node[blue] at (7.5,-1.85) {$i+d_2$};
			
			\draw[-,thick,blue] (4.5,-1.25)--(4.5,-1.75);
			\node[blue] at (4.5,-1.85) {$i+d_1=$};
			\node[blue] at (4.5,-2.1) {$i+d_2-m$};

			\draw[->,thick] (5.5,0)--(7.5,-1);
			\node at (7,-0.5) {$+d_2$};
			\draw[->,thick] (5.5,0)--(4.5,-1);
			\node at (4.5,-0.5) {$+d_1$};

			\draw[->,thick] (-0.5,0)--(-1,-1);
			\node at (1,-0.5) {$+d_2$};
			\draw[->,thick] (-0.5,0)--(1.5,-1);
			\node at (-1.2,-0.75) {$+d_1$};
			\end{tikzpicture}

			\caption{Matching the strings along different diagonals $d_1,d_2$.}
			\label{fig:periodicity}
\ifprocs
\end{figure*}
\else
\end{figure}
\fi

Let us now prove the second part. 
Assume first that $x_{i+1}\neq p_{1}$.
Then for every diagonal $d\in \Diag\setminus \set{\max\Diag}$
we have $y_{i+d+1} = y_{i+\min\Diag+1} = p_1$
(because $d-\min\Diag$ is a multiple of the period $g$
and all these positions are inside the periodic part of $y$),
and we see that diagonal $d$ has a mismatch at row $i+1$.

Next, assume that $x_{i+1}= p_{1} \neq y_{i+\max \Diag+1}$. 
Let $i_x\ge i+1$ be the smallest row ``deviating'' from the pattern $p$,
i.e., $x_{i_x}\neq p_{(i_x-i) \bmod g}$,
letting $i_x=\infty$ if no such row exists.
Our assumption implies that actually $i_x > i+1$. 
We proceed by a case analysis.

If $i_x \ge i+1+(\max \Diag-\min \Diag)$,
then we can show that every diagonal $d\in \Diag\setminus\set{\min\Diag}$
has a mismatch at row $i+1+(\max \Diag-d)$.
Indeed, $x$ is periodic up to that row
because $i+1+(\max \Diag-d) < i+1+(\max\Diag - \min\Diag) \le i_x$,
and thus
\[
  x_{i+1+\max\Diag-d}= x_{i+1}
  = p_{1}
  \neq y_{i+1+\max \Diag}.
\]
And since $0\le \max \Diag-d < m$,
the row with the mismatch is indeed in the claimed range $i+1,\ldots,i+m$.

Otherwise, we have $i+1<i_x<i+\max \Diag-\min \Diag+1$.
For every $d\in \Diag$ satisfying $d > d':=i+1+ \max\Diag-i_x$ 
there is a mismatch at row $i+1+\max \Diag-d$ 
as $x$ is still periodic at that position because 
$i+1+\max\Diag-d < i+1+\max \Diag-d' = i_x$, 
and thus
\[
  x_{i+1+\max\Diag-d} = x_{i+1}
  = p_1
  \neq y_{i+1+\max\Diag}. 
\]
For every $d\in \Diag$ satisfying $d<d'$,
we have a mismatch at row $i_x$,
because $x$ is not periodic at $i_x$
and thus $x_{i_x} \neq p_{(i_x-i)\bmod g}$, 
while $y$ is still periodic at position 
$i_x+d < i_x+d' = i+1+ \max\Diag$ and thus 
\[
  y_{i_x+d}
  = p_{(i_x+d - (i+\min\Diag)) \bmod g}
  = p_{(i_x-i)\bmod g}
  \neq x_{i_x}. 
\]
In both cases, $d>d'$ and $d<d'$, 
the mismatched row is indeed in the claimed range $i+1,\ldots,i+m$, 
and together the two cases include all but at most one diagonal in $\Diag$.

\ifprocs
\begin{figure*}
\else
\begin{figure}[ht]
\fi

		\begin{tikzpicture}[scale=1.5,shorten >=1mm,>=latex]
		\tikzstyle gridlines=[color=black!20,very thin]
		\node at (-0.5,0.3) {$x$};
		\node at (0.5,-1.25) {$y$};		
		
		\draw[] (0,0) rectangle (3,0.5);
		\node at (1.5,0.3) {$p$};

	\node at (4.5,-1.25) {$p$};

		\node at (0,0.7) {$i+1$};
		\node at (3,0.7) {$i_x$};
		\draw[] (3,0) rectangle (6,0.5);

		\draw[] (3,-1.5) rectangle (6,-1);
		\draw[] (6,-1.5) rectangle (9,-1);
		\node at (6,-0.7) {$i+\max \Diag+1$};

		\draw[->,dashed] (3,0)--(6,-1);
		\node at (4.5,-0.4) {$d'$};
		
		\draw[->,thick,color=red] (2,0)--(6,-1);
		\draw[->,thick,color=red] (1,0)--(6,-1);
		
		\draw[->,thick,color=blue] (3,0)--(5,-1);
				\draw[->,thick,color=blue] (3,0)--(4,-1);

		\end{tikzpicture}
		\caption{Red lines represent mismatches on diagonals $d>d'$, and blue lines represent mismatches on diagonals $d<d'$}
		\label{fig:periodicityViolation}

\ifprocs
\end{figure*}
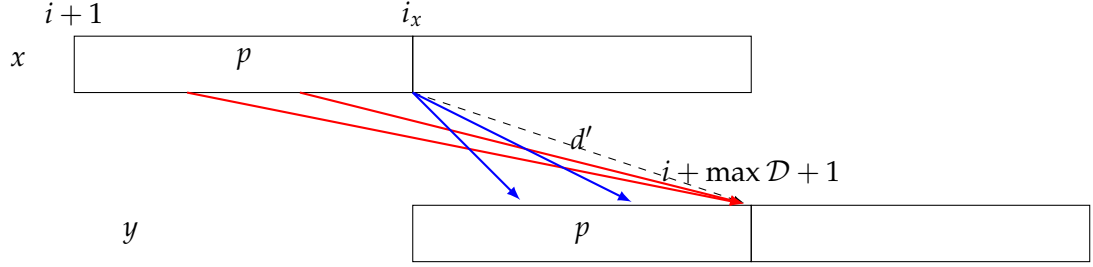
\else
\end{figure}
\fi
              
\end{proof}

We show next that when the algorithm is in the sampling mode and executes a binary search to find a period transition, then it always succeeds.

\begin{claim}\label{claim:periodViolation}
	Suppose at round $i$ the algorithm stays at the sampling mode performing a periodicity check and it detects a period violation, that is either $x_{i+1}\neq p_{(i+1-\ipat)\bmod g }$ or $y_{i+\max \Diag+1} \neq p_{(i+1-\ipat)\bmod g}$, then:
	
	\begin{enumerate}
		\item 
                  There exists a row $j\in [i_{pat}+2(\max \Diag -\min \Diag)..i] $ with a period transition,
                  that is, 
\ifprocs
                  \begin{align*}
                  \forall j' \in & [j-2(\max \Diag-\min \Diag)..j],\\
                  x_{j'} & =y_{j'+\max \Diag} =p_{(j'-i_{pat})\bmod g } \text{ and } \\
                  &\text{ either } x_{j+1}\neq p_{(j+1-i_{pat})\bmod g} \\
                  &\text{ or } y_{j+\max \Diag +1}\neq p_{(j+1-i_{pat})\bmod g}
                \end{align*}
\else
                  \begin{align*}
                  \forall j' \in & [j-2(\max \Diag-\min \Diag)..j], \quad
                  x_{j'}  =y_{j'+\max \Diag} =p_{(j'-i_{pat})\bmod g } \text{ and } \\
                  &\text{ either } x_{j+1}\neq p_{(j+1-i_{pat})\bmod g} 
                  \text{ or } y_{j+\max \Diag +1}\neq p_{(j+1-i_{pat})\bmod g}
                \end{align*}
\fi

		\item Let $j$ be the row from the previous item, then for all $d\in \Diag$ but perhaps at most one diagonal $d^*$, there exists a row  $j'\in [j..j+\max \Diag -\min \Diag]$ such that  $x_{j'}\neq y_{j'+d}$.
	\end{enumerate}
\end{claim}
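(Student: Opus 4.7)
The plan is to prove the two parts in succession. Part~(1) is obtained by a ``first period-break'' argument that leverages the periodicity guarantee inherited from the contiguous-to-sampling mode transition. Part~(2) is then obtained by bootstrapping the periodicity of part~(1) to match the hypotheses of Lemma~\ref{lemma:periodicity}(b), and invoking that lemma at row $j$.

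For part~(1), recall that just before switching to the sampling mode, the algorithm verified that no diagonal in $\Diag$ saw a mismatch over the $2m$ consecutive rows $[\ipat..\ipat+2m-1]$, where $m=\max\Diag-\min\Diag$. Equivalently, $x_{j'}=y_{j'+d}$ for every $d\in\Diag$ and $j'\in [\ipat..\ipat+2m-1]$, so Lemma~\ref{lemma:periodicity}(a) gives that $x$ is periodic with pattern $p$ on $[\ipat..\ipat+2m-1]$ and $y$ is periodic with the same pattern on $[\ipat+\min\Diag..\ipat+2m-1+\max\Diag]$. Call a row $j'$ \emph{periodic} when $x_{j'}=y_{j'+\max\Diag}=p_{(j'-\ipat)\bmod g}$; in particular, every row in $[\ipat..\ipat+2m-1]$ is periodic. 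Take $j$ to be the smallest row in $[\ipat+2m..i]$ with row $j+1$ non-periodic; such $j$ exists since by the hypothesis of the claim row $i+1$ is non-periodic, so $j=i$ is a valid candidate. By minimality of $j$ combined with the initial periodic block, every row in $[\ipat..j]$ is periodic, and since $j\ge\ipat+2m$, the window $[j-2m..j]\subseteq[\ipat..j]$, verifying the first conjunct of part~(1). The second conjunct is the defining property of $j$.

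For part~(2), part~(1) now gives us that $x$ is periodic with pattern $p$ on $[\ipat..j]$ and that $y$ is periodic (at shift $\max\Diag$) on $[\ipat+\max\Diag..j+\max\Diag]$. Gluing the latter with the contiguous-mode-exit range $[\ipat+\min\Diag..\ipat+2m-1+\max\Diag]$ for $y$, which overlaps the former because $\ipat+\max\Diag\le\ipat+2m-1+\max\Diag$, yields periodicity of $y$ with pattern $p$ on the full range $[\ipat+\min\Diag..j+\max\Diag]$. Since $j\ge\ipat+2m-1$, this range contains $[j-2m+1+\min\Diag..j+\max\Diag]$, and together with the periodicity of $x$ on $[j-2m+1..j]$ we obtain exactly the hypotheses of Lemma~\ref{lemma:periodicity}(b) at row $j$; the period transition at $j+1$ from part~(1) supplies the remaining hypothesis. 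Invoking the lemma yields that every $d\in\Diag$ except at most one $d^*$ has a mismatch in one of the $m$ rows $[j+1..j+m]$, which is exactly part~(2).

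The main technical obstacle is part~(2): what part~(1) directly delivers is a periodicity of $y$ only along the single shift $\max\Diag$ (since the sampling-mode check only involves that shift), whereas Lemma~\ref{lemma:periodicity}(b) requires periodicity of $y$ over a range that effectively spans \emph{all} shifts $d\in\Diag$. Bridging this gap relies crucially on the \emph{stronger} periodicity of $y$ inherited from the contiguous-to-sampling transition, where every $d\in\Diag$ was verified, and on the numerical observation that the two ranges glue without a gap once $j\ge \ipat+2m-1$. Some additional care is needed for boundary indices and for the $(\cdot)\bmod g$ indexing of the pattern $p$, especially around row $\ipat+2m$ itself.
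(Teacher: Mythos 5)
Your proof is correct and follows the same overall strategy as the paper: use the contiguous-to-sampling transition together with Lemma~\ref{lemma:periodicity}(a) to establish the initial periodic block, take $j$ to be the first place the periodicity breaks (giving part~(1)), and then invoke Lemma~\ref{lemma:periodicity}(b) at row $j$ (giving part~(2)). The paper's own proof of part~(2) is just the one-line statement that it ``follows directly by the second item of Lemma~\ref{lemma:periodicity}''; you correctly observe that this is not quite immediate, since part~(1) only certifies periodicity of $y$ along the single shift $\max\Diag$ (a window of length $2m+1$), whereas Lemma~\ref{lemma:periodicity}(b) needs $y$ periodic on the longer window $[j-2m+1+\min\Diag..\,j+\max\Diag]$. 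Your gluing argument, combining the $y$-periodicity inherited from the mode transition (where all shifts in $\Diag$ were verified) with the $y$-periodicity propagated through the sampling rounds along shift $\max\Diag$, and using that the two ranges overlap by at least $g$ positions so the offsets agree modulo $g$, is exactly the detail needed to make the citation legitimate. So the proposal is not a different route; it is a more careful rendering of the same route, filling in a step the paper leaves implicit.
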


\begin{proof}
We first prove part (1).
Since the algorithm enters the sampling mode only after on the last $2(\max \Diag-\min \Diag)$ (where $\Diag$ denotes the set of potent diagonals when the algorithm enters the sampling mode) no mismatch has occurred for each $d$ in the potent diagonals, then our set $\Diag$ satisfies the condition of Lemma~\ref{lemma:periodicity} at row $\ipat$. Note that by
          Lemma~\ref{lem:mismatchPotent}, %
          the set $\Diag$ did not change. By the lemma we get $x_{[\ipat..\ipat+2(\max \Diag-\min \Diag)]}=p\circ \cdots \circ p$ and $y_{[\ipat..\ipat+2(\max \Diag-\min \Diag)]+\max \Diag}=p\circ \cdots \circ p$. Since at row $i+1$ at least one of the values $x_{i+1}, y_{i+\max \Diag+1}$ does not match the corresponding character in $p$, then between $i$ and $\ipat$, there exists an index $j$ on which we have a period transition.

Part (2) follows directly by the second item of Lemma~\ref{lemma:periodicity}.
\end{proof}

\paragraph{Correctness Analysis}
We now analyze the correctness of the algorithm in the two cases, 
$\close$ that is $\ed(x,y)\le t/2$ and $\far$ that is $\ed(x,y)> 13t^2$, more specifically we prove the following lemmas:
\begin{lemma}\label{claim:close}
	Let $x,y\in \zo^n$. If $\ed(x,y)\le t/2$ then with probability $1$ the algorithm outputs $\close$.	
\end{lemma}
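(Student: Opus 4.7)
The plan is to exhibit an auxiliary generalized grid graph $G'$ whose substitution-edge weights are pointwise at most those of $G_{x,y}$, and to maintain throughout the execution the invariant that $\cA[d]$ equals the shortest-path cost in $G'$ from $(0,0)$ to the current vertex $(i,d)$. Once this is in place, $c_{G'}(i,d)\le c(i,d)$ for every $(i,d)$, so at termination
\[
  \cA[0] \;=\; c_{G'}(n,0) \;\le\; c(n,0) \;=\; \ed(x,y) \;\le\; t/2 \;<\; t+1,
\]
the stopping condition never fires, and the algorithm must output $\close$. Since the argument is deterministic given the algorithm's random choices, this establishes correctness with probability $1$.

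I would define $G'$ on the fly. It shares the vertex set and the unit-weight insertion/deletion edges of $G_{x,y}$, and its substitution edges are weighted as follows. On any row processed in contiguous mode, the substitution edge on diagonal $d$ retains its true weight $\indic\{x_{i+1}\neq y_{i+d+1}\}$; substitution edges on rows skipped during a sampling-mode burst receive weight $0$; and whenever the algorithm increments $\cA[d]$ in sampling mode, I assign weight $1$ to one specific substitution edge on diagonal $d$ corresponding to an actual mismatch. For a shift-check mismatch the witness is simply the compared edge, while for a periodicity violation Claim~\ref{claim:periodViolation} (via Lemma~\ref{lemma:periodicity}(b)) supplies, for each $d\in\Diag\setminus\{d^*\}$, a real mismatch on diagonal $d$ at some row in $[j..j+\max\Diag-\min\Diag]$; the optional extra sampling for $d^*$ is handled analogously. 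By construction every substitution edge in $G'$ has weight at most that of the corresponding edge in $G_{x,y}$, hence $c_{G'}\le c$ pointwise.

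The invariant $\cA[d]=c_{G'}(i,d)$ (for every tracked $d\in\Diag_i$) is proved by induction on the rounds. The base case $\cA[d]=|d|$ matches $c_{G'}(0,d)$ since the insertion/deletion edges are unchanged. In contiguous mode, the algorithm implements the selective scan of Section~\ref{sec:LMS}, and as Lemmas~\ref{lem:potential}, \ref{lem:imp}, \ref{lem:mismatchPotent} and \ref{lem:costCorrectness} extend to generalized grid graphs (remark after Lemma~\ref{lem:mismatchPotent}), the invariant is preserved. A no-violation jump in sampling mode leaves $\cA$ unchanged and only traverses $0$-weight substitution edges on the tracked diagonals in $G'$, so $c_{G'}$ is preserved too. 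An increment places one weight-$1$ substitution edge along diagonal $d$, raising the diagonal-$d$ route by exactly $1$ and matching the update of $\cA[d]$. The main obstacle is verifying this last point when several diagonals of $\Diag$ are updated simultaneously on a single violation window: one must argue that the shortest $G'$-path to $(i+1,d)$ is still the straight diagonal route rather than a detour through another diagonal, including the exceptional $d^*$. This reduces to the fact that any such detour costs at least two extra insertion/deletion edges while a direct mismatch costs only $1$, and that the $|d|$ offsets encoded in $\cA$ at the entry to sampling mode prevent a short detour from $d^*$ from undercutting the increment for all $d\in\Diag$ simultaneously.
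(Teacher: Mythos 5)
Your approach is genuinely different from the paper's, and the difference is not cosmetic: the paper constructs \emph{two} cost functions on a contracted graph $G_{S'}$ (whose rows are only the scanned rows), namely $c'$ (which is shown in Corollary~\ref{cor:c'} to lower-bound the true cost along some path) and $\cALG$ (which Lemma~\ref{lem:cALG} ties to the array $\cA$), and then proves $\cALG(\tau_S)\le 2\,c'(\tau_S)$ (Claim~\ref{claim:cALGc'}); it is this factor of $2$ that forces the hypothesis $\ed(x,y)\le t/2$. You instead try to use a \emph{single} weight-reduced graph $G'$ on the full vertex set with the exact invariant $\cA[d]=c_{G'}(i,d)$ and then $c_{G'}\le c$ pointwise. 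If that worked, it would both be cleaner and yield a sharper theorem (no factor of $2$), which is precisely the warning sign.

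There is a genuine gap in establishing $\cA[d]=c_{G'}(i,d)$, and you yourself flag the spot but dismiss it with an argument that does not hold. You say a detour ``costs at least two extra insertion/deletion edges,'' but a $G'$-path that reaches $(j'_d,d)$ need not make a round trip: it can start at $(i_0,d^*)$, traverse only $0$-weight substitution edges, and shift \emph{once} into diagonal $d$, paying exactly $|d-d^*|$ for the shift. When $\Diag$ contains consecutive diagonals (period $g=1$) and $d^*$ is adjacent to $d$, that is one edge, the same price as the witness mismatch. Ruling this out requires showing that at entry to sampling mode the $\cA$-value of every active diagonal is a local minimum of the $1$-Lipschitz function $d\mapsto\cA[d]$, which in turn needs the non-domination argument coming from the fact that no mismatch occurred for $2m$ rows; none of this is in your writeup, and the claim as stated (``two extra edges'') is false.

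There is also a timing problem you do not address. The binary search returns a row $j\in[\ipat+2m,\,i]$, and Claim~\ref{claim:periodViolation} locates mismatches for $d\neq d^*$ in rows of $[j,\,j+m]$, which can extend past the current row $i$ (indeed past $i+1$). After the sampling round the algorithm increments $\cA[d]$ and then resumes the contiguous scan at round $i+1$, so when it reaches round $j'_d-1$ with $j'_d\ge i+2$ it re-reads $x_{j'_d}$ versus $y_{j'_d+d}$ and can pay for the same mismatch a second time. In your $G'$ that mismatch is a single weight-$1$ edge, so such a double charge makes $\cA[d]$ strictly exceed $c_{G'}(\cdot,d)$, breaking the invariant. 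This is exactly the scenario the paper's Claim~\ref{claim:cALGc'} absorbs at the cost of the factor $2$ (``we pay for this transition twice: once at the transition row, and second time at row $i$''). So while your $G'$ construction and the reduction $c_{G'}\le c$ are sound, the central invariant $\cA=c_{G'}$ is not proved and, without substantial additional work on potency and timing, is not even true.
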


\begin{lemma}
	\label{claim:far}
	If $\ed(x,y)> 13t^2$ then with probability at least $2/3$ the algorithm outputs $\far$.	
\end{lemma}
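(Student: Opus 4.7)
I would prove Lemma~\ref{claim:far} by contrapositive: conditional on a good sampling event I will show that whenever the algorithm outputs $\close$, i.e., $\cA[0]\le t$ at termination, necessarily $\ed(x,y)\le 13 t^2$. In analogy with Claim~\ref{claim:badSamplingEvent}, define for each row $i\in[n]$ and each diagonal $d\in\pmt$ a failure event $B(i,d)$: there exists $i'\ge i$ such that diagonal $d$ accumulates at least $3t$ Hamming mismatches on the interval $[i..i']$ yet none of those mismatching rows is included in $S$ (and analogously for each auxiliary sample $S^*$ used for the exceptional diagonal $d^*$ during a periodicity check). The same geometric-tail calculation as in Claim~\ref{claim:badSamplingEvent} bounds each such event by $\frac{1}{3n(2t+1)}$, and a union bound over the $O(nt)$ possible $(i,d)$ pairs (absorbing the $O(t)$ periodicity-check invocations) shows that no $B(i,d)$ occurs except with probability at most $1/3$.

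\textbf{From a low-cost trace to a short alignment.} Conditioning on no $B(i,d)$, I would construct an explicit alignment of $x$ into $y$ by walking through the algorithm's execution row-by-row while maintaining the current alignment offset inside the current active set $\Diag_i$. Each contiguous-mode increment of $\cA[d]$ corresponds to exactly one substitution on diagonal $d$ at the current row; diagonal transitions within $\Diag_i$ correspond to single insertions or deletions. A shift-check sampling segment along a single diagonal $d$ in which the algorithm observed no mismatch contributes at most $3t$ substitutions to the alignment by the no-$B$ hypothesis applied to $d$. A periodicity-check segment contributes nothing between $\ipat$ and the binary-searched violation row $j$ by Lemma~\ref{lemma:periodicity}(a) together with Claim~\ref{claim:periodViolation}; in the window $[j..j+m]$ each non-exceptional $d\in\Diag_i$ has a genuinely witnessed mismatch by Lemma~\ref{lemma:periodicity}(b), while the exceptional $d^*$ contributes at most $3t$ missed substitutions by the no-$B$ hypothesis applied to $S^*$; here $m=\max\Diag_i-\min\Diag_i\le 2t$.

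\textbf{Totaling and main obstacle.} Since $\cA[0]$ ends at most $t$, the alignment decomposes into at most $t$ witnessed cost increments on diagonal~$0$, each bracketed by at most $O(t)$ unwitnessed edit operations: up to $3t$ missed substitutions per sampling segment, up to $2t$ characters of padding for the periodicity window, and up to $2t$ diagonal-shift corrections because $\Diag_i$ always has diameter at most $2t$. Summing with careful constants yields an alignment of cost at most $13 t^2$, which by Lemma~\ref{lem:gridcost} upper bounds $\ed(x,y)$, completing the contrapositive. The main obstacle is the amortized bookkeeping during periodicity checks: a single cost increment in $\cA$ may coexist with many active diagonals simultaneously, and I must exhibit \emph{disjoint} charges along the constructed source-to-sink path so that missed mismatches do not compound multiplicatively across $\Diag_i$. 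The key leverage is that $\cA$ evolves as a shortest-path computation in the generalized grid graph of Section~\ref{sec:GenGridGraph}, so the accumulated slack that matters is only the slack traversed along the single chosen path terminating at diagonal $0$, not the slack over all active diagonals.
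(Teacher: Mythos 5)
Your high-level plan—condition on a good sampling event, read the algorithm's low-cost trace as a path in the generalized grid graph, decompose that path into intervals along single diagonals, and bound the unwitnessed edit cost per interval to get a total $\le 13t^2$—does match the paper's proof. The closing observation about charging slack only along the chosen path (rather than across all of $\Diag_i$) is exactly the right resolution of the obstacle you raise.

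However, there is a genuine gap in how you treat the periodicity-check segments. You assert that such a segment ``contributes nothing between $\ipat$ and the binary-searched violation row $j$,'' justified by Lemma~\ref{lemma:periodicity}(a) and Claim~\ref{claim:periodViolation}. That is not true. During a periodicity check the algorithm reads only $x_{i+1}$ and $y_{i+\max\Diag_i+1}$ at sampled rows and compares each to the period pattern $p$; it never compares $x_{i+1}$ to $y_{i+1+d}$ for the other active diagonals $d$, and it verifies nothing at unsampled rows. Lemma~\ref{lemma:periodicity}(a) only establishes periodicity at entry (row $\ipat$) over a window of length $2m$; it does not propagate to the unsampled interior of the segment, and the binary search for $j$ certifies periodicity only over the $2m$ rows immediately preceding $j$. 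Consequently, $x$ and $y$ can each drift away from $p$ at unsampled positions, creating real mismatches on the traced diagonal $d_\ell$ that the algorithm never witnesses. Your failure events $B(i,d)$, being defined in terms of $S$ hitting rows where $x_j\neq y_{j+d}$, do not control this: even when such a row lies in $S$, the periodicity check compares against $p$ on diagonal $\max\Diag_i$, not against $y_{j+d}$, so the mismatch can remain invisible.

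The paper closes this gap with a different family of bad events (Claim~\ref{claim:badSamplingEventGeneralized}): in addition to diagonal-mismatch events $B_S(i,d)$, it defines $B_S(i,x,p)$ and $B_S(i,y,p)$, bounding the number of \emph{deviations from the pattern} $p$ in $x$ and in $y$ separately before $S$ samples one. Union-bounding over the $O(nt)$ candidate patterns of the form $x_{[i-j..i]}$ or $y_{[i-j..i]}$ (not over arbitrary $p$) keeps the total failure probability below $1/3$. Conditioned on avoiding these events, each string deviates from $p$ in at most $4t$ positions within a periodicity segment, and the triangle inequality bounds $\Delta_H(x_{I^s_\ell},y_{I^s_\ell+d_\ell})<8t$; together with the $<4t$ bound for the shift-check prefix from $B_{S_1}$, each of the $\le t$ intervals costs $<12t$, giving the $13t^2$ total. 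Without the pattern-deviation events, your argument cannot bound the alignment cost of the periodicity segments, and the proof does not go through as written.
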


\subsection{Proof of Close Case $\ed(x,y)\le t/2$}

\begin{proof}[Proof of Lemma~\ref{claim:close}]
	Let $x,y$ be such that $\ed(x,y)\le t/2$. We show that with probability $1$ the algorithm outputs $\close$.
	For this sake, we build a grid graph $G'=(V',E')$ and define a new cost function $c':E'\to \N \cup \set{0}$. 
	We show that there exists a path connecting the sink and the source  in $G'$ whose cost $c'$ is at most $t/2$.
	The set $V'$ will include all the vertices that the algorithm scans. 
	Then, we define a new cost function, denoted $\cALG$ on the $E'$. 
	We claim that the costs assigned by the algorithm are consistent with $\cALG$. 
	Finally, we connect the costs $c'$ and $\cALG$.
	
	\paragraph*{Graph Construction.} The graph is built as follows: Recall $S$ denote the sampled rows by the algorithm. 
	Let $S'$ be the set of rows $i$ such that either $i\in S$ or the algorithm scans the row $i$ during the contiguous mode or $i$ is a row that the algorithm scans after finding a period transition during binary search or during a shift check.
	We set $$V'=S'\times \pmt \cup \set {(0,0), (n,0)}.$$
	
	Let us describe the cost $c'$ for each edge in $G'$.
	Let $i\in S'$. Let us first deal with the boundaries. We connect $(0,0)$ to $(i_1,0)$ where $i_1$ is the smallest row in $S'$ by an edge of cost $0$. Let $i_{\abs{S'}}$ be the largest row in $S'$, we connect each vertex  $(i_{\abs{S'}},d)$, into $(n,0)$ by an edge of cost $\abs{d}$.
	
	Let $i\in S'$ which is not the last row in $S'$ and let $\inext$ be the next row in $S'$. Each vertex $(i,d)$ is connected to:
\ifprocs
\begin{align*}
  (i,d)&\to (i,d+1), \\
  (i,d)&\to (\inext,d-1) \text{ and } \\
  (i,d)&\to (\inext,d).
\end{align*}
\else
\begin{align*}
  (i,d)\to (i,d+1),
  (i,d)\to (\inext,d-1) \text{ and }
  (i,d)\to (\inext,d).
\end{align*}
\fi
        The first two edges are associated with cost $1$.
	The cost of $(i,d)\to (\inext,d)$ is defined as: $\indicator{\set{x_{i+1}\neq y_{i+d+1}}}$. 
	
\begin{corollary}\label{cor:c'}
	For every path $\tau$ in $G_{x,y}$ from $(0,0)$ to $(n,0)$, there exists a corresponding path $\tau_S$ in $G'$ from $(0,0)$ to $(n,0)$ such that $c'(\tau_S) \leq c(\tau)$.
\end{corollary}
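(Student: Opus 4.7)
The plan is to mimic the projection argument from Claim~\ref{claim:closeCaseAnalysis} (the close-case analysis underlying Lemma~\ref{lem:gridGraph}). The graph $G'$ is built with exactly the same three edge types and the same weights as $G_S$, only over a different row set $S'$ in place of $S$, so the construction transfers essentially verbatim. Given a source-to-sink path $\tau$ in $G_{x,y}$, I will construct a corresponding source-to-sink path $\tau_S$ in $G'$ by projecting $\tau$ onto the rows in $S'$.

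Concretely, enumerate $S'$ as $0=j_1<j_2<\cdots<j_q$ and build $\tau_S$ inductively starting at $(0,0)$. At step $k$, suppose $\tau_S$ currently ends at $(j_k,d_S)$, and let $(j_{k+1},d),(j_{k+1},d+1),\ldots,(j_{k+1},d+\ell)$ denote the consecutive vertices visited by $\tau$ on row $j_{k+1}$ (they form a contiguous range because $\tau$ can only move right within a row). If $d_S\le d+\ell$, extend $\tau_S$ by appending the edge $(j_k,d_S)\to(j_{k+1},d_S)$ of cost $\indicator{\set{x_{j_k+1}\neq y_{j_k+d_S+1}}}$, followed by the insertion edges $(j_{k+1},d_S)\to(j_{k+1},d_S+1)\to\cdots\to(j_{k+1},d+\ell)$; otherwise (when $d_S>d+\ell$), append the single deletion edge $(j_k,d_S)\to(j_{k+1},d_S-1)$. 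After processing the last row $j_q$, close $\tau_S$ with the edge to $(n,0)$, whose cost equals the absolute value of the current diagonal.

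The core step is to bound the cost accrued at each extension by the cost of the corresponding segment of $\tau$. Insertion edges appended to $\tau_S$ are charged one-to-one against insertion edges of $\tau$ on row $j_{k+1}$, each contributing cost $1$ to both sides. Each deletion edge in $\tau_S$ is charged against a deletion in $\tau$ that is forced by the case condition $d_S>d+\ell$, since $\tau$'s diagonal can only decrease via deletions. The indicator cost on the matching edge is charged against $\tau$'s matching/substitution move at position $(j_k+1,d_S)$ whenever $d_S$ lies within $\tau$'s visited range on that row (and otherwise equals zero or is absorbed by a neighboring insertion/deletion charge). The closing edge into $(n,0)$ is accounted for by the remaining insertions/deletions $\tau$ must still perform past row $j_q$ to reach diagonal $0$.

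The main technical care is to ensure these charges are disjoint across iterations, particularly when $j_{k+1}-j_k$ is large and $\tau$ executes many operations inside the skipped block; this requires the same bookkeeping as in Claim~\ref{claim:closeCaseAnalysis}. Because the edge structure and weights of $G'$ exactly match those of $G_S$ and the inductive invariant is unchanged, the argument applies verbatim and yields $c'(\tau_S)\le c(\tau)$, proving the corollary.
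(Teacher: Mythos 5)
Your proposal is correct and takes essentially the same approach as the paper: the paper's entire proof of this corollary is the single sentence ``This follows directly by Claim~\ref{claim:closeCaseAnalysis},'' relying---exactly as you observe---on the fact that $G'$ is built with the same edge types and weights as $G_S$, only over the row set $S'$ instead of $S$, so the close-case projection argument transfers verbatim. (One tiny notational slip: you assume $0\in S'$ when enumerating $0=j_1<\cdots<j_q$, whereas the paper instead includes $(0,0)$ in $V'$ separately and adds a cost-$0$ edge from $(0,0)$ to the first row of $S'$; this does not affect the argument.)
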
	
This follows directly by Claim~\ref{claim:closeCaseAnalysis}. 
Next we would like to connect the costs $\cA$ and $c'$. 
For this sake we first define a cost function $\cALG$  on the edge set $E'$ 
and then claim it is consistent with the costs assigned by the algorithm.

\paragraph{Defining $\cALG$} 

	We define a cost function $\cALG$ on the edges of $G'$ as follows.

	Let $i\in S'$. Let us first deal with the boundaries. We connect $(0,0)$ to $(i_1,0)$ where $i_1$ is the smallest row in $S'$ by an edge of cost $0$. 
	Let $i_{\abs{S'}}$ be the largest row in $S'$, we connect each vertex  $(i_{\abs{S'}},d)$, into $(n,0)$ by an edge of cost $\abs{d}$.
	
	Let $i\in S'$ which is not the last row in $S'$ and let $\inext$ be the next row in $S'$.
	The edges $(i,d)\to (i,d+1), (\inext,d+1)\to (i,d) $ are associated with cost $1$.
	The cost of $(i,d)\to (\inext,d)$ is defined as follows:
	
	\textbf {Case 1: $i$ is a contiguous round:} Then its cost is $\indicator{x_{i+1}\neq y_{i+d+1}}$. 
	
	\textbf {Case 2: $i$ is a sampling round:} If the consistency check passes then the cost is $0$.
	Otherwise, recall that the algorithm first detects a row  
	$j$ on which there is a period transition and then finds all the diagonals $d\in \Diag$ that have
	a mismatch in at least one row in the range $[j..j+\max \Diag-\min \Diag]$. For each of these diagonals we assign $\cALG((i,d)\to (\inext,d))=1$. 
	Recall that if $d$ does not have a mismatch in either of these rows in this range,  then the algorithm samples another set 
	$S^*$ and checks whether $d$ has a mismatch in $S^*$. 
	If the later test passes then the cost is $0$ and otherwise it is $1$. For the rest of the diagonals ($d\notin \Diag$) we set $\cALG((i,d)\to (\inext,d))=0$. 
	
	We next connect $\cA$ and $\cALG$, for this we extend $\cALG$ to $V'$ by setting $\cALG(v)$ 
	as the shortest path cost (with respect to $\cALG$) connecting $(0,0)$ and $(i,d)$. We next prove:
	\begin{lemma}\label{lem:cALG}

		Let $i$ be a row scanned by the algorithm, prior to iteration $\inext$ (or equivalently after iteration $i$),
		every $\cA[d']$ stores the value $\cALG(\inext,d')$. 
		More precisely, after processing diagonal $d\in\Diag_i$, 
		each $\cA[d']$ has value 
		$\cALG(i,d')$ for $d'>d$ and 
		$\cALG(\inext,d')$ for $d'\le d$.
		
	\end{lemma}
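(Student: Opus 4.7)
The plan is to prove Lemma~\ref{lem:cALG} by induction on the rounds processed by the algorithm, closely mirroring the template of Lemma~\ref{lem:costCorrectness} but with a case split that reflects the two operating modes. For the base case, before any round is processed, the initialization $\cA[d]=\abs{d}$ agrees with $\cALG(i_1,d')$: the boundary construction of $G'$ gives a zero-cost edge from $(0,0)$ to $(i_1,0)$, and the only $E'$-edges inside a single row are the cost-$1$ insertion/deletion edges $(i,d)\to(i,d+1)$ (and its reverse), so the shortest-path cost to $(i_1,d)$ is exactly $\abs{d}$.

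For the inductive step, assume $\cA[d']=\cALG(i,d')$ holds at the start of round $i$, and analyze the three possible kinds of round. In a \emph{contiguous round}, the algorithm scans $\Diag_i$ in increasing order of $d$ and, for each potent $d$, increments $\cA[d]$ exactly when $x_{i+1}\neq y_{i+d+1}$. Since in $\cALG$ the substitution edge $(i,d)\to(\inext,d)$ has cost $\indicator{x_{i+1}\neq y_{i+d+1}}$, the argument of Lemma~\ref{lem:costCorrectness} carries over verbatim: after processing $d$, the entries $\cA[d']$ for $d'\le d$ already reflect $\cALG(\inext,d')$ while those for $d'>d$ still reflect $\cALG(i,d')$, because the min-plus update driven by potency checks matches the recurrence defining a shortest-path cost in $G'$. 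The additions of $d,d\pm 1$ to $\Diag_i$ and $\Dnext$ ensure, by an argument analogous to Lemma~\ref{lem:ComputingPotential}, that every diagonal potent at row $\inext$ indeed lands in $\Dnext$.

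In a \emph{sampling round with $\card{\Diag_i}>1$}, two subcases arise. If the periodicity check passes (both $x_{i+1}$ and $y_{i+\max\Diag+1}$ equal the expected character of $p$), the algorithm leaves $\cA$ unchanged; this matches $\cALG$ since every substitution edge $(i,d)\to(\inext,d)$ is then assigned cost $0$ by the Case-2 definition of $\cALG$. Otherwise, the algorithm uses binary search to find the period-transition row $j$ (which exists by Claim~\ref{claim:periodViolation}), identifies the at-most-one exceptional diagonal $d^*$, increments $\cA[d]$ for each $d\in\Diag\setminus\set{d^*}$, and resolves $d^*$ via the auxiliary sample set $S^*$ (incrementing $\cA[d^*]$ iff a mismatch is witnessed). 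This update rule is exactly the per-edge cost $\cALG$ assigns on $(i,d)\to(\inext,d)$, so the equality $\cA[d']=\cALG(\inext,d')$ is preserved for every $d'\in\Diag$, while for $d'\notin\Diag$ the substitution edge has cost $0$ and no update is needed. The analogous but simpler reasoning handles a \emph{sampling round with $\card{\Diag_i}=1$} (shift check): the unique $d\in\Diag_i$ is processed by direct comparison and $\cA[d]$ is incremented iff $x_{i+1}\neq y_{i+d+1}$, matching $\cALG$.

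I expect the main obstacle to be Case 2: because $\cALG$ assigns a single per-edge cost on $(i,d)\to(\inext,d)$ while the algorithm's decision to increment $\cA[d]$ is based on inspecting several intermediate rows in $[j..j+\max\Diag-\min\Diag]$, one has to verify that the increment of exactly $1$ is correct regardless of how many mismatches actually occur along the skipped rows. This follows from the fact that all these intermediate rows are absent from $S'$, so they contribute no edges to $G'$, and from the structural guarantees of Lemma~\ref{lemma:periodicity}(b) and Claim~\ref{claim:periodViolation} that the relevant diagonals are mismatched exactly when $\cALG$ assigns cost $1$. Once this alignment between the algorithm's local update and the $\cALG$ edge-cost definition is made explicit, the induction closes and the lemma follows.
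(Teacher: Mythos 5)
Your proposal takes essentially the same route as the paper: verify that the algorithm's per-round increment of $\cA[d]$ agrees, by construction, with the edge cost $\cALG((i,d)\to(\inext,d))$, and then run the same induction as in Lemma~\ref{lem:costCorrectness}. The mode-by-mode case split you give is just an unrolled presentation of the fact that $\cALG$ was \emph{defined} to record exactly what the algorithm did in each mode, so that verification is essentially tautological.

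One point is worth making explicit, which the paper does and you elide by saying the Lemma~\ref{lem:costCorrectness} argument ``carries over verbatim.'' That argument invokes Lemma~\ref{lem:imp} and Lemma~\ref{lem:mismatchPotent}, whose original statements concern the cost function $c$ derived from $x,y$ on the ordinary grid graph. Here the relevant cost function is $\cALG$ on $G'$, where the substitution-edge weights on sampling rounds are \emph{not} of the form $\indicator{x_{i+1}\neq y_{i+d+1}}$ but are set by the algorithm's behavior; so $(G',\cALG)$ is a \emph{generalized} grid graph in the sense of Section~\ref{sec:GenGridGraph}. One needs the Remark following Lemma~\ref{lem:mismatchPotent} that those structural lemmas (monotonicity, and the potent/non-potent cost recurrences) extend to generalized grid graphs, and accordingly one should work with a $\cALG$-potent notion (defined with respect to the in-neighbors in $G'$ and the $\cALG$ edge costs) rather than the original potency Definition~\ref{def:potent}. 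Once that is said, your induction closes exactly as the paper's does. Your ``main obstacle'' about Case~2 is in fact a non-issue for the same reason: whether the increment of $1$ is ``correct'' is determined by the definition of $\cALG$, not by how many mismatches actually occur on the skipped rows; those rows simply do not appear as vertices of $G'$.
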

	\begin{proof}
	To prove the lemma we use the following definition:
	 We say that diagonal $d$ is $\cALG$-\emph{potent} at row $i$ 
	if it satisfies the two requirements: 
	\begin{itemize}
		\item if $(i,d)$ is dominated by $(i,d-1)$ (with respect to $\cALG$),
		then we require that diagonal $d-1$ is potent at row $i$
		and $\cALG((i,d-1)\to (\inext,d-1))=1$; and
		\item if $(i,d)$ is dominated by $(\ilast-1,d+1)$
		then we require that diagonal $d+1$ is potent at row $\ilast$
		and $\cALG((\ilast,d+1)\to (i,d+1))=1$, where $\ilast$ is the largest row in $S'$ smaller than $i$.
	\end{itemize}

We prove only the first assertion,
as the second one is an immediate consequence of it.
The proof is by induction on the grid vertices $(i,d)\in V'$ in lexicographic order
(i.e., their row is the primary key and their diagonal is secondary).
The base case is the time before processing vertex $(0,0)$; 
at this time, $\cA$ stores its initial values, i.e., $\cA[d] = \abs{d}$,
which is equal to $c(0,d)$ for all $d\ge 0$. 
For $d<0$, the base case is the time before processing $(-d,d)$,
because we should only consider vertices reachable from $(0,0)$;
at this time, $\cA[d]=\abs{d}$ is still the initialized value
and it is equal to $c(d,-d)=-d$. 

For the inductive step, we need to show $\cA[d]$ is updated according to $\cALG$.
But using the induction hypothesis, 
we only need to show $\cA[d]$ is updated from $\cALG(i,d)$ to $\cALG(\inext,d)$.
To this end, suppose first that vertex $(i,d)$ is non-potent. 
Then by Lemma~\ref{lem:imp} we have $\cALG(i+1,d)=\cALG(i,d)$, 
(regardless of the cost $\cALG((i,d) \to (\inext,d))$)
and the algorithm indeed does not modify $\cA[d]$. 
Suppose next $(i,d)$ is potent:
Observe that the algorithm increases $\cA[d]$ by $1$ only if
$\cALG((i,d) \to (\inext,d))=1$, in which case by Lemma~\ref{lem:mismatchPotent} we have:  
$\cALG(\inext,d)=\cALG(i,d)+1$, so by the induction hypothesis $\cA[d]$ is incremented to the correct value.
On the other hand, the algorithm does not change the value of $\cA[d]$ by $1$ if
$\cALG((i,d)\to (\inext,d))=0$, in which case by Lemma~\ref{lem:mismatchPotent} we have 
$\cALG(\inext,d)=\cALG(i,d)$, and again by the induction hypothesis $\cA[d]$ matches the correct value.
\end{proof}

	By abusing notations, we define a cost function $c'$ on the vertices of $V'$ by setting $c'(v')$
	as the cost of shortest path connecting $(0,0)$ and $v'$, for each $v'\in V'$.

	Finally we connect the cost $c'$ and $\cALG$, namely: Let $\tau_S$ be a shortest path in $G'$ with respect to the cost function $c'$ connecting $(0,0)$ and $(n,0)$.  We conclude the proof by claiming $\cALG(\tau_S)\le 2 c'(\tau_S)$. 
	\begin{claim}\label{claim:cALGc'}
		$\cALG(\tau_S)\le 2 c'(\tau_S)$.
	\end{claim}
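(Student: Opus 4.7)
The plan is to decompose $\tau_S$ into maximal segments according to the algorithm's mode (contiguous or sampling), and to bound the discrepancy $\cALG(e)-c'(e)$ on each segment. On a contiguous segment, the definitions of $\cALG$ (Case~1) and of $c'$ assign exactly the same value $\indicator{\{x_{i+1}\ne y_{i+d+1}\}}$ to every horizontal edge, while insertion/deletion edges receive cost $1$ in both. Hence a contiguous segment contributes identically to $\cALG(\tau_S)$ and $c'(\tau_S)$, producing no overcount.

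On a sampling segment, I would first observe that only the \emph{terminal} edge of the segment can have $\cALG(e)=1$ while $c'(e)=0$: every earlier sampling edge corresponds to a round at which the periodicity or shift check passed, and $\cALG$ is then $0$ by Case~2. Thus each sampling segment contributes at most $1$ to the overcount from its terminal periodicity/shift edge, plus at most $1$ from the auxiliary $d^*$ check if it fires -- giving at most $2$ units of overcount per segment.

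To pay for each unit of overcount through $c'$, I would invoke Lemma~\ref{lemma:periodicity}(b). When the failing periodicity check at round $i$ triggers a binary search for the period transition at $j$, the algorithm scans the block $[j..j+m]$ with $m=\max\Diag-\min\Diag$ in order to identify the mismatching diagonals; by the definition of $S'$ these rows all lie in $V'$. The lemma guarantees that for every $d\in\Diag\setminus\{d^*\}$ there exists $j'\in[j..j+m]$ with $x_{j'}\ne y_{j'+d}$. Since $\tau_S$ is a valid path in $G'$ passing through vertices at each such row, it either (i) leaves some $(j'',d)$ along a horizontal/vertical edge of $c'$-weight $1$, or (ii) takes the edge to $(\inext,d)$ whose $c'$-weight equals exactly the mismatch guaranteed by the lemma. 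Either way, the $c'$ mass of $\tau_S$ restricted to the rows $[j..j+m+1]$ is at least $1$ whenever the terminal edge has been charged on a diagonal from $\Diag\setminus\{d^*\}$. The shift-check case ($|\Diag|=1$) is even simpler since the algorithm charges $\cALG=1$ only when $x_{i+1}\ne y_{i+d+1}$, and the auxiliary sample $S^*$ contributes $\cALG=1$ only when a row of $S^*\subseteq S'$ witnesses an actual mismatch, which is again a $c'$-weight of $1$ on $\tau_S$.

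The principal obstacle is a careful accounting argument that avoids double-counting across adjacent segments: the same $c'$-charged edge inside $[j..j+m+1]$ might already be paying for the contiguous segment that immediately follows the sampling segment, since the algorithm re-enters contiguous mode right after the transition. To resolve this I plan to construct a mapping from each overcharged sampling edge to a $c'$-charged edge of $\tau_S$ inside the associated $[j..j+m+1]$ block and to show that every $c'$-charge is the image of at most two overcharges -- once from the terminal sampling edge that produced the overcount, and possibly once as an ordinary cost of the neighboring contiguous segment. Summing over all segments then yields $\cALG(\tau_S)\le c'(\tau_S)+c'(\tau_S)=2c'(\tau_S)$, completing the proof.
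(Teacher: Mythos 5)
Your high-level idea matches the paper's: the only edges where $\cALG$ can exceed $c'$ are the terminal edges of sampling segments, and each such unit of excess can be paid by a unit of $c'$-cost that $\tau_S$ incurs either at the actual mismatch row or at the diagonal-transition edge it uses to reach diagonal $d$ by row $i$. The paper organizes this as a row-by-row induction on the vertex costs $\cALG(i,d)\le 2c'(i,d)$ (Cases~0, 1.2.1, 1.2.2, 2.1, 2.2), whereas you organize it as a segment decomposition with a global charging map; either organization can in principle work.

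However, the final accounting step in your sketch is not carried through and, as stated, does not parse. You claim ``every $c'$-charge is the image of at most two overcharges --- once from the terminal sampling edge \dots and possibly once as an ordinary cost of the neighboring contiguous segment,'' but an ordinary (equal) cost on a contiguous edge is not an overcharge, so this is not a multiplicity bound on a single mapping, and the deduction $\cALG(\tau_S)\le c'(\tau_S)+c'(\tau_S)$ does not follow from what is written. What actually drives the factor of $2$ (and what the paper makes explicit in Case 2.2) is this: if $\tau_S$ does \emph{not} pass through the mismatching vertex $(j',d)$, then at some later row it transitions onto diagonal $d$; the transition edge costs $1$ in \emph{both} $c'$ and $\cALG$, \emph{and} $\cALG$ additionally charges $1$ at the terminal sampling edge --- so two $\cALG$ units sit against one $c'$ unit, and one must argue that each transition edge is charged this way for at most one sampling round. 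Your sketch also misstates subcase (ii): the $c'$-weight of $(i,d)\to(\inext,d)$ is $\indicator{x_{i+1}\neq y_{i+d+1}}$, which concerns row $i+1$ only and need not equal the mismatch at $j'\in[j..j{+}m]$ guaranteed by Lemma~\ref{lemma:periodicity}(b). These are fixable, but as written the proof has a genuine gap at precisely the step that yields the constant $2$.
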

\begin{proof}
	The proof proceeds by induction on the rows in $S'$.
	In particular, we show that for each row $i\in S'$ and each diagonal $d$ traversed by $\tau_S$  at row $i$, we have: $\cALG(i,d)\le 2c'(i,d)$. 
	The base case is $i=0$, the only diagonal $\tau_S$ traverses at row $0$ is $0$ for the cost is $0$ with respect to $\cALG,c'$.
	
	The induction step: let $i\in S'$, let $(\ilast,\dlast)$ be the last vertex on $\tau_S$ before moving to row $i$, and let $(i,d)$ be the first diagonal that $\tau_S$ traverses at row $i$ (observe that $\dlast \in \set{d+1,d}$). We first prove that $\cALG(i,d)\le 2c'(i, d)$. Then we prove it for the rest of the diagonals traversed by $\tau_S$ at row $i$.
	
	\textbf {Case 0- $(\dlast\neq d)$}:
	In this case, $c'(i,d )-c'(\ilast,\dlast) =1$. Since for each neighboring diagonal the difference in $\cALG$ cost is at most $1$, then
	$\cALG(i,d)\leq \cALG(\ilast,\dlast)+1$, the claim follows.
	
	\textbf {Case 1- $i$ is a contiguous round}:

	\textbf {Case 1.2.1-  $(\dlast= d)$ and $(\ilast, \dlast)$ is $\cALG$-potent}:
	In this case $\cALG(i,d)-\cALG(\ilast,\dlast)=c'(i,d)- c'(\ilast,\dlast)=\indicator{x_{\ilast+1}\neq y_{\ilast+\dlast+1}}$, the claim follows.
	
	\textbf {Case 1.2.2-$(\dlast= d)$ and $(\ilast, \dlast)$ is not $\cALG$-potent}:
	In that case from Lemma~\ref{lem:imp}, $\cALG(\ilast,\dlast)=\cALG(i,d)$. 
	In $c'$ the difference between the costs may be either $0$ or $1$, the claim follows.

		\textbf {Case 2- $i$ is a sampling round}: 
	
	\textbf {Case 2.1- $(\dlast= d)$, $i$ is a sampling round and the periodicity check passes at row $i$:}
	Observe that the cost $\cALG$ is not incremented over the last row in either of the diagonals at row $i$. While in $c'$ it may be increased, the claim follows.

	\textbf {Case 2.2- $(\dlast= d)$, $i$ is a sampling round and the periodicity check fails at row $i$:}
	If $(\ilast, \dlast)$ is non-potent then the proof follows by the argument used in case 1.2.2. Let us prove the claim for the case $(\ilast,\dlast)$ being potent. 
	In this case the algorithm performs a binary search to detect a period transition at row $j$. The algorithm sets $\cALG(i,d)=\cALG(\ilast ,d)+1$ if diagonal $d$ has a mismatch at some row $j'\in [j.. j+\max \Diag- \min \Diag]$. For the special diagonal $d^*$ that has no mismatch at none of the rows $j'$, the algorithm samples another set $S'$ and increments the cost if it finds a mismatch along one of the rows in $S'$.
	
	Let us first analyze the diagonals $d$ that have a mismatch at some row $j'$: 
	Recall that $\ipat$ is the first row on which the algorithm switched to sampling mode. 
	First observe that $\ell\in [\ipat.. i)$ we have: $\cALG(\ell,d)=\cALG(\ilast ,\dlast)$ since the costs were not incremented. 

	If $\tau_S$  passes through $(j',d)$, then
	$c'(j',d)$ was incremented by $1$, while in $\cALG(j',d)$, it was not incremented at row $j'$ and only at row $i$. So the contribution of the mismatch with respect to both cost functions is the same. If $\tau_S$ was not traversing through $(j',d)$: At a later row $j''$, $\tau_S$ transits to diagonal $d$ and pays a cost $1$ for the transition. In $\cALG$ we pay for this transition twice: once at the transition row, and second time at row $i$ (note that each such a transition is counted only once). The claim follows.
	The analysis of the special diagonal $d^*$ follows by the same argument.
	
	If $\tau_S$ traverses other diagonals at row $i$ then with respect to $c'$ it pays a cost $1$ on each diagonal transition. However, on $\cALG$ we may pay either $0$ or $1$ cost. Therefore, the claim holds for all vertices at $\tau_S$ touching row $i$.  
\end{proof}

We conclude the proof by claiming that the algorithm outputs \close. By Corollary~\ref{cor:c'}, in $G'$ 
there exists a shortest path to the sink of cost $\le t/2$ with respect to the cost $c'$.
Therefore, by Claim~\ref{claim:cALGc'} there exists a shortest path to the sink of cost $\le t$ 
with respect to the cost $\cALG$. 
By Lemma~\ref{lem:cALG} after processing the last row of $S'$ the cost of $\cA[0]$ equals to the cost of the shortest path to 
$(i_{\abs{S'}},0)$  is  at most $t$. Therefore, by monotonicity the cost along $\cA[0]$ does not exceed $t$ and the algorithm outputs \close.
This completes the proof of Lemma~\ref{claim:close}. 
\end{proof}

\subsection{Proof of Far Case $\ed(x,y)> 13t^2$}	
	
	For the purposes of analysis, we think of the algorithm as first independently sampling two sets $S_1, S_2$, where each set $S_j, j=1,2$ is drawn such that each row is inserted into $S_j$ independently with probability $\frac {\log n}{t}$. While staying at the sampling mode, the algorithm uses $S_1$ for {\em periodicity check} and the set $S_2$ for {shift check}. 
		
	To conclude the proof we rely on the following claim, which is a variant of Claim~\ref{claim:badSamplingEvent}. Roughly speaking, we define bad events to capture scenarios in which the algorithm might fail, either that on a sampling round on a single diagonal we skip too many mismatches, or if we have one diagonal we skip too many period violations.
	
	\begin{claim}\label{claim:badSamplingEventGeneralized}
		Let $x,y \in \Sigma^n$ let $i\in [n]$, and let $S_j\subseteq [n]$, $j=1,2$ be sets drawn by including each row in $S$ independently with probability $\frac {\log n}{t}$.
		
		Let $g\le 2t+1, p\in \zo^g$ and $d\in \pmt$.
		Let: 
\ifprocs
\begin{align*}
  i_{d,i}&=\min_{r\ge i}{\Delta_H (x_{[i.. r]}, y_{[i.. r]+d})> 4t}, \\
         &I_{d,i}=\sett{j\in [i..i_{d,i}]}{x_j\neq y_{j+d}}; \\
  i_{x,p,i}&= \min_{r\ge i}{\Delta_H (x_{[i..r]}, p^*)> 4t}, \\
         &I_{x,p,i}=\sett{j \in [i..  i_{x,p,i}]}{x_j\neq p_{j-i \bmod p}}; \\
  i_{y,p,i}&= \min_{r\ge i}{\Delta_H (y_{[i..r]}, p^*)}> 4t, \\
         &I_{y,p,i}=\sett{j\in [i.. i_{y,p,i}]}{y_j\neq p_{j-i \bmod p}}; 
\end{align*}
\else
\begin{align*}
  i_{d,i}&=\min_{r\ge i}{\Delta_H (x_{[i.. r]}, y_{[i.. r]+d})> 4t}, 
         &I_{d,i}=\sett{j\in [i..i_{d,i}]}{x_j\neq y_{j+d}}; \\
  i_{x,p,i}&= \min_{r\ge i}{\Delta_H (x_{[i..r]}, p^*)> 4t}, 
         &I_{x,p,i}=\sett{j \in [i..  i_{x,p,i}]}{x_j\neq p_{j-i \bmod p}}; \\
  i_{y,p,i}&= \min_{r\ge i}{\Delta_H (y_{[i..r]}, p^*)}> 4t, 
         &I_{y,p,i}=\sett{j\in [i.. i_{y,p,i}]}{y_j\neq p_{j-i \bmod p}}; 
\end{align*}
\fi
		where $\Delta_H(x_{[i..r]}, p^*)$ is the Hamming distance between $x_{[i..r]}$ and the corresponding periodic string with period pattern $p$ of length $r-i$.
		If in either of the above definitions no such $r$ exists, we set $i_{d,i}, i_{x,p,i}, i_{y,p,i}=\infty$.
		
		We define an event $B_S(i,d)$ as the event where $i_{d,i}\neq \infty$ and $S\cap I_{d,i}=\emptyset$, similarly we define $B_S(i,x,p)$ as the event where $i_{x,p,i}\neq \infty$ and $S\cap I_{x,p,i}=\emptyset$ (and $B_S(i,y,p)$ is defined similarly).
		Then: $$\Pr [B_S(i,d)]<\frac{1}{9n(2t+1)}  $$ and similarly $\Pr [B_S(i,x,p)], \Pr [B_S(i,y,p)]$ are at most $\frac{1}{9n(2t+1)}$.
	\end{claim}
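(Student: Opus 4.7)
The plan is to prove this as a direct generalization of Claim~\ref{claim:badSamplingEvent}, using the same product-bound argument but with the threshold $3t$ replaced by $4t$ so as to obtain the slightly stronger probability bound $\frac{1}{9n(2t+1)}$ needed here.

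First I would observe that all three events share the same structure: each is defined by a set ($I_{d,i}$, $I_{x,p,i}$, or $I_{y,p,i}$) of positions where a mismatch occurs relative to a reference string (respectively $y$ shifted by $d$, the periodic extension of $p$ aligned with $x$, or the periodic extension of $p$ aligned with $y$), and the bad event requires that none of those positions is picked into $S$. By the minimality of $i_{d,i}$ (respectively $i_{x,p,i}, i_{y,p,i}$) in its defining minimum, the Hamming distance accumulated up to position $i_{d,i}$ strictly exceeds $4t$, so each of these sets has cardinality at least $4t$ whenever it is finite.

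Next I would apply independence of the sampling. Since each row is included in $S$ independently with probability $\frac{\log n}{t}$, for any fixed set $I$ we have
\[
\Pr[S\cap I=\emptyset] = \left(1-\tfrac{\log n}{t}\right)^{|I|}.
\]
Conditioning on the event that $i_{d,i}<\infty$ (which depends only on $x,y$, not on $S$), the set $I_{d,i}$ is a deterministic set of at least $4t$ positions, so
\[
\Pr[B_S(i,d)] \le \left(1-\tfrac{\log n}{t}\right)^{4t} \le e^{-4\log n} = n^{-4}.
\]
For the range of interest ($t\le\sqrt n$, and in particular $2t+1\le 3n$), we have $9n(2t+1)\le 27n^{2}\le n^{4}$ for all sufficiently large $n$, hence $n^{-4}\le \frac{1}{9n(2t+1)}$ as required. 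The same computation yields the bounds for $B_S(i,x,p)$ and $B_S(i,y,p)$ verbatim, since in both cases the defining set of mismatches has size strictly greater than $4t$ whenever the corresponding index is finite.

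I do not anticipate a real obstacle: the argument is a one-line product bound followed by a comparison of $n^{-4}$ with $\frac{1}{9n(2t+1)}$. The only subtlety worth flagging explicitly is that the sets $I_{d,i}$, $I_{x,p,i}$, $I_{y,p,i}$ are functions of the input strings only and are independent of the random set $S$, so the product formula applies without any conditioning issue; and that $|I|\ge 4t$ follows from the strict inequality $\Delta_H(\cdot,\cdot)>4t$ in the defining minimum.
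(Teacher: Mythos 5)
Your proof is correct and matches what the paper has in mind: the paper dispatches this claim with a one-line appeal to ``Chernoff bound,'' but the underlying computation is exactly the elementary product bound $\Pr[S\cap I=\emptyset]=\bigl(1-\tfrac{\log n}{t}\bigr)^{|I|}\le e^{-4\log n}=n^{-4}\le\tfrac{1}{9n(2t+1)}$ that you spell out, a verbatim adaptation of the inequality chain already displayed for Claim~\ref{claim:badSamplingEvent} with the threshold $3t$ replaced by $4t$. The only cosmetic slip is writing $|I|\ge 4t$ where the strict inequality $\Delta_H>4t$ actually gives $|I|\ge 4t+1$, but that only makes your bound slightly stronger, so there is no gap.
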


	The proof of Claim~\ref{claim:badSamplingEventGeneralized} follows immediately by Chernoff bound. Let us conclude the proof using the claim. First by union bound on the set of all possible rows and diagonals we get that except with probability $n(2t+1) \frac 1 {3n(2t+1)}\le \frac 1 9$, none of the events $B_{S_1}(i,d)$ happens. Moreover, using a union bound on the value of all possible $i\ge 2t+1$ and $j\in[0.. 2t+1]$ we have that that except with probability $2  ( n(2t+1) \frac 1 {9n(2t+1)})\le \frac 2 9$, none of the events $B_{S_2}(i,x_{[i-j.. i]},x), B_{S_2}(i,y_{[i-j.. i]},y)$ happens. So overall none of the events mentioned above happen with probability at least $2/3$.  We conclude the proof by showing that in such a case, the algorithm outputs $\far$ with probability $1$. 
	
	Assume for sake of contradiction that none of the events described above happen and still the algorithm outputs $\close$. We may view the algorithm as computing the shortest path $\tau$ on the grid graph $G_{S'}$, with respect to the cost function $\cALG$ defined on the proof of Claim~\ref{claim:close}.
	
	The path $\tau$ divides the set of rows processed by the algorithm into intervals $I'_0, \dots, I'_k\subseteq S$ such that (i) $I'_j$ and $I'_{j+1}$, $j \in [0,k-1]$ can intersect on at most single row, and (ii) for each interval $I'_{\ell}$, $\tau$ traverses along vertices on diagonal $d_\ell$ while paying cost $0$ (so moving between intervals correspond to either substitutions or diagonal transitions implying insertions and deletions).
	
	Observe that $k\le t-\abs {d_k}$ and $\max I_k$ is the largest round processed by the algorithm. 
	
	Let us define another set of intersecting intervals $I_0,\dots , I_k\subseteq [n]$ as follows:  $I_0=[0..\max I'_0]$,  $I_\ell=I'_\ell$ and $I_k=[\min I'_k.. n]$. 
	
	Consider the path $\tau$ in $G_{x,y}$ that on rows in $I_\ell$ follows diagonal $d_\ell$, while paying the cost along all diagonal edges, and then finally it traverses to $(n,0)$. 
	Let us denote by $\tau_{I_\ell}$ the cost of $\tau$ confined to the rows in $I_\ell$, which equals: $\Delta_H(x_{I_\ell}, y_{I_\ell+d_\ell})$. 
	
	\begin{claim}
		For all $\ell\in [k]$ we have:
		$$ c(\tau_{I_\ell})=\Delta_H(x_{I_\ell}, y_{I_\ell+d_\ell})<12t.$$ 
	\end{claim}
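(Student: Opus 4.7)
The plan is to bound $\Delta_H(x_{I_\ell}, y_{I_\ell+d_\ell})$ by partitioning $I_\ell$ according to the algorithm's operating mode and leveraging the fact that $\tau$ pays $\cALG$-cost zero on every edge along diagonal $d_\ell$ inside $I_\ell$, combined with the non-occurrence of the bad events from Claim~\ref{claim:badSamplingEventGeneralized}.

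First I would dispose of contiguous-mode rows of $I_\ell$. There $\cALG((i,d_\ell)\to(\inext,d_\ell))=\indicator{x_{i+1}\neq y_{i+d_\ell+1}}$, and since $\tau$ pays $0$ we get $x_{i+1}=y_{i+d_\ell+1}$, so contiguous rows contribute $0$ to the Hamming distance. Next, I would analyze any shift-check sub-interval, where $\Diag_i=\set{d_\ell}$. If the Hamming distance along $d_\ell$ from the first row $i_0$ of such a phase to some later row exceeded $4t$, then the non-occurrence of $B_{S_2}(i_0,d_\ell)$ would force a sampled comparison to witness a mismatch; that would trigger a shift-check failure and increment $\cA[d_\ell]$, contradicting the cost-$0$ property of $\tau$. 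Hence a shift-check phase contributes at most $4t$.

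Next I would handle a periodicity-check sub-interval with $|\Diag|>1$ containing $d_\ell$. By construction, on entering the sampling mode Lemma~\ref{lemma:periodicity}(a) certifies that $x_{[\ipat..i]}$ and $y_{[\ipat..i]+\max\Diag}$ follow the same period pattern $p$. Since $d_\ell\in\Diag$ implies $d_\ell\equiv\max\Diag\pmod{g}$, the shifted string $y_{[\ipat..i]+d_\ell}$ matches the same periodic pattern $p^*$ (possibly with a cyclic shift inside one period, which preserves Hamming distance to $p^*$). The non-occurrence of $B_{S_1}(\ipat,x,p)$ and $B_{S_1}(\ipat,y,p)$ then yields $\Delta_H(x_{[\ipat..r]},p^*)\le 4t$ and $\Delta_H(y_{[\ipat..r]+\max\Diag},p^*)\le 4t$ throughout the phase, so by the triangle inequality $\Delta_H(x_{[\ipat..r]},y_{[\ipat..r]+d_\ell})\le 8t$. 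Summing contributions across the three mode types gives $0+4t+8t=12t$, which is the claimed strict bound.

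The main obstacle will be justifying that each mode type contributes only once across $I_\ell$, since in principle the algorithm can oscillate between contiguous and sampling modes. I expect to close this gap by tracing what happens at each sampling-to-contiguous transition: such a transition is always triggered by a detected violation (periodicity or shift check failure), which in turn causes cost updates on every surviving diagonal in $\Diag$ except possibly the exceptional diagonal $d^*$. For $d_\ell$ to remain at cost $0$ through the transition, either $d_\ell\notin\Diag$ at that moment (so subsequent sampling rounds do not involve $d_\ell$ at all and contribute nothing to Hamming along $d_\ell$ beyond what the contiguous analysis already covers) or $d_\ell=d^*$, in which case the re-sampled set $S^*$ had no witness of a mismatch; extending Claim~\ref{claim:badSamplingEventGeneralized} to this fresh random set (via a union bound folded into the analysis) forces the Hamming distance on $d_\ell$ over the preceding periodicity block to be at most $4t$, reusable within the $8t$ budget. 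Putting these cases together keeps the total below $12t$.
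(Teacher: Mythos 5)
Your proposal takes essentially the same approach as the paper: both decompose $I_\ell$ by sampling mode, bound the shift-check phase by $4t$ and the periodicity-check phase by $8t$ via the non-occurrence of the bad events from Claim~\ref{claim:badSamplingEventGeneralized}, and note that contiguous-mode rows on diagonal $d_\ell$ have no mismatches because $\tau$ pays zero there. Your extra care about the cyclic-shift subtlety when $d_\ell \neq \max\Diag$ and about mode oscillation fills in details that the paper's terser proof elides, but the underlying argument is the same.
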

		
	\begin{proof}
		Let us consider an interval $I_\ell$: 
		Observe that once the diagonal $d_\ell$ is part of the diagonal on which the algorithm detects a mismatch followed by period transition (end of a sampling mode), then the current interval is over. Hence we can break each interval $I_\ell$ into its prefix $I_\ell^p$ that contains the rows on which the algorithm performs a shift check. And its suffix $I_\ell^s$ that contains the rows on which the algorithm performs periodicity check and finally encounters a mismatch on $d_\ell$.
		
		We conclude the claim by proving that  $\Delta_H(x_{I^p_\ell}, y_{I^p_\ell+d_\ell})<4t$ and $\Delta_H(x_{I^s_\ell}, y_{I^s_\ell+d_\ell})<8t$.

		Observe that since we assume that the event $B_{S_1}(\min I_\ell, d_\ell)$ did not happen then by definition $\Delta_H(x_{I^p_\ell}, y_{I^p_\ell+d_\ell})<4t$. Next since $B_{S_2}(i_\ell, x_{[i_\ell-g.. i-1]},x)$ and $B_{S_2}(i_\ell+\max \Diag, x_{[i_\ell-g.. i-1]},y)$ did not happen then we get: $\Delta_H(x_{[i_\ell +1.. \max I_\ell]}, p^*)<4t$ and $\Delta_H(y_{[i_\ell +1.. \max I_\ell]}+d_\ell, p^*)<4t$ where $p^*$ as usual denotes repeated concatenation of $p$. By triangle inequality we get that $\Delta_H(x_{[i_\ell +1..\max I_\ell]}, y_{[i_\ell +1..\max I_\ell]+d_\ell})<8t$, as claimed.
	\end{proof}
	
	In total this implies that the cost of $\tau$ is at most $k\times 12 t + k + t-\abs {d_k} \le 12t^2+t \le 13t^2 $, contradicting the fact that $\ed(x,y)>13t^2$.

\subsection{Query Complexity Analysis}

\begin{claim}
	The query complexity of the algorithm presented in Section~\ref{sec:AlgorithmDescription} is bounded by $\tilde{ O}(\frac{n}{t}+t^3)$.
\end{claim}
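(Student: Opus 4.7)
The plan is to decompose the total query count into contributions from the sampling mode and from the contiguous mode, and to show that each is bounded by $\tO(n/t+t^3)$.

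\textbf{Sampling mode.} A Chernoff bound gives $|S|=\tO(n/t)$ w.h.p., and each sampled round performs either an $O(1)$-query shift check or an $O(1)$-query periodicity check, so this base contribution is $\tO(n/t)$. There are three additional sources in the sampling mode: (a) binary searches for the period-transition row, costing $O(\log n)$ per invocation and invoked at most once per periodicity violation; (b) the auxiliary sets $S^*$, each sampled at rate $\log n / t$ over an interval of length $i-\ipat$, whose aggregate contribution is $\tO(n/t)$ because these intervals are disjoint and together span at most $n$ positions; and (c) the post-violation mismatch-detection scan, which checks every $d\in\Diag_i$ across the window $[j..j+\max\Diag_i-\min\Diag_i]$ at cost $O(|\Diag_i|^2)$. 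I would bound (c) by observing that each periodicity violation increments at least $|\Diag_i|-1$ entries of $\cA$, and that the total number of such increments is $\sum_{d\in\pmt}(t-|d|)=O(t^2)$ because a diagonal $d$ is discarded once $\cA[d]>t-|d|$. Hence $\sum|\Diag_i|\leq O(t^2)$ over all violations, and $\sum|\Diag_i|^2 \leq \max_i|\Diag_i|\cdot\sum|\Diag_i| \leq O(t)\cdot O(t^2)=O(t^3)$. An analogous bound covers the shift-check failures (each costing $O(1)$).

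\textbf{Contiguous mode.} Each round $i$ in contiguous mode contributes $O(|\Diag_i|)$ queries, so the total is $\sum_{i\in\text{contig}}|\Diag_i|$. The bound rests on two structural observations: first, as above, the total number of mismatches (equivalently, increments to $\cA$) in contiguous mode is $O(t^2)$; second, whenever $2(\max\Diag_i-\min\Diag_i)\leq 4t$ consecutive rows elapse without a mismatch, the algorithm switches back to sampling, so each burst has length $O(t\cdot(k_b+1))$, where $k_b$ is its internal mismatch count. Since $|\Diag_i|$ increases by at most two per mismatch, I would charge the $|\Diag_i|$ queries made during each quiet stretch to the preceding mismatches within the same burst, using that the ``new'' diagonals added into $\Diag$ since the start of the burst number at most $2k_b$. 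Carefully summing these charges over all bursts, and combining with $\sum k_b=O(t^2)$, yields $\sum_{i\in\text{contig}}|\Diag_i|=\tO(t^3)$.

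Combining the two modes then gives the total $\tO(n/t+t^3)$ query bound, which also matches the time complexity since each query is accompanied by only $O(1)$ bookkeeping. The main obstacle is the contiguous-mode amortization: the naive estimate (total contiguous rows)$\times\max|\Diag_i|$ is $O(t^3)\cdot O(t)=O(t^4)$, so the $\tO(t^3)$ bound genuinely requires tying the per-row cost in quiet stretches to the ongoing mismatch count rather than to $\max|\Diag_i|$. This is the analog, in our sampling/contiguous setting, of the LMS observation that the total number of ``non-trivial'' vertex visits in the grid graph is bounded by the number of edit operations, and it is this amortization that is the most delicate step of the proof.
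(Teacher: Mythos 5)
Your decomposition into contiguous- and sampling-mode contributions matches the paper's high-level plan, and your sampling-mode accounting is broadly on the right track. However, the contiguous-mode argument has a genuine gap, and it is the load-bearing part of the proof.

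You model the contiguous-mode query cost as $\sum_{i\in\text{contig}}|\Diag_i|$, counting one query per (row, diagonal) pair, and then propose an amortization to bound this sum by $\tO(t^3)$. Two problems arise. First, $\sum_i|\Diag_i|$ over-counts the query complexity: query complexity counts \emph{distinct} positions read, and within a burst of consecutive rows the positions $y_{i+d+1}$ for $i$ ranging over the burst and $d\in\Diag_i$ repeatedly hit the same contiguous stretch of $y$. Second, and more seriously, the amortization you sketch does not establish $\sum_i|\Diag_i|=\tO(t^3)$. Your charging scheme ties the per-row cost to the burst's mismatch count $k_b$, but at the start of a burst $|\Diag_i|$ can already be $\Theta(t)$ (inherited from the preceding sampling-mode hand-off), and these pre-existing diagonals contribute $\Theta(t)$ per row for the full duration of the burst, not just $O(k_b)$ per row. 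Carrying your accounting through gives roughly $4t\cdot(\text{number of mismatch rows})\cdot(\text{number of diagonals})=O(t)\cdot O(t^2)\cdot O(t)=O(t^4)$, and in fact the paper itself acknowledges (in its time-complexity analysis) that the naive per-round cost of $O(|\Diag|)$ over $O(t^3)$ contiguous rows is $O(t^4)$ and requires suffix trees to bring the \emph{time} down to $\tO(t^3)$. So the quantity you are bounding really can be $\omega(t^3)$; your amortization would have to overcome an obstacle that the paper deliberately routes around.

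The paper's argument is both simpler and correct: partition the contiguous rounds into maximal blocks $I$ of at most $2(2t+1)$ consecutive rows; there are at most $O(t^2)$ such blocks (each block after the first $2(2t+1)$ rows must contain a mismatch, and there are at most $\sum_d(t-|d|)=O(t^2)$ mismatches available); and crucially, within a single block $I$ the distinct positions queried are $x_I$ together with $y_{[\min I-t\,..\,\max I+t]}$, a total of only $O(t)$ characters, regardless of how large $|\Diag_i|$ is or how often positions are revisited. Multiplying gives $O(t^3)$. This contiguity-of-positions observation is the missing idea; the amortization you call ``the most delicate step'' is needed for time, not for queries, and the paper solves it differently (with truncated suffix trees) rather than via the charging you propose.

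A smaller issue: you charge $O(\log n)$ per binary search for the period-transition row, but each probe must verify periodicity over a window of length $2(\max\Diag-\min\Diag)=O(t)$, so a single binary search costs $O(t\log n)$ queries, as the paper states. This does not change the final bound (since $O(t\log n)$ times $O(t^2)$ violations is still $\tO(t^3)$), but it is worth getting right.
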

\begin{proof}
We bound the number of queries by showing that while staying at the contiguous mode the algorithm queries the input in $O(t^3\log{n})$ locations, and while staying on the sampling mode, it queries $O(\frac {n\log n}{t})$ locations with high probability.

Observe that the algorithm enters into the contiguous mode whenever it encounters a mismatch on at least one of the potent diagonals. Also observe that after $2(2t+1)$ rounds on which the algorithm stays at the contiguous mode, either there exists a mismatch on at least one of the potent diagonals or the algorithm shifts to the sampling mode. Therefore, for each $2(2t+1)$ consecutive rows, spent on the contiguous mode we can match a mismatch on at least one of the potent diagonals. Also observe that  (Lemma~\ref{lem:mismatchPotent}) each diagonal can have at most $O(t)$ mismatches while it is potent. Therefore, the number of $2(2t+1)$ consecutive rounds on  which the algorithm stays at the contiguous mode is bounded by $O(t^2)$.

Notice also that on each block $I$ of $2(2t+1)$ consecutive rounds on which the algorithm stays at the contiguous mode, it samples only $O(t)$ characters from both $x$ and $y$ (as it needs to compare from $x_I$ against values in $y_{\set{\min I-t, \dots, \max I+t}}$). Therefore, in total the number of queries used while staying at the contiguous mode is bounded by $O(t^3)$.

Regrading the sampling mode, at each sampling round, if both periodicity checks pass then the algorithm makes only $1$ queries into both strings $x,y$. So such rounds can contribute at most $\card{S}$ queries (which is bounded by $O(\frac {n\log n}{t})$ with high probability). Otherwise, it makes $O(t\log{n})$ additional queries and by Claim~\ref{claim:periodViolation} we are guaranteed to find at least one mismatch on one of the potent diagonals. Therefore, by the same argument used for the contiguous search the number of queries made by during the periodicity check of the sampling mode is bounded by $O(t^3\log{n})$. The number of rows sampled by the shift mode is bounded by $O(\frac {n\log n}{t})$ with high probability. Hence, the claim follows.
\end{proof}

\subsection{Time Complexity Analysis}
\begin{claim}
	The time complexity of the algorithm presented in Section~\ref{sec:AlgorithmDescription} is bounded by $\tilde{ O}(\frac n t+t^3)$.
\end{claim}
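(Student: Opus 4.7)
The approach is to parallel the query-complexity argument step by step and show that each query is accompanied by only $O(\mathrm{polylog}(n))$ auxiliary computation, so that the running time matches the query bound up to $\tilde O(\cdot)$ factors. I split the total running time into three contributions: the contiguous mode, the cheap (passing) sampled rounds, and the expensive (failing) sampled rounds.

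For the contiguous mode, I would check that processing a single row $i$ takes $O(|\Diag_i|) = O(t)$ time. Each diagonal $d \in \Diag_i$ is handled in $O(1)$: the algorithm reads $\cA[d]$, $\cA[d-1]$, $\cA[d+1]$, the two boolean potency flags maintained for the current and previous row, and the two characters $x_{i+1},y_{i+d+1}$, then possibly inserts $d\pm 1$ into $\Diag_i$ or $\Dnext$. The previous claim already bounded the number of contiguous rounds by $O(t^2)$, so the total contiguous-mode time is $O(t^3)$.

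For the cheap sampled rounds (those whose periodicity or shift check passes), each round performs $O(1)$ comparisons against the cached pattern $p$ or against the unique diagonal in $\Diag_i$, plus $O(1)$ bookkeeping. A Chernoff bound gives $|S_1|+|S_2|=\tilde O(n/t)$ with high probability, contributing $\tilde O(n/t)$ time overall. For each expensive sampled round (a violation), the binary search to locate the transition row $j$ runs in $O(\log n)$ steps of $O(1)$ work each; identifying the mismatched diagonals over the window $[j,j+m]$ of length $m=\max\Diag-\min\Diag \le 2t$ is done by first locating, in $O(t)$ time, the earliest row $i_x \ge j+1$ where $x$ deviates from $p$ and the analogous row $i_y$ for $y$, and then using the explicit case analysis in part (b) of Lemma~\ref{lemma:periodicity} to compute in $O(1)$, for every $d \in \Diag$, the row at which diagonal $d$ witnesses a mismatch (thereby also pinpointing the exceptional diagonal $d^*$ if it exists). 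The subsequent sampling of $S^*$ and its comparisons use $\tilde O(1)$ time in expectation. Thus each violation event costs $\tilde O(t)$ time, and since each event triggers a transition to contiguous mode, the number of violation events is at most one per contiguous block, i.e.\ $O(t^2)$, yielding a total of $\tilde O(t^3)$ time for the expensive rounds.

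Adding the three contributions gives $\tilde O(n/t + t^3)$ as required. The main obstacle will be formalizing that the mismatch identification at each violation event can be carried out in $O(t)$ time, rather than a naive $O(t^2)$ scan over $m$ rows and $|\Diag|$ diagonals: this turns the combinatorial content of Lemma~\ref{lemma:periodicity}(b) into an algorithmic routine, and requires care with the boundary cases (the special diagonal $d^*$, the possibility that only one of $x,y$ breaks from periodicity at $j+1$, and the interplay with the auxiliary sample $S^*$). With that lemma-to-algorithm translation in hand, the rest of the analysis reduces to matching the query accounting already established.
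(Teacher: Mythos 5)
The proposal has a genuine gap in the contiguous-mode accounting. You assert that ``the previous claim already bounded the number of contiguous rounds by $O(t^2)$,'' but that is not what the query-complexity proof establishes: it bounds the number of \emph{blocks of $2(2t+1)$ consecutive rows} spent in contiguous mode by $O(t^2)$, so the number of contiguous \emph{rows} is $O(t^3)$. (Each diagonal can accumulate only $O(t)$ cost, so there are $O(t^2)$ mismatch events, but the algorithm remains in contiguous mode for up to $O(t)$ rows after each mismatch before switching back to sampling mode.) With $O(t^3)$ contiguous rows and $O(|\Diag_i|)=O(t)$ work per row, your row-by-row accounting gives only the naive bound $O(t^4)$, not $O(t^3)$.

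The paper addresses exactly this issue and explicitly notes the naive $O(t^4)$ bound before bringing it down. The missing ingredient is a longest-common-extension (LCE) acceleration: for each block of $O(t)$ contiguous rows, build a (truncated) suffix tree over the $O(t)$-length substrings of $x$ and $y$ in $O(t)$ time; this supports $O(1)$-time queries for the maximal $i_{\max}\ge i$ such that $x_{[i..i_{\max}]}=y_{[i..i_{\max}]+d}$. A potent diagonal that matches until its next mismatch can then be ``fast-forwarded'' in $O(1)$ rather than row by row, so the per-block cost becomes $O(t + k_i^2)$, where $k_i$ is the number of mismatches in that block. Summing over the $O(t^2)$ blocks gives $t^2\cdot O(t) + \sum_i k_i^2 = O(t^3)$. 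Your sampled-round analysis is broadly in the right spirit (and your suggestion of locating $i_x,i_y$ and invoking Lemma~\ref{lemma:periodicity}(b) explicitly is a reasonable alternative to the paper's use of suffix trees for the violation step), but without the LCE amortization in the contiguous mode the overall $\tilde O(n/t + t^3)$ bound does not follow.
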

	
\begin{proof}
	Recall that the algorithm has two modes: the sampling mode and the contiguous mode.
	Let us first analyze the running time of the algorithm while staying at the contiguous mode.
	Naively, while staying at that mode, the algorithm has to pay at each round $O(\card {\Diag})=O(t)$ operations. Since there are at most $O(t^3)$ such rounds, then the total time spent at this mode is bounded by $O(t^4)$. However, we can utilize suffix tree machinery to accelerate the computation process.
	
	Observe that whenever the algorithm enters into the contiguous mode, then at each round $i$ and for each of the potent diagonals $d$ it has to update the cost based on whether $x_{i+1}=y_{i+d+1}$. Also observe that whenever the values match, the diagonal is retained as potent at the next round $i+1$. So if we have an efficient way to determine at row $i$ and for diagonal $d$, what is the maximal row $i_{max}\ge i$ such that: $x_{[i,\dots, i_{max}]}=y_{[i,\dots, i_{max}]+d}$. Using suffix trees machinery one can solve this problem in $O(1)$-time. However, building the suffix tree requires querying the entire strings $x,y$.
	
	Nevertheless, we may apply the suffix trees machinery only on substrings of $x,y$ of $5t$-length at a time. This ideas also have been implemented in see~\cite{BZ16,CGK16b} in the context of edit distance computation in a streaming fashion. 
	
	Using suffix trees, we pay $O(t)$-time generating the (truncated) suffix trees. Then, given a row $i$ and for diagonal $d$, using the suffix tree we can find the maximal row $i_{max}\in \set {i,\dots, i+5t}$ such that: $x_{[i,\dots, i_{max}]}=y_{[i,\dots, i_{max}]+d}$. For a given $i$, let $\Diag$ be the set of potent diagonals, and let $k_i$ be the mismatches encountered on the potent diagonals along the next $5t$-rows. Then the running time of the algorithm is bounded by $O(t+k_i^2)$. In total, if we divide the contiguous rounds into blocks of length $5t$ we get that the running time of the contiguous mode is bounded by $t^2(O(t))+\sum k_i^2  =O(t^3)$.

	While staying at the sampling mode at each round the algorithm performs only $O(1)$-operations, provided that the period was kept. This contributes $O(\frac {n\log n}t)$ factor to the running time. If the period was violated, then the algorithm first performs a binary search to detect a period transition, which requires $O(t\log n)$-time. Then it again can use suffix tree machinery to find out the list of diagonals having a mismatch in the next $2(2t+1)$ rows. This takes $O(t)$-time, and will be applied at most $O(t^2)$-times (as whenever this happens one of the potent diagonals increases its cost).
	
	Summarizing, using suffix tree machinery, the algorithm can be implemented so that its running time complexity is bounded by $\tilde{ O}(\frac n t+t^3)$, as claimed.
	\end{proof}

\subsection{Distinguishing $t$ vs. $O(t^{2-\epsilon})$-gap in time $\tilde O(\frac{n}{t^{1-\epsilon}}+t^3)$}

\begin{proposition} 
	There exists an algorithm that, 
	given as input strings $x,y\in \zo^n$ and an integer $t \le \sqrt{n}$,
	has query and time complexity bounded by $\tilde O(\frac{n}{t^{1-\epsilon}}+t^3)$,
	and satisfies the following:
	\begin{itemize}
		\item If $\ed(x,y)\le t/2$ it outputs \close with probability $1$.
		\item If $\ed(x,y)= \Omega (t^{2-\epsilon})$ it outputs \far with probability at least  $2/3$.
	\end{itemize}
\end{proposition}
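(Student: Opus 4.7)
The plan is to reuse the algorithm and analysis of Theorem~\ref{thm:main} almost verbatim, changing only the sampling rate. Specifically, we modify the algorithm of Section~\ref{sec:AlgorithmDescription} so that each row is included in $S$ (and in the secondary set $S^*$ used for single-diagonal verification) independently with probability $\frac{\log n}{t^{1-\epsilon}}$, rather than $\frac{\log n}{t}$. Everything else in the algorithm, including the contiguous mode, the periodicity check, the shift check, and the threshold $t$ at which it outputs \far, remains unchanged.

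For the query and time complexity, the contiguous mode is unaffected by the sampling rate: as before, it is entered at most $O(t^2)$ times (once per mismatch on a potent diagonal), and each entry runs for at most $2(2t+1)$ rows contributing $O(t)$ comparisons, giving $\tilde O(t^3)$ total. The sampling mode now probes only $|S|+|S^*|=\tilde O(n/t^{1-\epsilon})$ rows with high probability; the additional $O(t\log n)$ work done on a period violation is still amortized against the $O(t^2)$ mismatches on potent diagonals, contributing $\tilde O(t^3)$. Hence both the query complexity and the running time become $\tilde O(n/t^{1-\epsilon}+t^3)$.

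For correctness, the close case ($\ed(x,y)\le t/2\Rightarrow\close$) is preserved verbatim: the proof of Lemma~\ref{claim:close} is deterministic once $S,S^*$ are fixed and does not use the value of the sampling rate. The far case requires rerunning the analysis of Section~3.5 with a weaker guarantee on the bad sampling events. With sampling rate $\frac{\log n}{t^{1-\epsilon}}$, the analogue of Claim~\ref{claim:badSamplingEventGeneralized} is obtained by replacing the threshold $4t$ in the definitions of $i_{d,i},i_{x,p,i},i_{y,p,i}$ with $4t^{1-\epsilon}\cdot c\log n$ (for a suitable constant $c$); a Chernoff bound then gives that no bad event occurs with probability at least $2/3$. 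Consequently, on each interval $I_\ell$ along the alleged low-cost path $\tau$ we obtain $\Delta_H(x_{I_\ell},y_{I_\ell+d_\ell})=\tilde O(t^{1-\epsilon})$ in both the shift-check prefix and the periodicity-check suffix. Summing over the at most $t-|d_k|\le t$ intervals gives $c(\tau)=\tilde O(t\cdot t^{1-\epsilon})=\tilde O(t^{2-\epsilon})$, contradicting the assumption $\ed(x,y)=\Omega(t^{2-\epsilon})$ for a sufficiently large hidden constant.

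The main (very mild) obstacle is bookkeeping the polylogarithmic factors inside the $\Omega(t^{2-\epsilon})$ threshold: the per-interval Hamming cost is $\tilde O(t^{1-\epsilon})$ rather than $O(t^{1-\epsilon})$, so the final contradiction works only if we take the $\Omega(\cdot)$ in the statement to hide a $\mathrm{polylog}(n)$ factor, or, alternatively, absorb the $\log n$ into $t^\epsilon$ by tuning the sampling rate to $\frac{1}{t^{1-\epsilon}}$ (without the $\log n$) and using the Chernoff tail directly. Beyond this, no new ideas are required: the adaptive switching between sampling and contiguous scanning, the periodicity argument via Lemma~\ref{lemma:periodicity}, and the charging of every contiguous bulk to at least one potent-diagonal mismatch all carry over without change.
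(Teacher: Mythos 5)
Your proposal matches the paper's own (sketched) proof: both change only the sampling rate from $\frac{\log n}{t}$ to $\frac{\log n}{t^{1-\epsilon}}$, observe that the contiguous-mode cost and mismatch-charging argument are unchanged (still $\tilde O(t^3)$), and re-run the far-case analysis with the per-interval Hamming threshold scaled down from $\Theta(t)$ to $\Theta(t^{1-\epsilon})$ so that the $O(t)$ intervals contribute $\tilde O(t^{2-\epsilon})$ total. Your version is somewhat more detailed than the paper's sketch (in particular, you correctly flag the minor bookkeeping of polylog factors inside the $\Omega(\cdot)$), but it is the same route; one small nit is that the extra $c\log n$ factor you add to the Chernoff threshold is unnecessary — with rate $\frac{\log n}{t^{1-\epsilon}}$ a threshold of $4t^{1-\epsilon}$ already yields the same $n^{-\Theta(1)}$ failure probability as in the original claim.
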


Let us briefly sketch the proof. We keep the same algorithm structure, where the only change is in the rate of sampling: instead of sampling at a rate $\tilde{O} (\frac{1}{t})$, our algorithm samples at a rate $\tilde{ O} (\frac{1}{t^{1-\epsilon}})$. That will increase the query and running complexity of the modified algorithm into $\tilde{ O}(\frac{n}{t^{1-\epsilon}}+t^3)$. 

The proof of correctness follows by the same arguments, where one have to show that while sampling at this rate with high probability the algorithm detects mismatches and period violations at a rate of $\frac{1}{t^{1-\epsilon}}$. 

\subsection{Succinct Representation of an Alignment}
\label{sec:succinct}
 Our algorithm can succinctly represent an alignment in $\tilde{O}(t^2)$ bits. Note that our algorithm has at most $O(t^2)$ sampling mode, therefore at most $O(t^2)$ contiguous mode. During the $i$th contiguous mode, we can represent an alignment  using $\tilde{O}(k_i)$ bits if $k_i$ is the number of edit operations paid during that contiguous mode. Since, in each sampling mode, the alignment is given by a single diagonal, it can be represented in $\tilde{O}(1)$ bits. Thus, over all the contiguous and sampling modes, representing the alignment requires $\tilde{O}(\sum_{i} k_i +t^2)=\tilde{O}(t^2)$ bits.

\ifprocs
\else {
  \small
  \bibliographystyle{alphaurlinit}
}
\fi
  
\bibliography{editDistance}

\newcommand{\etalchar}[1]{$^{#1}$}
\begin{thebibliography}{AHWW16}

\bibitem[ABW15]{ABW15}
A.~Abboud, A.~Backurs, and V.~V. Williams.
\newblock Tight hardness results for {LCS} and other sequence similarity
  measures.
\newblock In {\em {IEEE} 56th Annual Symposium on Foundations of Computer
  Science, {FOCS} 2015}, pages 59--78, 2015.
\newblock \href {http://dx.doi.org/10.1109/FOCS.2015.14}
  {\path{doi:10.1109/FOCS.2015.14}}.

\bibitem[AG98]{AG98}
A.~Apostolico and R.~Giancarlo.
\newblock Sequence alignment in molecular biology.
\newblock {\em Journal of Computational Biology}, 5(2):173--196, 1998.

\bibitem[AHWW16]{AHWW16}
A.~Abboud, T.~D. Hansen, V.~V. Williams, and R.~Williams.
\newblock Simulating branching programs with edit distance and friends: or: a
  polylog shaved is a lower bound made.
\newblock In {\em Proceedings of the 48th Annual {ACM} {SIGACT} Symposium on
  Theory of Computing, {STOC} 2016}, pages 375--388, 2016.
\newblock \href {http://dx.doi.org/10.1145/2897518.2897653}
  {\path{doi:10.1145/2897518.2897653}}.

\bibitem[AIKH13]{AKH13}
A.~Andoni, P.~Indyk, D.~Katabi, and H.~Hassanieh.
\newblock Shift finding in sub-linear time.
\newblock In {\em Proceedings of the Twenty-Fourth Annual {ACM-SIAM} Symposium
  on Discrete Algorithms, {SODA} 2013}, pages 457--465, 2013.
\newblock \href {http://dx.doi.org/10.1137/1.9781611973105.33}
  {\path{doi:10.1137/1.9781611973105.33}}.

\bibitem[AKO10]{AKO10}
A.~Andoni, R.~Krauthgamer, and K.~Onak.
\newblock Polylogarithmic approximation for edit distance and the asymmetric
  query complexity.
\newblock In {\em 51th Annual {IEEE} Symposium on Foundations of Computer
  Science, {FOCS} 2010}, pages 377--386, 2010.
\newblock \href {http://dx.doi.org/10.1109/FOCS.2010.43}
  {\path{doi:10.1109/FOCS.2010.43}}.

\bibitem[AN10]{AN10}
A.~Andoni and H.~L. Nguyen.
\newblock Near-optimal sublinear time algorithms for ulam distance.
\newblock In {\em Proceedings of the twenty-first annual ACM-SIAM symposium on
  Discrete Algorithms}, pages 76--86. SIAM, 2010.

\bibitem[AO09]{AO09}
A.~Andoni and K.~Onak.
\newblock Approximating edit distance in near-linear time.
\newblock In {\em Proceedings of the Forty-first Annual ACM Symposium on Theory
  of Computing}, STOC '09, pages 199--204. ACM, 2009.
\newblock \href {http://dx.doi.org/10.1145/1536414.1536444}
  {\path{doi:10.1145/1536414.1536444}}.

\bibitem[BEG{\etalchar{+}}18]{BEGHS18}
M.~Boroujeni, S.~Ehsani, M.~Ghodsi, M.~T. Hajiaghayi, and S.~Seddighin.
\newblock Approximating edit distance in truly subquadratic time: Quantum and
  mapreduce.
\newblock In {\em Proceedings of the Twenty-Ninth Annual {ACM-SIAM} Symposium
  on Discrete Algorithms, {SODA} 2018}, pages 1170--1189, 2018.
\newblock \href {http://dx.doi.org/10.1137/1.9781611975031.76}
  {\path{doi:10.1137/1.9781611975031.76}}.

\bibitem[BEK{\etalchar{+}}03]{BEKMRRS03}
T.~Batu, F.~Erg\"{u}n, J.~Kilian, A.~Magen, S.~Raskhodnikova, R.~Rubinfeld, and
  R.~Sami.
\newblock A sublinear algorithm for weakly approximating edit distance.
\newblock In {\em Proceedings of the Thirty-fifth Annual ACM Symposium on
  Theory of Computing}, STOC '03, pages 316--324. ACM, 2003.
\newblock \href {http://dx.doi.org/10.1145/780542.780590}
  {\path{doi:10.1145/780542.780590}}.

\bibitem[BES06]{BES06}
T.~Batu, F.~Ergun, and C.~Sahinalp.
\newblock Oblivious string embeddings and edit distance approximations.
\newblock In {\em Proceedings of the Seventeenth Annual ACM-SIAM Symposium on
  Discrete Algorithm}, SODA '06, pages 792--801. SIAM, 2006.
\newblock Available from:
  \url{http://dl.acm.org/citation.cfm?id=1109557.1109644}.

\bibitem[BI15]{BI15}
A.~Backurs and P.~Indyk.
\newblock Edit distance cannot be computed in strongly subquadratic time
  (unless {SETH} is false).
\newblock In {\em Proceedings of the Forty-Seventh Annual ACM on Symposium on
  Theory of Computing}, STOC '15, pages 51--58, 2015.
\newblock \href {http://dx.doi.org/10.1145/2746539.2746612}
  {\path{doi:10.1145/2746539.2746612}}.

\bibitem[BJKK04]{BJKK04}
Z.~{Bar-Yossef}, T.~Jayram, R.~Krauthgamer, and R.~Kumar.
\newblock Approximating edit distance efficiently.
\newblock In {\em Foundations of Computer Science, 2004. Proceedings. 45th
  Annual IEEE Symposium on}, pages 550--559, Oct 2004.
\newblock \href {http://dx.doi.org/10.1109/FOCS.2004.14}
  {\path{doi:10.1109/FOCS.2004.14}}.

\bibitem[BK15]{BK15}
K.~Bringmann and M.~K{\"{u}}nnemann.
\newblock Quadratic conditional lower bounds for string problems and dynamic
  time warping.
\newblock In {\em {IEEE} 56th Annual Symposium on Foundations of Computer
  Science, {FOCS} 2015}, pages 79--97, 2015.
\newblock \href {http://dx.doi.org/10.1109/FOCS.2015.15}
  {\path{doi:10.1109/FOCS.2015.15}}.

\bibitem[BVKB10]{BVKB10}
A.~Bolshoy, Z.~Volkovich, V.~Kirzhner, and Z.~Barzily.
\newblock {\em Genome Clustering: From Linguistic Models to Classification of
  Genetic Texts}, volume 286.
\newblock Springer Science \& Business Media, 2010.

\bibitem[BZ16]{BZ16}
D.~Belazzougui and Q.~Zhang.
\newblock Edit distance: Sketching, streaming, and document exchange.
\newblock In {\em {IEEE} 57th Annual Symposium on Foundations of Computer
  Science, {FOCS} 2016}, pages 51--60, 2016.
\newblock \href {http://dx.doi.org/10.1109/FOCS.2016.15}
  {\path{doi:10.1109/FOCS.2016.15}}.

\bibitem[CDG{\etalchar{+}}18]{CDGKS18}
D.~Chakraborty, D.~Das, E.~Goldenberg, M.~Kouck{\'{y}}, and M.~E. Saks.
\newblock Approximating edit distance within constant factor in truly
  sub-quadratic time.
\newblock In {\em 59th {IEEE} Annual Symposium on Foundations of Computer
  Science, {FOCS} 2018}, pages 979--990, 2018.
\newblock \href {http://dx.doi.org/10.1109/FOCS.2018.00096}
  {\path{doi:10.1109/FOCS.2018.00096}}.

\bibitem[CFMP16]{CGRS16}
M.~Crochemore, G.~Fici, R.~Merca{\c{s}}, and S.~P. Pissis.
\newblock Linear-time sequence comparison using minimal absent words \&
  applications.
\newblock In {\em LATIN 2016: Theoretical Informatics}, pages 334--346.
  Springer, 2016.

\bibitem[CGK16a]{CGK16b}
D.~Chakraborty, E.~Goldenberg, and M.~Kouck{\'{y}}.
\newblock Streaming algorithms for computing edit distance without exploiting
  suffix trees.
\newblock {\em CoRR}, abs/1607.03718, 2016.
\newblock Available from: \url{http://arxiv.org/abs/1607.03718}.

\bibitem[CGK16b]{CGK16}
D.~Chakraborty, E.~Goldenberg, and M.~Kouck{\'{y}}.
\newblock Streaming algorithms for embedding and computing edit distance in the
  low distance regime.
\newblock In {\em Proceedings of the 48th Annual {ACM} {SIGACT} Symposium on
  Theory of Computing, {STOC} 2016}, pages 712--725, 2016.
\newblock \href {http://dx.doi.org/10.1145/2897518.2897577}
  {\path{doi:10.1145/2897518.2897577}}.

\bibitem[DKF{\etalchar{+}}99]{DKFPWS99}
A.~L. Delcher, S.~Kasif, R.~D. Fleischmann, J.~Peterson, O.~White, and S.~L.
  Salzberg.
\newblock Alignment of whole genomes.
\newblock {\em Nucleic acids research}, 27(11):2369--2376, 1999.

\bibitem[HSS19]{HSS19-b}
M.~Hajiaghayi, S.~Seddighin, and X.~Sun.
\newblock Massively parallel approximation algorithms for edit distance and
  longest common subsequence.
\newblock In {\em Proceedings of the Thirtieth Annual {ACM-SIAM} Symposium on
  Discrete Algorithms, {SODA} 2019}, pages 1654--1672, 2019.
\newblock \href {http://dx.doi.org/10.1137/1.9781611975482.100}
  {\path{doi:10.1137/1.9781611975482.100}}.

\bibitem[HSSS19]{HSSS19-a}
M.~Hajiaghayi, M.~Seddighin, S.~Seddighin, and X.~Sun.
\newblock Approximating {LCS} in linear time: Beating the {\(\surd\)}n barrier.
\newblock In {\em Proceedings of the Thirtieth Annual {ACM-SIAM} Symposium on
  Discrete Algorithms, {SODA} 2019}, pages 1181--1200, 2019.
\newblock \href {http://dx.doi.org/10.1137/1.9781611975482.72}
  {\path{doi:10.1137/1.9781611975482.72}}.

\bibitem[KR06]{KR06}
R.~Krauthgamer and Y.~Rabani.
\newblock Improved lower bounds for embeddings into $l_1$.
\newblock In {\em Proceedings of the Seventeenth Annual {ACM-SIAM} Symposium on
  Discrete Algorithms, {SODA} 2006}, pages 1010--1017, 2006.
\newblock Available from:
  \url{http://dl.acm.org/citation.cfm?id=1109557.1109669}.

\bibitem[Lev66]{Lev65}
V.~Levenshtein.
\newblock {Binary Codes Capable of Correcting Deletions, Insertions and
  Reversals}.
\newblock {\em Soviet Physics Doklady}, 10:707, 1966.

\bibitem[LMS98]{LMS98}
G.~M. Landau, E.~W. Myers, and J.~P. Schmidt.
\newblock Incremental string comparison.
\newblock {\em SIAM J. Comput.}, 27(2):557--582, 1998.
\newblock \href {http://dx.doi.org/10.1137/S0097539794264810}
  {\path{doi:10.1137/S0097539794264810}}.

\bibitem[LV88]{LV88}
G.~M. Landau and U.~Vishkin.
\newblock Fast string matching with k differences.
\newblock {\em J. Comput. Syst. Sci.}, 37(1):63--78, 1988.
\newblock \href {http://dx.doi.org/10.1016/0022-0000(88)90045-1}
  {\path{doi:10.1016/0022-0000(88)90045-1}}.

\bibitem[Mye86]{Myers86}
E.~W. Myers.
\newblock An {O(ND)} difference algorithm and its variations.
\newblock {\em Algorithmica}, 1(2):251--266, 1986.
\newblock \href {http://dx.doi.org/10.1007/BF01840446}
  {\path{doi:10.1007/BF01840446}}.

\bibitem[Nav98]{Navarro98}
G.~Navarro.
\newblock {\em Approximate text searching}.
\newblock PhD thesis, University of Chile, 1998.
\newblock Available from:
  \url{https://users.dcc.uchile.cl/~gnavarro/ps/thesis98e.pdf}.

\bibitem[OR07]{OR07}
R.~Ostrovsky and Y.~Rabani.
\newblock Low distortion embeddings for edit distance.
\newblock {\em J. ACM}, 54(5):23+, October 2007.
\newblock \href {http://dx.doi.org/10.1145/1284320.1284322}
  {\path{doi:10.1145/1284320.1284322}}.

\bibitem[SA10]{SA10}
D.~Sokol and F.~Atagun.
\newblock Tredd - {A} database for tandem repeats over the edit distance.
\newblock {\em Database}, 2010, 2010.
\newblock \href {http://dx.doi.org/10.1093/database/baq003}
  {\path{doi:10.1093/database/baq003}}.

\bibitem[Sah14]{Saha14}
B.~Saha.
\newblock The {D}yck language edit distance problem in near-linear time.
\newblock In {\em 55th {IEEE} Annual Symposium on Foundations of Computer
  Science, {FOCS} 2014}, pages 611--620, 2014.
\newblock \href {http://dx.doi.org/10.1109/FOCS.2014.71}
  {\path{doi:10.1109/FOCS.2014.71}}.

\bibitem[Sah17]{Saha17}
B.~Saha.
\newblock Fast {\&} space-efficient approximations of language edit distance
  and {RNA} folding: An amnesic dynamic programming approach.
\newblock In {\em 58th {IEEE} Annual Symposium on Foundations of Computer
  Science, {FOCS} 2017}, pages 295--306, 2017.
\newblock \href {http://dx.doi.org/10.1109/FOCS.2017.35}
  {\path{doi:10.1109/FOCS.2017.35}}.

\bibitem[TS12]{TS12}
T.~J. Treangen and S.~L. Salzberg.
\newblock Repetitive dna and next-generation sequencing: computational
  challenges and solutions.
\newblock {\em Nature Reviews Genetics}, 13(1):36, 2012.

\bibitem[Ukk85]{UKK85}
E.~Ukkonen.
\newblock Algorithms for approximate string matching.
\newblock {\em Inf. Control}, 64(1-3):100--118, March 1985.
\newblock \href {http://dx.doi.org/10.1016/S0019-9958(85)80046-2}
  {\path{doi:10.1016/S0019-9958(85)80046-2}}.

\end{thebibliography}

\appendix	

\section{Proof of Lemma~\ref{lem:gridGraph}}
\label{app:gridGarphProof}

\ifprocs
\section*{A. Proof of Lemma~\ref{lem:gridGraph}}
\fi

\begin{lemma*}[\ref{lem:gridGraph}]
	Let $\ed(x,y)\in \zo^n$.
	
	If $\ed(x,y)\le t$ then with probability $1$
	the algorithm outputs \close. 
	
	If $\ed(x,y)>6 t^2$ then 
	with probability at least $2/3$ the algorithm outputs \far.
\end{lemma*}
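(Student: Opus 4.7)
The plan is to establish the two directions separately, using the structural correspondence between $G_{x,y}$ and $G_S$ in one direction and a probabilistic union bound plus a path-reconstruction argument in the other. I will assume $t \ge 1$ throughout.

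For the close case $\ed(x,y) \le t$, I would fix an optimal source-to-sink path $\tau$ in $G_{x,y}$ and build a corresponding $\tau_S$ in $G_S$ of no greater cost, processing one sampled row $i_j \in S$ at a time. The construction is exactly as sketched: maintain an invariant that the current endpoint $(i_j, d_S)$ of $\tau_S$ sits ``close'' to the range of diagonals $\tau$ visits on row $i_{j+1}$; then extend $\tau_S$ either with a match/substitution edge $(i_j, d_S) \to (i_{j+1}, d_S)$ followed by insertion edges within row $i_{j+1}$ (when $d_S \le d + \ell$), or with a single deletion edge $(i_j, d_S) \to (i_{j+1}, d_S - 1)$ (when $d_S > d + \ell$). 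The key accounting step is to charge every insertion/deletion edge used by $\tau_S$ against a distinct insertion/deletion edge already present in $\tau$, and to bound the new substitution edge's weight by the Hamming error at that single position (which is at most the total substitution cost of $\tau$ on the corresponding stretch). Summing over all sampled rows yields $c_{G_S}(\tau_S) \le c_{G_{x,y}}(\tau) \le t$, which is the deferred Claim~\ref{claim:closeCaseAnalysis}.

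For the far case $\ed(x,y) > 6t^2$, I would first apply Claim~\ref{claim:badSamplingEvent} with a union bound over the $n(2t+1)$ choices of $(i,d)$ to conclude that, with probability at least $2/3$, no event $B(i,d)$ occurs. Suppose for contradiction that under this event the algorithm still outputs $\close$, i.e., there is a source-to-sink path $\tau_S$ in $G_S$ of cost $\le t$. Decompose $\tau_S$ into $k+1$ maximal ``diagonal segments'' (contiguous edges of type (i) on a fixed diagonal) separated by $k$ insertion/deletion transitions (edges of type (ii) or (iii)), plus the final jump of cost $|d_{\text{last}}|$ to the sink; letting $c_j$ be the sampled substitution cost on segment $j$, we have $\sum_j c_j + k + |d_{\text{last}}| \le t$.

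The crux, and the main obstacle, is converting each segment's sampled cost into a bound on its true Hamming distance inside $G_{x,y}$. If segment $j$ runs along diagonal $d_j$ over the row interval $I_j$ with sampled cost $c_j$, then the no-bad-event hypothesis forces $\ham(x_{I_j}, y_{I_j + d_j}) \le 3t(c_j + 1)$: otherwise one could greedily partition $I_j$ into more than $c_j$ sub-intervals each accumulating $3t$ Hamming errors, and the leftmost row of each sub-interval would trigger a $B(\cdot,d_j)$ event that by assumption does not happen. Replaying $\tau_S$ inside $G_{x,y}$ along the same diagonals and transitions (now paying actual substitution cost) produces an alignment between $x$ and $y$ of cost at most
\[
  \sum_{j} 3t(c_j + 1) \;+\; k \;+\; |d_{\text{last}}|
  \;\le\; 3t\bigl(t - k - |d_{\text{last}}|\bigr) + 3t(k+1) + k + |d_{\text{last}}|
  \;\le\; 3t^2 + 4t \;\le\; 6t^2,
\]
contradicting $\ed(x,y) > 6t^2$. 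The only subtlety to verify is that the segments' intervals $I_j$ tile $[0..n]$ so that the induced $G_{x,y}$ path actually aligns $x$ and $y$ in their entirety, which follows directly from the fact that $\tau_S$ goes from the source $(0,0)$ to the sink $(n,0)$.
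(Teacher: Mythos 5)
Your proof follows essentially the same two-part strategy as the paper: for the close case, iteratively lift an optimal $G_{x,y}$-path to $G_S$ exactly as in the paper's Claim~\ref{claim:closeCaseAnalysis}; for the far case, apply the union bound over the events $B(i,d)$, decompose the cheap $G_S$-path into diagonal segments, and convert each segment's sampled cost $c_j$ into a true Hamming bound $3t(c_j+1)$ (the paper's $3t(e_j+1)$ for its maximal subpaths $(I_j,d_j)$), then sum. One small caveat worth flagging: your final chain $3t^2+4t\le 6t^2$ fails at $t=1$ (it gives $7>6$), so to literally match the lemma's constant you need $t\ge 2$ or a slightly more careful accounting of the transition and sink costs.
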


\begin{proof}[Proof of Lemma~\ref{lem:gridGraph}]
To prove the first part, suppose $\ed(x,y)\le t$.
We will prove for all $S\subseteq [n]$ (and thus with probability $1$),
for every source-to-sink path $\tau$ in the original grid graph $G_{x,y}$,
there is in $G_S$ a corresponding source-to-sink path $\tau_S$ of the same or lower cost.
It would then follow that $G_S$ contains a path from the source $(0,0)$
to the sink $(n,0)$ of cost at most $t$, and the algorithm outputs \close.

Given $S\subseteq [n]$ and a path $\tau$ in $G_{x,y}$, 
construct the corresponding path $\tau_S$ in $G_S$ as follows.
Suppose the vertices $\tau$ traverses in row $0$ are $(0,0),\dots,(0,d_0)$;
then let $\tau_S$ start at $(0,0)$ and traverse the exact same vertices.
Now we describe how to extend $\tau_S$ iteratively for $j=0,\ldots,s-1$,
where $i_0=0$ by convention. 
Denote the last vertex $\tau_S$ traverses in row $i_j$ by $(i_j,d_S)$, 
and suppose the vertices $\tau$ traverses in row $i_{j+1}$
are $(i_{j+1},d),\dots,(i_{j+1},d+\ell)$. 
Now we have two cases:
if $d_S\le d+\ell$,
extend $\tau_S$ by appending $(i_{j+1},d_S) \dots, (i_{j+1},d+\ell)$;
otherwise, extend it by appending $(i_{j+1},d_S-1)$.
Finally, denote the last vertex $\tau_S$ traverses in row $i_s$ by $(i_s,d_S)$;
then extend $\tau_S$ by append $(n,0)$, which uses an edge of cost $\abs{d_S}$. 

\begin{claim}\label{claim:closeCaseAnalysis}
$c_{G_S}(\tau_S) \le c_{G_{x,y}}(\tau)$.  
\end{claim}  

\begin{proof}[Proof of Claim~\ref{claim:closeCaseAnalysis}]
For each $j=0,\ldots,s$,
let $d_\tau(j)$ denote the last diagonal visited by $\tau$ at row $i_j$,
and let $d_{\tau_S}(j)$ be similarly for the path $\tau_S$.
Denote by $c_\tau(j)$ the cost of the prefix of $\tau$ up to $(i_j,d_\tau (j))$, and similarly $c_{\tau_S}(j)$ for path $\tau_S$ and vertex $(i_j,d_{\tau_S}(j))$.
We will prove show the following bound on $\tau_S$
\begin{equation} \label{eq:close1}
  \forall j=0,\ldots,s, \qquad
  c_{\tau_S}(j) + d_{\tau_S} (j)
  \le c_\tau (j) + d_\tau(j). 
\end{equation}
Let us now show how this bound implies the claim. 
By construction, the last edge in $\tau_S$ goes from row $i_s$
(the last row in $S$) to the sink and has cost $\abs{d_{\tau_S}(s)}$,
and together with~\eqref{eq:close1} in the case $j=s$, we have
\begin{align*}
  c_{G_S}(\tau_S)
  & = c_{\tau_S} (s) + \abs{d_{\tau_S}(s)}
  \\
  & \le c_\tau (s) + d_\tau(s) - d_{\tau_S} (s) + \abs{d_{\tau_S}(s)} .
\intertext{Now if $d_{\tau_S} (s) \ge0$, the last two summands above cancel and we continue}
  & = c_\tau (s) + d_\tau(s) + 0 
  \le c_\tau (s) + \abs{d_\tau(s)}; 
\intertext{otherwise, we have $d_{\tau} (s) \le d_{\tau_S} (s) < 0$ and we continue}
  & \le c_\tau (s) + 0 + \abs{d_{\tau_S}(s)}  
  \le c_\tau (s) + \abs{d_\tau(s)} . 
\end{align*}
In both cases, we obtain 
$
  c_{G_S}(\tau_S)
  \le c_\tau (s) + \abs{d_\tau(s)}  
  \le c_{G_{x,y}}(\tau)
$, which proves the claim. 

We proceed to proving \eqref{eq:close1} by induction on $j$.
The base case $j=0$ holds trivially (with equality), 
because $\tau_S$ is constructed to be identical to $\tau$ in row $i_0=0$.
For the inductive step, we actually show
\begin{equation} \label{eq:close2}
  \forall j=0,\ldots,s-1, \qquad
  \Delta_j c_{\tau_S} + \Delta_j d_{\tau_S}
  \le \Delta_j c_\tau + \Delta_j d_\tau ,
\end{equation}
where $\Delta_j f := f(j+1) - f(j)$
for $f(j)$ being any of the 4 terms appearing in~\eqref{eq:close1}.  
The last inequality clearly implies the inductive step. 
(Alternatively, we can replace the induction by a telescopic sum.)

Now to prove~\eqref{eq:close2}, fix $j\in\set{0,\ldots,s-1}$,
and observe that in the desired inequality,
the LHS is about the subpath of $\tau_S$ from row $i_j$ to row $i_{j+1}$,
and similarly the RHS is about the subpath of $\tau$.
More precisely, these subpaths are taken to ``start'' and ``end'' 
at the last vertex visited in each row.
Assume first that $LHS = 0$.
Observe that $\Delta_j c_\tau$ is the cost along that subpath of $\tau$,
and every edge in it that increments/decrements the diagonal has cost $1$,
hence $\Delta_j c_\tau\ge \abs{\Delta_j d_\tau} \ge - \abs{\Delta_j d_\tau}$. 
This proves that in this case, indeed $RHS \ge 0 = LHS$.

Assume next that $LHS > 0$.
Suppose towards contradiction that the first edge in that subpath of $\tau_S$
(from row $i_j$ to row $i_{j+1}$) decreases the diagonal; 
then by its construction, $\tau_S$ visits no additional vertices on this row,
hence the said subpath of $\tau_S$ consists of only one edge,
and we see that $LHS = \Delta_j c_{\tau_S} + \Delta_j d_{\tau_S} = 1 - 1 =0$,
which contradicts our assumption.
We thus know that the first edge in that subpath of $\tau_S$ 
does not change the diagonal. %
Clearly, any additional edges in this subpath, if any, 
must stay in the same row and increment the diagonal,
hence their number is exactly $\Delta_j d_{\tau_S} \ge 0$.
Observe the fact $d_{\tau_S}(j) \ge d_{\tau}(j)$,
which follows from the construction of $\tau_S$. 
We can also verify the fact $d_{\tau_S}(j+1) = d_{\tau}(j+1)$;
indeed, one direction ($\ge$) is just the previous inequality (but for $j+1$), 
and the other direction ($\le$) holds in our case
where the first edge of the subpath does not change the diagonal. 
Combining these two facts, we have 
$\Delta_j d_{\tau_S} \le \Delta_j d_{\tau}$.
Moreover, the foregoing discussion implies that
\begin{equation} \label{eq:close3}
  0\le \Delta_j d_{\tau_S} \le \Delta_j d_{\tau} \le \Delta_j c_{\tau}.
\end{equation}
Observe that by the foregoing discussion and the definition of $G_S$, 
\begin{equation} \label{eq:close5}
  \Delta_j c_{\tau_S} 
  = \indicator{\set{x_{i_{j}+1}\neq y_{i_{j}+d_{\tau_S}(j)+1}}} + \Delta_j d_{\tau_S} 
  \le 1 + \Delta_j d_{\tau_S} ,
\end{equation}
and let us argue next, by a case analysis, that 
\begin{equation} \label{eq:close4}
  \Delta_j c_{\tau_S} 
  \le \Delta_j c_{\tau}, 
\end{equation}
Case 1 of proving~\eqref{eq:close4} is when $d_\tau(j) < d_{\tau_S}(j)$
(i.e., our first fact above holds with strict inequality).
Then the derivation of~\eqref{eq:close3} actually gives a stronger bound
$\Delta_j d_{\tau_S} + 1 \le \Delta_j d_{\tau}$.
Combining this with~\eqref{eq:close3} and~\eqref{eq:close5} 
we obtain~\eqref{eq:close4}. 

Case 2 %
is when the first edge in that subpath of $\tau$ decrements the diagonal. 
Then later steps in the subpath must increment the diagonal
(because the net difference is $\Delta_j(\tau) \ge 0$), 
and again we obtain a stronger cost bound
$\Delta_j c_\tau \ge \Delta_j d_{\tau} + 2$. 
Combining this with~\eqref{eq:close3} and~\eqref{eq:close5} 
we obtain~\eqref{eq:close4}. 

Case 3 %
is the remaining scenario,
where $d_\tau(j) = d_{\tau_S}(j)$
and the first edge in that subpath of $\tau$ does not change the diagonal,
and thus has cost
$\indicator{\set{x_{i_{j}+1}\neq y_{i_{j}+d_\tau(j) +1}}}$. 
Hence, 
\[
  \Delta_j c_\tau
  \ge \indicator{\set{x_{i_{j}+1}\neq y_{i_{j}+d_\tau(j) +1}}} + \Delta_j d_{\tau}. 
\]
Combining this with~\eqref{eq:close5} implies~\eqref{eq:close4}.

Finally, having established~\eqref{eq:close4},
we combine it with~\eqref{eq:close3} to derive~\eqref{eq:close2},
which we called $LHS \leq RHS$, and proves the case $LHS > 0$.
This completes the proof of the inductive step
and of Claim~\ref{claim:closeCaseAnalysis}.
\end{proof}

This completes the proof of the first part of Lemma~\ref{lem:gridGraph}.

To prove the second part, suppose that $\ed(x,y)>6 t^2$.
Using Claim~\ref{claim:badSamplingEvent} and applying a union bound
on all possible rows and diagonals, we get that except with probability $n(2t+1) \frac 1 {3n(2t+1)}\le \frac 1 3$, none of the events $B(i,d)$ happens.
We conclude the proof by showing that in this case, the shortest path connecting $(0,0)$ and $(n,0)$ has cost strictly larger than $t$,
and therefore our algorithm outputs \far.
	
Assume towards contradiction that none of the events $B(i,d)$ happens
and yet the shortest path in $G_S$ from $(0,0)$ to $(n,0)$, denoted $\tau_S$,
has cost at most $t$.
From $\tau_S$, we construct a new path $\tau$ in $G_{x,y}$ as follows. (i) For each edge $(i_j,d)$ to $(i_j,d+1)$ in $\tau_S$, we include the same edge in $\tau$. (ii) For each edge $(i_j,d)$ to $(i_{j+1},d-1)$, we include the edges corresponding to the path $(i_j,d), (i_j+1,d),\ldots,(i_{j+1}-1,d)$ followed by an edge to $(i_{j+1},d-1)$. (iii) For each edge $(i_j,d)$ to $(i_{j+1},d)$, we include the edges corresponding to the path $(i_j,d), (i_j+1,d),\ldots,(i_{j+1},d)$. 
	
Notice that in case (i), $\tau_S$ pays a cost of $1$ and so does $\tau$. In case (ii), $\tau_S$ pays a cost of $1$ and since $B(i_j,d)$ did not happen, $\tau$ pays a cost of at most $3t+(d'-d)$.
	
	Now consider the maximal subpaths that are formed in $\tau$ by the edges in case (iii). Each of these maximal subpaths can be indexed by a contiguous collection of rows and a single diagonal. Let us denote them by $(I_1,d_1)\dots , (I_k,d_k)$.
	Let $I'_j \subseteq I_j$ be the rows in $I_j \cap S$ for $j=1,\ldots,k$.
	
	Therefore in $\tau_S$, there is a path through diagonal $d_j$ and rows in $I'_j$. Let this path pay $e_j$ edit cost in $\tau_S$. Note that they are all from substitution edits. Let $Z=\{z_1, z_2,\ldots,z_{e_j}\}$ be the rows in $I'_j$ in increasing order such that $\tau_S$ pays an edit cost on the outgoing edge from $(z,d_j)$ for all $z \in Z$. 
	Since, we avoided the events $B(\min I'_j,d_j), B(z_1+2,d_j),\dots, B(z_{e_j}+2,d_j)$, the total substitution cost paid in $\tau$ while traversing through $(I_j,d_j)$ is at most $3t(e_j+1)$. 
	
	Therefore, over all $k$, $\tau$ pays a total substitution cost of $\sum_{j=1}^{k}3t(e_j+1) \leq 3t^2+3t \sum_{j=1}^{j} e_j$, since $k \leq t$. Now adding the cost from case (i) and (ii), the overall cost paid by $\tau$ is at most $6t^2$. This contradicts the assumption that $\ed(x,y)>6t^2$.

This completes the proof of Lemma~\ref{lem:gridGraph} (both parts). 
\end{proof}

\section{Missing Proofs from Section~\ref{sec:subLinear}} 
\label{app:appendixPotential}
	
\ifprocs
\section*{B. Missing Proofs from Section~\ref{sec:subLinear}} 
\fi

\begin{proof}[\textbf{Proof of Lemma~\ref{lem:potential}}] 
We proceed by induction on the grid vertices in lexicographic order
(i.e., their row is the primary key and their diagonal is secondary). 
The base case is the source $(0,0)$, which follows trivially.
To prove the inductive step,
consider a vertex $v$ and assume the claim holds for all previous vertices. 
We now split into two cases. 

Case (a): $v$ is dominated. 
By applying the induction hypothesis to the in-neighbor that dominates $v$,
we obtain a shortest path $(0,0)=v_0,v_1,\ldots,v_l$
to the in-neighbor $v_l$ that dominates $v$,
hence the cost of this path is $c(v_l) \leq c(v)-1$. 
By appending that path with $v$, we obtain a shortest path to $v$
because its cost is at most $c(v_l)+1 \leq c(v)$.
It remains to show the ordering requirement
that all non-potent vertices in this path appear after all potent vertices. 
If $v$ is non-potent, this is immediate because $v$ is appended. 
If $v$ is potent then by Definition~\ref{def:potent}
the dominating vertex $v_l$ must be potent, 
and again the ordering requirement follows immediately.

Case (b): $v=(i,d)$ is not dominated, and in particular it is potent.
Then a shortest-path to $v$ must be coming from $w_1=(i-1,d)$.
If this $w_1$ is potent, 
then we can obtain a shortest path to $w_1$ by the induction hypothesis,
and appending $v$ to this path satisfies the ordering requirement
and gives a shortest path because its cost is at most $c(w_1)+1\le c(v)$. 
So assume henceforth that $w_1$ is non-potent,
and let us show a contradiction. 
Then by Definition~\ref{def:potent} it must be dominated by some in-neighbor $w_2$, 
and moreover either $w_2$ is non-potent
or it has an outgoing edge of cost $0$, i.e., a matching edge (or both). 
If $w_2$ is non-potent, then by the same argument (as for $w_1$), 
it must be dominated by some in-neighbor $w_3$.
Repeat this argument until reaching the first $w_l$, $l\ge 2$, that is potent,
and thus $w_l$ must have an outgoing edge of cost $0$.
The path's construction and the monotonicity property~\eqref{eq:mono}
imply that $c(v)\ge c(w_1) \ge c(w_2) - 1\ge \cdots \ge c(w_l)-(l-1)$. 
Observe that the path $w_l\to\cdots\to w_2\to w_1\to v$ has $l$ edges,
the first $l-1$ are insertion/deletion edges,
and the last one is a matching or substitution edge. 
Consider an alternative path from $w_l$,
that uses first the outgoing edge of cost $0$,
and then $l-1$ insertion/deletion edges in exact correspondence
with the $l-1$ edges $w_l\to w_{l-1}, \ldots, w_2\to w_1$. 
It is easy to verify that also this alternative path reaches $v$.
and its cost is exactly $l-1$.
Appending this path to an arbitrary shortest-path to $w_l$, 
yields a path of cost $c(w_l) + l-1 \le c(v)$,
i.e., a shortest-path to $v$ that enters $v$ via an insertion/deletion edge.
This means that $v$ is dominated, in contradiction to our assumption.
\end{proof}

\medskip
\begin{proof}[\textbf{Proof of Lemma~\ref{lem:imp}}] 
First, if $c((i,d) \rightarrow (i, d+1))=0$, then $c(i+1, d) \leq c(i,d)+c((i,d) \rightarrow (i, d+1))=c(i,d)$. On the other hand, from the monotonicity property~\eqref{eq:mono}, $c(i+1,d) \geq c(i,d)$. Therefore, whenever $c((i,d) \rightarrow (i, d+1))=0$, we have $c(i,d)=c(i,d+1)$.

Therefore, let us assume, $c((i,d) \rightarrow (i, d+1))=1$. We prove the claim by a double induction on the grid vertices $(i,d)$ in lexicographic order
(i.e., their row is the primary key and their diagonal is secondary).

Base case: Note that by definition $(0,0)$ is potent. Let $(i,d)$ be the {\em first} vertex in the lexicography order which is non-potent. Then $(i,d)$ must be dominated by either $(i,d-1)$ or $(i-1,d+1)$ both of which are potent (if they exist). Let w.l.o.g., $(i,d-1)$ dominates $(i,d)$, then if $c((i,d-1) \rightarrow (i+1,d-1))=1$, it must hold for $(i,d)$ to be non-potent that $(i-1,d+1)$ dominates $(i,d)$ and has $c((i-1,d+1) \rightarrow (i,d+1))=0$. Thus consider $(i,d-1)$ dominates $(i,d)$ and has $c((i,d-1) \rightarrow (i+1,d-1))=0$ (the other case is identical). Then $c(i+1,d-1)=c(i,d-1)$. Overall, we have
\ifprocs
\begin{align*}
  c(i,d)
  \leq c(i+1,d)
  & \leq c(i+1,d-1)+1
    \\ &
    = c(i,d-1)+1=c(i,d),
\end{align*}
\else
\begin{align*}
  c(i,d)
  \leq c(i+1,d)
  & \leq c(i+1,d-1)+1
    = c(i,d-1)+1=c(i,d),
\end{align*}
\fi
where the last equality follows from $(i,d-1)$ dominating $(i,d)$. Therefore, $c(i+1,d)=c(i,d)$.

For the inductive step, assume the claim is true for all vertices $(i',d')$ with $i' <i$ and consider row $i$. Let $d$ be the \emph{first} diagonal on row $i$ such that $(i,d)$ is non potent. Since $(i,d)$ is non-potent, it must be dominated by either $(i,d-1)$ or $(i-1,d+1)$. If $(i,d)$ is only dominated by $(i,d-1)$, then since $(i,d-1)$ is potent (note $d$ is the first diagonal on row $i$ which is non-potent), then for $(i,d)$ to be non-potent, it must hold $c((i,d-1) \rightarrow (i+1,d-1))=0$. Now following the same argument as in the base case, we get $c(i+1,d)=c(i,d)$.

Thus assume, either $(i,d)$ is not dominated by $(i,d-1)$ or $c((i,d-1) \rightarrow (i+1,d-1))=1$. Then, $(i-1,d+1)$ dominates $(i,d)$, that is $c(i-1,d+1)=c(i,d)-1$. If $(i-1,d+1)$ is potent, then for $(i,d)$ to be non-potent, we must have $c((i-1,d+1) \rightarrow (i,d+1))=0$. This implies $c(i,d+1)=c(i-1,d+1)$. Otherwise, $(i-1,d+1)$ dominates $(i,d)$ but $(i-1,d+1)$ is non-potent. Then from the induction hypothesis, $c(i,d+1)=c(i-1,d+1)$. Overall, we have
\ifprocs
\begin{align*}
  c(i,d)
  \leq c(i+1,d)
  & \leq c(i,d+1)+1
    \\ &
    =c(i-1,d+1)+1
    =c(i,d).
\end{align*}
\else
\begin{align*}
  c(i,d)
  \leq c(i+1,d)
  & \leq c(i,d+1)+1
    =c(i-1,d+1)+1
    =c(i,d).
\end{align*}
\fi
Therefore, $c(i+1,d)=c(i,d)$. Thus, the claim holds for the first non-potent diagonal on row $i$. Suppose, again by the inductive hypothesis, the claim holds for all $r'$th potent diagonal on row $i$ with $r' <r$, and consider the $r$-th non-potent diagonal $d_r$ at row $i$, $r > 1$.

If $(i,d_{r}-1)$ is potent, then the claim follows from the same argument as above, as if $d_r$ is the first diagonal on row $i$ to be non-potent. 

Otherwise, $(i, d_{r}-1)$ is non-potent. Then, by the induction hypothesis, $c(i+1,d_{r}-1)=c(i,d_{r}-1)$. If $(i,d_r-1)$ dominates $(i,d_r)$, then we have
\ifprocs
\begin{align*}
  c(i,d_r)
  \leq c(i+1, d_r)
  & \leq c(i+1, d_r-1)+1
    \\ &
    = c(i,d_r-1)+1
    = c(i,d_r).
\end{align*}
\else
\begin{align*}
  c(i,d_r)
  \leq c(i+1, d_r)
  & \leq c(i+1, d_r-1)+1
    = c(i,d_r-1)+1
    = c(i,d_r).
\end{align*}
\fi
Therefore, $c(i+1, d_r)=c(i,d_r)$.

Otherwise, $(i,d_{r}-1)$ does not dominate $(i,d_r)$. Hence $(i,d_r)$ must be dominated by $(i-1,d_{r}+1)$. In that case, if $(i-1,d_{r}+1)$ is potent, then we must have $c( (i-1, d_{r}+1) \rightarrow (i, d_r+1)=0$. This implies $c(i,d_{r}+1)=c(i-1,d_{r}+1)$. On the other hand, if $(i-1, d_{r}+1)$ is non-potent, then from the induction hypothesis $c(i,d_{r}+1)=c(i-1,d_{r}+1)$. We have
\ifprocs
\begin{align*}
  c(i,d_r)
  \leq c(i+1,d_r)
  & \leq c(i,d_r+1)+1
    \\ &
    = c(i-1,d_r+1)+1
    = c(i,d_r).
\end{align*}
\else
\begin{align*}
  c(i,d_r)
  \leq c(i+1,d_r)
  & \leq c(i,d_r+1)+1
    = c(i-1,d_r+1)+1
    = c(i,d_r).
\end{align*}
\fi
Therefore, we have $c(i+1,d_r)=c(i,d_r)$. The lemma follows.
\end{proof}

\medskip
\begin{proof}[\textbf{Proof of Lemma~\ref{lem:mismatchPotent}}] 
We prove this by induction on the grid vertices $(i,d)$ in lexicographic order
(i.e., their row is the primary key and their diagonal is secondary). 
The base case $(i,d)=(0,0)$ is immediate, 
because the mismatch
guarantees that $c(1,0)\ge 1$. 

For the inductive step, 
consider $(i,d)$ and assume the claim holds for all previous vertices. 
Suppose first that $(i+1,d)$ has a mismatch,
and let us show that $c(i+1,d)=c(i,d)+1$.
By~\eqref{eq:mono}, it suffices to show that $c(i+1,d) \ge c(i,d)+1$,
which due to the mismatch requires proving a lower bound
on the cost of two in-neighbors of $(i+1,d)$, namely,
\[
  \minn{ c(i+1,d-1), c(i,d+1) } \geq c(i,d).
\]

This is equivalent to two inequalities,
and we start with the inequality
$
c(i+1,d-1)
\ge c(i,d)
$.
Assume first that $(i,d)$ is not dominated by $(i,d-1)$. 
This together with the monotonicity property~\eqref{eq:mono} implies
$
c(i,d)
\le c(i,d-1)
\le c(i+1,d-1)
$, 
as required.
Assume next that $(i,d)$ is dominated by $(i,d-1)$.
Then by Definition~\ref{def:potent}, 
diagonal $d-1$ is potent at row $i$ and has a mismatch at the next row $i+1$.
Applying the induction hypothesis to $(i,d-1)$
and using the bounded difference property~\eqref{eq:bdd_diff1}, we have
$
c(i+1,d-1)
= c(i,d-1) + 1
\ge c(i,d)
$, 
as required.
Thus, both cases satisfy the required inequality. 

The second inequality $c(i,d+1) \ge c(i,d)$ is proved by a similar argument 
with two cases depending on whether $(i,d)$ is dominated by $(i-1,d+1)$.
This concludes the inductive step in this case.

Suppose next that $(i+1,d)$ has a match.
Then this vertex has an incoming edge of cost $0$,
hence $c(i+1,d) \le c(i,d)$.
The other direction $c(i+1,d) \ge c(i,d)$ follows by~\eqref{eq:mono}. 
It remains to prove that $(i+1,d)$ is potent. First, assume $(i,d)$ is not dominated by $(i,d-1)$. 
This along with the monotonicity property~\eqref{eq:mono} implies $c(i+1,d-1)\geq c(i,d-1) \geq c(i,d)=c(i+1,d)$, i.e. $(i+1,d-1)$ does not dominate $(i+1,d)$. 

Otherwise, assume $(i,d-1)$ dominates $(i,d)$, then from Definition~\ref{def:potent}, $(i,d-1)$ must be potent and there must be a mismatch at row $i$. Hence, from the first part of this lemma, $c(i+1,d-1)=c(i,d-1)+1 \geq c(i,d)=c(i+1,d)$, i.e., $(i+1,d-1)$ cannot dominate $(i+1,d)$. 

Similarly, $(i,d+1)$ cannot dominate $(i,d)$. Therefore, $(i+1,d)$ is potent.
This completes the inductive step and proves the lemma. 
\end{proof}

\medskip
\begin{proof}[\textbf{Proof of Lemma~\ref{lem:ComputingPotential}}] 
We prove the first claim (about insertion to $\Diag_i$)
by induction on the grid vertices $(i,d)$ in lexicographic order
(i.e., their row is the primary key and their diagonal is secondary).
The base case is row $i=0$, at which the only potent diagonal is $d=0$,
and indeed it is inserted into $\Diag_0$ as initialization. 

For the inductive step, consider a potent vertex $(i,d)$
and assume the claim holds for all previous vertices.
By Lemma~\ref{lem:potential}, there is a shortest path to $(i,d)$
consisting only of potent vertices.
This path enters $(i,d)$ from one of its three in-neighbors
$(i-1,d)$, $(i-1,d+1)$, and $(i,d-1)$, which then must be potent too. 
We now have three cases.

Suppose first the shortest path enters from $(i-1,d)$.
As mentioned above, this vertex must be potent,
and then by the induction hypothesis, $d$ is inserted to the list $\Diag_{i-1}$.
It follows that when $(i-1,d)$ is scanned,
the algorithm will see it is potent and insert $d$ to $\Diag_i$. 

Second, assume the shortest path enters from $(i,d-1)$.
As mentioned above, it must be potent,
and then by the induction hypothesis, $d-1$ is inserted to the list $\Diag_i$.
Observe that $(i,d)$, which is potent,
must be dominated by $(i,d-1)$ because of the shortest path,
hence diagonal $d-1$ has a mismatch at the next row $i+1$,
and thus when $(i,d-1)$ is scanned,
the algorithm will insert $d$ to $\Diag_{i}$.

Third, assume the shortest path enters from $(i-1,d+1)$.
Similarly to the previous case, this vertex must be potent
and then by the induction hypothesis, $d+1$ is inserted to the list $\Diag_{i-1}$.
Since $(i,d)$ is potent and dominated by $(i-1,d+1)$,
and diagonal $d+1$ must have a mismatch at the next row $i$,
and thus when $(i-1,d+1)$ is scanned, 
the algorithm will insert $d$ to $\Diag_{i}$.

Finally, to prove the second claim, observe that during the scan of $\Diag_i$, 
every diagonal $d$ that is found to be not potent is removed from the list.
\end{proof}

\medskip
\begin{proof}[\textbf{Proof of Lemma~\ref{lem:costCorrectness}}]
We prove only the first assertion,
as the second one is an immediate consequence of it.
The proof is by induction on the grid vertices $(i,d)$ in lexicographic order
(i.e., their row is the primary key and their diagonal is secondary).
The base case is the time before processing vertex $(0,0)$; 
at this time, $\cA$ stores its initial values, i.e., $\cA[d] = \abs{d}$,
which is equal to $c(0,d)$ for all $d\ge 0$. 
For $d<0$, the base case is the time before processing $(-d,d)$,
because we should only consider vertices reachable from $(0,0)$;
at this time, $\cA[d]=\abs{d}$ is still the initialized value
and it is equal to $c(d,-d)=-d$. 

For the inductive step, we need to show that
the processing of diagonal $d\in\Diag_i$ updates the array $\cA$ correctly.
But using the induction hypothesis, 
we only need to show $\cA[d]$ is updated from $c(i,d)$ to $c(i+1,d)$.
To this end, suppose first that vertex $(i,d)$ is non-potent. 
Then by Lemma~\ref{lem:imp} we have $c(i+1,d)=c(i,d)$,
and the algorithm indeed does not modify $\cA[d]$. 
Suppose next $(i,d)$ is potent and let us use Lemma~\ref{lem:mismatchPotent}:
If $(i+1,d)$ has a mismatch then $c(i+1,d)=c(i,d)+1$,
and the algorithm indeed increments $\cA[d]$ by $1$;
and if $(i+1,d)$ has a match then $c(i+1,d)=c(i,d)$,
and the algorithm indeed does not change $\cA$ at all. 
\end{proof}

\end{document}